\renewcommand{\makeheadbox}[1]{}
\definecolor{dkgreen}{rgb}{0,0.6,0}
\definecolor{gray}{rgb}{0.5,0.5,0.5}
\definecolor{mauve}{rgb}{0.58,0,0.82}
\setlist{nolistsep}
\newcommand{\Nat}{\ensuremath{\mathbb{N}}}
\newcommand{\false}{\mbox{False}}
\newcommand{\true}{\mbox{True}}
\newcommand{\simplies}{\DOTSB\Longrightarrow}
\newcommand{\exref}[1]{Example~\ref{examp:#1}}
\newcommand{\secref}[1]{Section~\ref{sec:#1}}
\newcommand{\lemref}[1]{Lemma~\ref{lem:#1}}
\newcommand{\figref}[1]{Figure~\ref{Fi:#1}}
\newcommand{\thmref}[1]{Theorem~\ref{thm:#1}}
\newcommand{\gV}[1]{\langle \textit{#1}\rangle}
\newcommand{\bid}{\textbf{id}}
\newcommand{\brcv}{\textbf{input}}
\newcommand{\bfrw}{\textbf{output}}
\newcommand{\bforward}{\bfrw}
\newcommand{\bif}{\textbf{if}}
\newcommand{\bthen}{\textbf{then}}
\newcommand{\binsert}{\textbf{insert}}
\newcommand{\bremove}{\textbf{remove}}
\newcommand{\babort}{\textbf{abort}}
\newcommand{\bin}{\textbf{in}}
\newcommand{\band}{\textbf{and}}
\newcommand{\bor}{\textbf{or}}
\newcommand{\bnot}{\textbf{not}}
\newcommand{\btuple}[1]{\overline{#1}}
\newcommand{\tconnected}{\texttt{connected}}
\newcommand{\tallPorts}{\texttt{allPorts}}
\newcommand{\ttrusted}{\texttt{trusted}}
\newcommand{\tcache}{\texttt{cache}}
\newcommand{\tforbidden}{\texttt{forbidden}}
\newcommand{\N}{\mathbb{N}}
\newcommand{\tnextport}{\texttt{nextport}}
\newcommand{\bwhile}{\textbf{while}}
\newcommand{\bforeach}{\textbf{foreach}}
\newcommand{\breturn}{\textbf{return}}
\newcommand{\Z}{\mathbb{Z}}
\newcommand{\Heading}[1]{\vspace{-0.3cm}\paragraph{{#1}}}
\newcommand{\BeginProof}{\vspace{-0.25cm}\begin{proof}}
\newcommand{\Comment}[1]{}
\newcommand{\Appendix}[1]{}
\newcommand{\AppendixContent}[1]{}
\newcommand{\ie}{{\it i.e.,}\xspace}
\newcommand\eat[1]{}
\newcommand{\Net}{\mathsf{N}}
\newcommand{\PortSet}{\mathsf{Pr}}
\newcommand{\SinglePort}{\mathit{pr}}
\newcommand{\ForwardFunc}{\mathsf{f}}
\newcommand{\ForwardRel}{\mathsf{f_r}}
\newcommand{\allnotes}[1]{}
\renewcommand{\allnotes}[1]{\textit{#1}}
\newcommand{\AbsPackets}{{P}}
\begin{document}

\title{Some Complexity Results for Stateful Network Verification
  \thanks{A preliminary version of this work appeared in \cite{velner2016some}.%\\
  }
}

\author{Kalev Alpernas \and
Aurojit Panda \and
Alexander Rabinovich \and
Mooly Sagiv \and
Scott Shenker \and
Sharon Shoham \and
Yaron Velner
}

\institute{
	K. Alpernas \at
		Tel Aviv University\\
		\email{kalev.alp@gmail.com}
	\and
	A. Panda \at
		NYU
	\and
	A. Rabinovich \at
		Tel Aviv University
	\and
	M. Sagiv \at
		Tel Aviv University
	\and
	S. Shenker \at
		UC Berkeley
	\and
	S. Shoham  \at
		Tel Aviv University
	\and
	Y. Velner \at
		Hebrew University of Jerusalem
}

\date{}

\maketitle

\begin{abstract}
In modern networks, forwarding of packets often depends on the history of
previously transmitted traffic. Such networks contain \emph{stateful}
middleboxes, whose forwarding behaviour depends on a mutable internal state.
Firewalls and load balancers are typical examples of stateful middleboxes.

This work addresses the complexity of verifying safety properties, such as
isolation, in networks with finite-state middleboxes. Unfortunately, we show
that even in the absence of forwarding loops, reasoning about such networks is
undecidable due to interactions between middleboxes connected by unbounded
ordered channels. We therefore abstract away channel ordering. This abstraction
is sound for safety, and makes the problem decidable. Specifically, safety
checking becomes EXPSPACE-complete in the number of hosts and middleboxes in the
network.
To tackle the high complexity, we identify two useful subclasses of finite-state
middleboxes which admit better complexities. The simplest class includes, e.g.,
firewalls and permits polynomial-time verification.
The second class includes, e.g., cache servers and learning switches, and makes
the safety problem coNP-complete.

Finally, we implement a tool for verifying the correctness of stateful networks.

\keywords{Safety Verification \and Stateful Networks \and Middleboxes \and Channel Systems \and Petri Nets \and Complexity Bounds}
\end{abstract}

\section{Introduction}
\label{sec:Intro}

Modern computer networks are extremely complex, leading to many bugs and vulnerabilities which affect our daily life.
Therefore, network verification is an increasingly important topic addressed by the programming languages and networking communities
(e.g., see \cite{conext:uzniarPCVK12,nsdi:CaniniVPKR12,nsdi:KVM12,CCR:KhurshidZCG12,nsdi:KCZCMW13,FMACAD:SNM13,FlowLog,NetKat15}).
Previous network verification tools leverage a simple network forwarding model which renders the datapath {\em immutable};
\ie normal packets going through the network do not change its forwarding behaviour, and the control plane explicitly alters the forwarding state at relatively slow time scales. Thus, invariants can be verified before each control-plane initiated change and these invariants will be enforced until the next such change. While the notion of an immutable datapath supported by an assemblage of routers makes verification tractable, it does not reflect reality. Modern enterprise networks are comprised of roughly $2/3$ routers
\footnote{In this work we do not distinguish between routers and switches,
since they obey similar forwarding models.}
and $1/3$ {\em middleboxes}~\cite{sherry2012making}.
A simple example of a middlebox is a stateful hole-punching firewall which permits traffic from untrusted hosts only after they have received a message from a trusted host. Middleboxes --- such as firewalls, WAN optimizers, transcoders, proxies, load-balancers, intrusion detection systems (IDS) and the like --- are the most common way to insert new functionality in the network datapath, and are commonly used to improve network performance and security. While useful, middleboxes are a common source of errors in the network~\cite{potharaju2013demystifying}, with middleboxes being responsible for over $40\%$ of all major incidents in networks.

This work addresses the problem of verifying safety of networks with middleboxes, referred to as \emph{stateful} networks.
We model such a network as a finite undirected graph with two types of nodes:
(i)~hosts which can send packets,
(ii)~middleboxes which react to packet arrivals and forward modified packets.
Each node in the network has a fixed number of ports, connected by network edges (links).

From a verification perspective, it is possible to view a middlebox as a procedure with local mutable state which is atomically changed every time a packet is transmitted. The local state determines the forwarding behaviour.\footnote{Routers may be considered a degenerate case of middleboxes, whose state is constant and hence their forwarding behaviour does not change over time.} Thus, the problem of network verification amounts to verifying the correctness of a specialized distributed system where each of the middleboxes operates atomically and the order of packet processing by different middleboxes is arbitrary.

Real middleboxes are generally complex software programs implemented in several hundreds of thousands of lines of code.
We follow \cite{panda2014verifying,SNAPL:PandaASSS15} in assuming that we are provided with middlebox models in the form of \emph{finite-state transducers}. In our experience one can naturally model the behaviour of most middleboxes this way.
For every incoming packet, the transducer uses the packet header and the local state to compute the forwarding behaviour (output) and to
update its state for future packets.
The  transducer can be non-deterministic to allow modelling of middleboxes like load-balancers whose behaviour depends not just on the state, but also on a random number source.
We symbolically represent the local state of each middlebox by a fixed set of relations on finite elements, each with a fixed arity.

\paragraph{The Verification Problem}
We define network safety by means of avoiding ``bad'' middlebox states (e.g., states from which a middlebox forwards a packet in a way that violates a network policy).
Given a set of bad middlebox states, we are interested in showing that for all packet scenarios the bad states cannot be reached.
This problem is hard since the number of packets is unbounded
and the states of one middlebox can affect another via transmitted packets.

\subsection{What is Decidable About Middlebox Verification}
\label{sec:Decidability}
In \secref{undecidability}, we prove that for general stateful networks the
verification problem is undecidable. This result relies on the observation that
packet histories can be used to count, similarly to results in model checking of
infinite ordered communication channels~\cite{brand1983communicating}.
Simulating counting is immediate when the network configuration admits forwarding loops. However, such
configurations are usually avoided in real networks. In order to address
realistic networks, we show that the verification problem is undecidable even
for networks without forwarding loops.

In order to obtain decidability, we introduce an abstract semantics of networks where the order of packet processing on each channel (connecting two middleboxes or a middlebox and a host) is arbitrary, rather than first-in, first-out (FIFO).
Thus, middlebox inputs are multisets of packets which can be processed in any order.
This abstraction is \emph{conservative}, i.e., whenever we verify that the network does not reach a bad state, it is indeed the case.
However, the verification may fail even in correct networks, resulting in false alarms.
Since packets are atomically processed, we note that network designers can impose ordering even in this abstract model
by sending acknowledgments for received packets, and dropping out-of-order packets.

In fact, the abstraction of the packet order over channels closely corresponds to assumptions made by network engineers: since packets in modern networks can traverse multiple paths,
be buffered, or be chosen for more complex analysis, network software cannot assume that packets sent from a source to a server are received by a server in order.
Network protocols therefore commonly build on TCP,
a protocol which uses acknowledgments and other mechanisms to ensure that servers receive packets in order.
Since packet ordering is enforced by causality (by sending acknowledgments) and by software on the receiving end,
rather than by the network semantics, correctness of such networks typically does not rely on the order of packet processing.
Therefore we can successfully verify a majority of network applications despite our abstraction. 

\subsection{Complexity of Stateful Verification}
\label{sec:Complexity}
In \secref{complexityLowerBounds}, we show that the problem of network
verification when assuming a nondeterministic order of packet processing is
complete for exponential space, i.e., it is decidable, and in the worst case,
the decision procedure can take exponential space in terms of hosts and
middleboxes. This is proved by showing that the network safety problem is
equivalent to the coverability problem of Petri nets, which is known to be
EXPSPACE-complete~\cite{rackoff1978covering,lipton1976reachability}. This result
is not surprising, and resembles previous work on message passing systems with
unordered communication channels~\cite{lipton1976reachability,sen2006model}.

\begin{figure}
\centering
\includegraphics[width=0.7\textwidth]{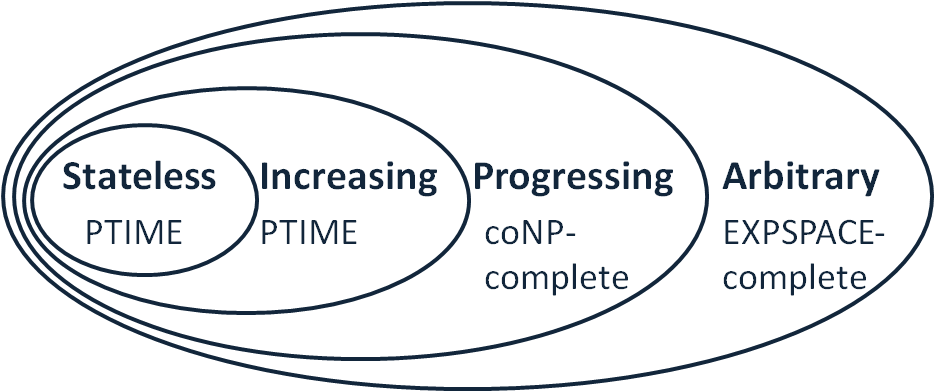}
\caption{Middlebox hierarchy with worst-case time complexity for each category.}
\label{Fi:Complexity}
\end{figure}

Since the problem is complete, it is impossible to improve this upper-bound without further assumptions.
Therefore, we consider limited cases of middleboxes permitting more efficient verification procedures, as shown in \figref{Complexity}.
We identify four classes of middleboxes with increasing expressive power and verification complexity:
(i)~\emph{stateless} middleboxes whose forwarding behaviour is constant over time,
(ii)~\emph{increasing} middleboxes whose forwarding behaviour increases over time, i.e., as the history of packets is extended, the set of forwarded packets may never decrease,
(iii)~\emph{progressing} middleboxes whose forwarding behaviour cannot regress to a previous state, i.e., the transition relation of the transducer does not include cycles besides self-cycles (or cycles of ``equivalent states''), and hence the forwarding behavior stabilizes after some 
time,
and
(iv)~\emph{arbitrary} middleboxes without any restriction.
For example, NATs, Switches and simple ACL-based firewalls are stateless; hole-punching stateful firewalls are increasing -- as time proceeds more hosts become ``trusted'' and hence more packets are being forwarded rather than dropped; and learning-switches and cache-proxies are progressing and not increasing -- information that is learnt is never unlearned, but the forwarding behaviour may decrease as a result of learning (e.g., a learning switch would forward a packet to the right port rather than broadcasting it).

For stateless and increasing middleboxes, we prove that any packet which arrives once can arrive any number of times, leading to a polynomial-time verification algorithm, using a fixed-point computation.
We note that efficient near linear-time algorithms for stateless verification are known (e.g., see~\cite{CCR:KhurshidZCG12}).
Our result generalizes these results to increasing networks and is in line with the recent work in~\cite{nsdi/FogelFPWGMM15,nsdi/LopesBGJV15}.

For progressing middleboxes, we show that verification is coNP-complete.
The main insight is that if a bad state is reachable then there exists a small (polynomial) input scenario leading
to a bad state.
This means that tools like SAT solvers which are frequently used for verification can be used to verify large networks in many cases
but it also means that we cannot hope for a general efficient solution unless P=NP.

Finally, we note that unlike the known results in stateless networks, the absence of forwarding loops does not improve the upper bound, i.e.,
we show that our lower bounds also hold for networks without forwarding loops.

\paragraph{Packet Space Assumption}
Previous works in stateless verification~\cite{nsdi:KVM12,NetKat15} assume that packet headers have $n$-bits, simulating realistic
packet headers which can be large in practice.
This makes the complexity of checking safety of stateless networks PSPACE-hard.
Our model avoids packet space explosion by only supporting three fields: source, destination, and packet tags. We make this simplification since our work primarily focuses on middlebox policies (rather than routing). 
As demonstrated in \secref{Examples}, middlebox policies are commonly specified in terms of the source and destination hosts of a packet and the network port (service) being accessed. For example, at the application level, firewalls may decide how to handle a packet according to a small set of application types (e.g., skype, ssh, etc.).
Source, destination and packet tag are thus sufficient for reasoning about safety with respect to these policies. This simplification is also supported by recent works (e.g.~\cite{CCR:KhurshidZCG12}) which suggest that in practice the forwarding behaviour depends only on a small set of bits.

\paragraph{Lossless Channels}
Previous works on infinite ordered communication channels have introduced \emph{lossy channel systems}~\cite{abdulla1993verifying} as an abstraction of ordered communication that recovers decidability. Lossy channel systems allow messages to be lost in transit, making the reachability problem decidable, but with a non-elementary lower bound on time complexity. In our model, packets cannot be lost. On the other hand, the order of packets arrival becomes nondeterministic. With this abstraction, we manage to obtain elementary time complexity for verification.

\paragraph{Initial Experience}

We implemented a tool which accepts symbolic representations of middleboxes and
a network configuration and verifies safety. For increasing (and stateless) networks, the tool generates a Datalog program and a query which holds iff a bad state is reachable.
Then, the query is evaluated using existing Datalog engines~\cite{LogicBlox}.

For arbitrary networks (and for progressing networks), the tool generates a petri-net and a coverability property which holds iff the
network reaches a bad state.
To verify the coverability property we use LOLA~\cite{schmidt2000lola,TRLola2} --- a Petri-Net model checker.

\subsection{Main Results and Outline}
This work addresses the complexity of verifying the safety of stateful networks. It makes the following main contributions:

\begin{itemize}
\item We introduce a formal model for stateful networks with finite-state
middleboxes, inspired by communicating finite state
machines~\cite{brand1983communicating} (\secref{FormalModel}).
We further propose a symbolic representation of middleboxes, resulting in some cases in an exponentially more succinct
description compared to an explicit representation as a finite state machine
(\secref{mboxes}).
We use the formal model to show that verifying safety properties in stateful networks is undecidable,
even when the network configuration does not admit forwarding loops
(\secref{undecidability}).
\item We adopt an unordered abstraction inspired by
\cite{lipton1976reachability,sen2006model} to define a conservative abstraction
of networks in which packets can be processed out of order (\secref{unordered}). Under this
abstraction, the safety problem of stateful networks becomes decidable, but
EXPSPACE-complete. Interestingly, we show that for a certain class of networks (namely, increasing networks) this abstraction is in fact precise for safety (\secref{MboxClass}).
\item We identify four classes to which we classify networks, characterized by the forwarding
behaviours of their middleboxes: stateless, increasing, progressing and arbitrary (\secref{MboxClass}).
We characterize these classes both semantically and syntactically via restrictions on the symbolic representation of the middleboxes (\secref{classes-symbolic}),
and demonstrate that these classes capture real-world middleboxes (\secref{Examples}).
\item We show that different network classes admit better complexity results than the EXPSPACE complexity of arbitrary networks: PTIME for stateless and increasing networks (\secref{upper-increasing}), and coNP for progressing networks (\secref{AcyclicInNP}).
The upper bounds are
made more realistic by stating them in terms of a symbolic representation of
middleboxes, i.e., the middlebox code, rather than the explicit state space.
We match the upper bounds with lower bounds
(\secref{complexityLowerBounds}), which are obtained with standard middleboxes,
and thus reflect the complexity of realistic networks.

\item We present initial empirical results using Petri nets and Datalog engines
to verify safety of networks (\secref{benchmark}).
\end{itemize}
Finally, we discuss related work and conclude in \secref{Related}.
\section{A Formal Model for Stateful Networks}\label{sec:FormalModel}
In this section, we present a formal model of networks with
stateful middleboxes. We define a concrete network semantics,
and present the \emph{safety} verification problem, as well as the special case of \emph{isolation}.
Finally, we show that the safety verification problem is undecidable under the concrete semantics.

A \emph{network} $\Net$ is a finite undirected graph of \emph{hosts} and \emph{middleboxes}, equipped with a \emph{packet domain}.
Formally, $\Net = (H\cup M, E, P)$, where $H$ is a finite set of \emph{hosts}, $M$ is a finite set of \emph{middleboxes}, $E \subseteq \{\{u,v\}\mid u,v\in H\cup M\}$ is the set of (undirected) edges and $P$ is a set of packets.

\paragraph{Packets}
In real networks, a packet
consists of a \emph{packet header} and a \emph{payload}.
The packet header contains a source and a destination host ids and additional arbitrary stream of control bits.
The payload is the content of the packet and may consist of any arbitrary sequence of bits.
The cardinality of the set of packets is determined by the possible range of control bits and the possible space of payloads, and need not be finite.

In this work, $P$ is a set of \emph{abstract packets}.
An abstract packet $p \in P$ consists of a header only, in the form of a triple $(s,d,t)$, where $s,d\in H$ are the source and destination hosts (respectively) and $t$ is a \emph{packet tag} that ranges over a finite domain $T$.
Intuitively, $T$ stands for an abstract set of services or security policies.
Therefore, $\AbsPackets = H\times H\times T$ is a finite set.

Middlebox behaviour in our model is defined with respect to abstract packets and is oblivious of the underlying concrete packets.

Each host $h \in H$ is associated with a set of packets that it can send, denoted $P_h \subseteq P$.

\subsection{Stateful Middleboxes} \label{sec:mboxes}

A \emph{middlebox} $m \in M$ in a network $\Net$ has a set of \emph{ports} $\PortSet$ and a \emph{forwarding transducer} $F$. The set of \emph{ports} $\PortSet$ consists of all the adjacent edges of $m$ in the network $\Net$,

The forwarding transducer of a middlebox is a tuple $F = (\Sigma, \Gamma, Q_m, q^0_m, \delta_m)$ where:

\begin{itemize}
    \item $\Sigma = P \times \PortSet$ is the input alphabet in which each letter consists of a packet and an input port,
    \item $\Gamma = 2^{P \times \PortSet}$ is the output alphabet in which each letter describes (possibly empty) sets of packets sent over the different ports,
 	\item $Q_m$ is a possibly infinite set of states,
 	\item $q^0_m \in Q_m$ is the initial state, and
 	\item $\delta_m \subseteq Q_m \times \Sigma \times {\Gamma \times Q_m}$ is the transition relation, which describes both the output and the change of state in response to an input.
\end{itemize}

Note that the alphabet $\Sigma$ is finite (since abstract packets are considered).

We often refer to the transition relation $\delta_m$ as a function $\delta_m: Q_m \times \Sigma \to  2^{\Gamma \times Q_m}$,
where $\delta_m(q,(p,\SinglePort)) = \{(o, q') \mid (q, (p,\SinglePort), o, q') \in \delta_m \}$.
If $\delta_m(q, (p,\SinglePort)) = \emptyset$, we say that $\delta_m$ is undefined for the packet $p$ arriving on port $\SinglePort$ in state $q$.

We extend $\delta_m$ to sequences $h \in (P\times \PortSet)^*$  in the natural way: $\delta_m(q,\epsilon) = \{(\epsilon, q)\}$ and $\delta_m(q,h\cdot (p,\SinglePort)) = \{ (\gamma_i \cdot o', q') \mid \exists q_i \in Q_m.\,  (\gamma_i,q_i) \in \delta_m(q,h) \wedge (o',q') \in \delta_m(q_i,(p,\SinglePort))\}$.
The language of a state $q \in Q_m$ is $L(q) =  \{(h,\gamma) \in  (P\times \PortSet)^* \times (2^{P\times \PortSet})^* \mid \exists q' \in Q_m.\ (\gamma,q') \in \delta_m(q,h)\}$.
The language of $F$, denoted $L(F)$, is the language of $q^0_m$.
We also define the set of \emph{histories} leading to $q \in Q_m$ as $h(q) = \{h \in (P\times \PortSet)^* \mid \exists \gamma \in (2^{P\times \PortSet})^* .\ (\gamma,q) \in \delta_m(q^0_m,h) \}$.

$F$ is deterministic if for every $q \in Q_m$ and every $(p, \SinglePort) \in \Sigma$, $|\delta_m(q,(p,\SinglePort))|\leq 1$. If $F$ is deterministic, then every history leads to at most one state and output, in which case $F$ defines a (possibly partial) \emph{forwarding function} $\ForwardFunc : (P\times \PortSet)^* \times (P \times \PortSet) \to 2^{P\times \PortSet}$ where
$\ForwardFunc(h, (p,\SinglePort)) = o$ for the  (unique) output $o \in 2^{P\times \PortSet}$ such that $(h \cdot (p,\SinglePort), \gamma \cdot o) \in L(F)$ for some $\gamma \in (2^{P\times \PortSet})^*$.
If no such output $o$ exists, then $\ForwardFunc$ is undefined.
The forwarding function $\ForwardFunc$ defines the (possibly empty) set of output packets (paired with output ports) that $m$ will send to its neighbors
following a history $h$ of packets that $m$ received in the past and input packet $p$ arriving on input port $pr$.
We note that $\ForwardFunc(h,(p,\SinglePort))=\emptyset$ should not be confused with the case where $\ForwardFunc(h,(p,\SinglePort))$ is undefined.

If $F$ is nondeterministic, a \emph{forwarding relation} $\ForwardRel \subseteq(P\times \PortSet)^* \times (P \times \PortSet) \times 2^{P\times \PortSet}$ is defined in a similar way.

Note that every forwarding function $\ForwardFunc$ can be defined by an infinite-state deterministic transducer: $Q_m$ will include a state for every possible history, with $\epsilon$ as the initial state.
The transition relation $\delta_m$ will map a state and an input packet to the set of output packets as defined by $\ForwardFunc$, and will change the state by appending the packet to the history.

\subsubsection{Finite-State Middleboxes}
Arbitrary middlebox functionality, defined via infinite-state transducers, makes middleboxes Turing-complete, and hence impossible to analyze. To make the analysis tractable, we focus on abstract middleboxes, whose forwarding behaviour is defined by \emph{finite-state} transducers. Nondeterminsm can then be used to overapproximate the behaviour of a concrete, possibly infinite-state, middlebox via a finite-state abstract middlebox, allowing a sound abstraction w.r.t. safety.

In the sequel, unless explicitly stated otherwise, we consider finite-state middleboxes. We identify a middlebox with its forwarding relation and the transducer that implements it, and use $m$ to denote each of them.

\subsubsection{Symbolic Representation of Middleboxes} \label{sec:symbolic-rep}

To allow a more succinct representation, we use a symbolic representation of finite-state middleboxes, where
a state of a middlebox $m$ is described by the valuation of a finite set of relations $R_1,\ldots, R_k$ defined over finite domains (e.g., hosts). The transition relation $\delta_m$ is also described symbolically using (nondeterministic) update operations of the relations and output.
The syntax for the symbolic representation is described in \figref{SymbSyntax}.

Technically, we describe $\delta_m$ on an input packet $(\mathit{src},\mathit{dst},\mathit{tag})$  arriving from port $\mathit{prt}$ by a sequence of loop-free guarded commands, which we call a \emph{guarded command block}. Each guarded command in the block consists of a command and a guard, which determines whether the command should be executed. Guards are Boolean expressions over \emph{relation membership predicates} of the form $\overline{e}$ \bin~$R$ (where $\overline{e}=(e_1,\dots,e_n)$ for an $n$-ary relation $R$) and \emph{element equalities} $e_1 = e_2$. Each $e_i$ is either a constant or a variable that refers to packet fields: \emph{src}, \emph{dst}, \emph{tag}, \emph{prt}. Commands are of the form:
\begin{enumerate} [label=(\emph{\roman*})]
\item $\bfrw$ set of tuples,
\item $\babort$,
\item $\binsert$ tuple $\overline{e}$ to relation $R$,
\item $\bremove$ tuple $\overline{e}$ from relation $R$,
\item \emph{sequential composition}, and
\item \emph{guarded command block}.
\end{enumerate}

The semantics of $\binsert$, $\bremove$ and sequential composition is straightforward. An $\bfrw$ command produces output. In case more than one $\bfrw$ is executed, e.g., in the case of a sequential composition of $\bfrw$ commands, the output of the execution is the union of all $\bfrw$ commands. Blocks of guarded commands are executed non deterministically. That is, all the guards in the block are evaluated, and one command whose guard is evaluated to \emph{true} is executed. If no guard evaluates to \emph{true} then the empty set is produced as output, and no relation changes are made.
The $\babort$ command is used to signify that $\delta_m$ is not defined on the given input.

A symbolic middlebox program represents a finite-state middlebox where each state represents an interpretation (state) of all the relations, and the transition relation
is defined by the main guarded command block in the natural way. Note that since all the relations in the program are over finite domains, the set of states is indeed finite.

\begin{figure}
\[
\begin{array}{|lclr|}
\hline\hline

\gV{mbox} & ::= & \brcv(src,dst,tag,prt) : \gV{gcmd} ~ [\gV{gcmd}]^* &\\
\hline

\gV{gcmd} & ::= & \gV{grd} \Rightarrow \gV{cmd} & \mbox{guarded command}\\
\hline

\gV{cmd} & ::= & \bfrw ~\{\btuple{\gV{exp}} ~[,~\btuple{\gV{exp}}]^*\}  & \mbox{output a packet}\\
    & | & \babort & \mbox{terminate-abnormally}\\
    & | & \bid . \binsert ~\btuple{\gV{exp}} & \mbox{add tuple to relation}~\bid\\
    & | & \bid . \bremove ~\btuple{\gV{exp}} & \mbox{remove tuple from}~\bid \\
    & | & \gV{cmd} ; \gV{cmd} & \mbox{sequence of commands}\\
    & | & \gV{gcmd} ~ [\gV{gcmd}]^* & \mbox{guarded command block}\\
\hline

\gV{exp} & ::= & src~|~dst~|~tag~|~prt & \mbox{variable}\\
    & | & \textbf{constant} & \mbox{constant}\\
\hline

\gV{grd} & ::= & \gV{grd} ~\band ~ \gV{grd} &\\
        & | & \gV{grd} ~ \bor ~ \gV{grd} &\\
        & | & \bnot \gV{atom}&\\
        & | & \gV{atom}&\\
\hline

\gV{atom} & ::= & \gV{exp} =  \gV{exp} & \mbox{equality}\\
        & | & \btuple{\gV{exp}} ~ \bin ~ \bid & \mbox{membership test}\\

\hline\hline
\end{array}
\]
\caption{\label{Fi:SymbSyntax}%
A simple language for representing finite state middleboxes.
$\btuple{\gV{exp}}$ denotes a vector of $\gV{exp}$ separated by commas.
}
\end{figure}

\begin{lemma}\label{lem:SymbToFiniteState}
Every finite-state middlebox has a symbolic representation.
\end{lemma}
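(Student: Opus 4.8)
The plan is to show that any finite-state middlebox transducer $F = (\Sigma, \Gamma, Q_m, q^0_m, \delta_m)$ can be encoded by a symbolic program in the language of \figref{SymbSyntax}. The key observation is that since $Q_m$ is finite, we can name its states and store ``the current state'' in the valuation of the symbolic relations, then let the main guarded command block simulate $\delta_m$ one transition at a time.

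\begin{proof}[Proof sketch]
Let $F = (\Sigma, \Gamma, Q_m, q^0_m, \delta_m)$ be a finite-state transducer with $Q_m = \{q^0_m, q_1, \ldots, q_{N-1}\}$. We build a symbolic middlebox as follows. First I would introduce, for each state $q_i \in Q_m$, a nullary relation $R_{q_i}$ (a Boolean flag); the intended invariant is that exactly one $R_{q_i}$ holds at any time, and it records the current state. The initial valuation has $R_{q^0_m}$ true and all others false, so the initial symbolic state corresponds to $q^0_m$. (If the language requires relations of positive arity, one can equivalently use a single unary relation $\mathit{State}$ over the finite domain $\{0,\ldots,N-1\}$, encoding constants $0,\ldots,N-1$; the argument is identical.)

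Next I would build the main guarded command block. It is a finite block with one guarded command per triple $(q_i, (p,\SinglePort), (o, q_j))$ in the (finite) relation $\delta_m$. The guard tests that $R_{q_i}$ holds \band{} that the incoming packet fields $(\mathit{src},\mathit{dst},\mathit{tag},\mathit{prt})$ match the letter $(p,\SinglePort) \in \Sigma$ — this is expressible since $p = (s,d,t)$ is a triple of constants from the finite domains $H$ and $T$, and $\SinglePort$ is a constant port, so the guard is a conjunction of element equalities. The corresponding command is the sequential composition: first the \bfrw{} command whose argument is the finite set of tuples described by $o \in 2^{P\times\PortSet}$ (again a set of constant tuples), then $R_{q_i}.\bremove$ followed by $R_{q_j}.\binsert$ to update the stored state. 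For inputs $(p,\SinglePort)$ on which $\delta_m$ is undefined in state $q_i$ (i.e. $\delta_m(q_i,(p,\SinglePort)) = \emptyset$), I would add a guarded command with guard $R_{q_i}$ \band{} (match on $(p,\SinglePort)$) and command \babort, matching the convention that \babort{} signals $\delta_m$ is undefined. Nondeterminism of $\delta_m$ (several $(o,q_j)$ for the same $(q_i,(p,\SinglePort))$) is captured precisely because a guarded command block is executed by nondeterministically picking one command whose guard is true.

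Finally I would verify by induction on the length of the input history $h$ that the symbolic middlebox and $F$ have the same set of reachable (state, output) pairs: the symbolic state reachable on $h$ has $R_{q}$ true iff $(\gamma, q)\in\delta_m(q^0_m,h)$ for the corresponding output sequence $\gamma$, and the output produced on the next packet agrees with $\delta_m$. The base case is the initial state; the inductive step is exactly one application of the main block, which by construction mirrors one application of $\delta_m$. This establishes that the symbolic program represents $F$. I don't expect a genuine obstacle here — the content is entirely a bookkeeping encoding — but the one point requiring a little care is making sure the guards can \emph{exactly} pin down a letter of $\Sigma$ and the stored state simultaneously, and that the ``no guard is true'' default behaviour (empty output, no change) never fires on a reachable configuration with a legal input, which is why the explicit \babort{} commands for undefined transitions are needed.
\end{proof}
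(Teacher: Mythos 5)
Your proposal is correct and matches the paper's proof essentially exactly: the paper also stores the current state in a single unary relation $R$ over constants $q_0,\dots,q_n$ (your parenthetical alternative), uses one guarded command per transition whose guard tests state membership and packet equality, updates $R$ by a remove/insert pair, outputs the tuples of $o$, and uses \babort{} for undefined transitions. The only differences are cosmetic (the order of the three commands in the sequential composition, and your added explicit induction for correctness, which the paper leaves implicit).
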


\begin{proof}
Let $Q = \{q_0,\dots,q_n\}$ be the finite set of states of $m$, and $q_0$ be the initial state.
We construct a symbolic middlebox program $A$ over the constants $q_0,\dots,q_n$ with a single unary relation $R$. Initially, $R = \{q_0\}$.
Each transition $(q',o) \in \delta_m (q,(p,pr))$ of $m$ is represented by a guarded command in the main guarded command block. The guard checks whether $q \in R$ and whether the packet is $(p,pr)$. The command is a sequential composition of three commands: The first command removes the (only) current state $q$ from $R$. The second inserts the new state $q'$ and the third outputs the tuples in $o$ according to $\delta_m$. If $\delta_m(q,(p,pr)) = \emptyset$, the $\babort$ command is used.
\qed
\end{proof}

\begin{remark}
We note that the construction of a symbolic representation in \lemref{SymbToFiniteState} results in a linear blowup of the representation, whereas the construction of the explict-state middlebox represented by a symbolic representation potentially results in an exponential blowup, suggesting that the symbolic representation is at least as succinct and is potentially exponentially more succinct than the explicit state representation.
\end{remark}

\begin{example} \label{Ex:firewall} \label{Ex:proxy}
\figref{Firewall} contains a symbolic representation of a hole-punching Firewall which uses a unary relation \ttrusted.
It assumes that port $1$ connects hosts inside a private organization to the firewall and that port $2$ connects public hosts.
By default, messages from public hosts are considered untrusted and are dropped.
\ttrusted\ is a unary relation which
stores public hosts that become trusted once they receive
a packet from private hosts.

\begin{figure}[t]
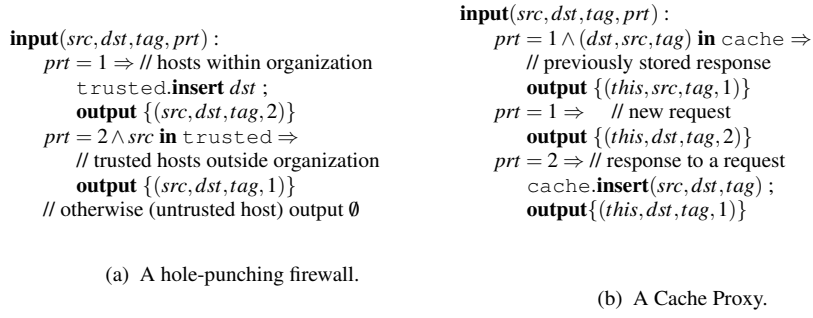

\begin{subfigure}{0.48\textwidth}
\begin{tabbing}
mm\=mm\=mm\=mm\=mm\=\kill
$\brcv(src, dst, tag, prt):$ \\
\>$ prt = 1$ $\Rightarrow$  // hosts within organization\\
\>\>$\ttrusted.\binsert~ dst$ ;\\
\>\>$\bfrw~\{(src, dst, tag,2)\}$\\
\>$prt = 2  \wedge src~\bin~\ttrusted$ $\Rightarrow$ \\
\>\> // trusted hosts outside  organization\\
\>\>$\bfrw~\{(src, dst, tag ,1)\}$\\
\> // otherwise (untrusted host) output $\emptyset$
\end{tabbing}
\caption{\label{Fi:Firewall} A hole-punching firewall.}
\end{subfigure}
\begin{subfigure}{0.48\textwidth}
\begin{tabbing}
mm\=mm\=mm\=mm\=mm\=\kill
 $\brcv(src, dst, tag, prt):$\\
\>$prt = 1 \wedge (dst, src, tag) ~\bin ~\tcache$ $\Rightarrow$ \\
\>\>// previously stored response\\
\>\>$\bfrw ~\{(this, src, tag, 1)\}$\\
\>$prt = 1$ $\Rightarrow$ \quad // new request\\
\>\>$\bfrw~\{(this, dst, tag, 2)\}$\\
\> $prt = 2 \Rightarrow$ // response to a request\\
\>\> $\tcache.\binsert (src, dst, tag)$ ;\\
\>\> $\bfrw \{(this,dst,tag, 1)\}$ \\
\end{tabbing}
\caption{\label{Fi:Proxy} A Cache Proxy.}
\end{subfigure}
\caption{Symbolic representation of middleboxes.}
\end{figure}

\figref{Proxy} contains a simplified, nondeterministic, version of a Proxy server (or cache server).
A proxy stores copies of documents (packet payloads) that passed through it. Subsequent requests for those documents are provided by the proxy, rather than being forwarded.
Technically, the middlebox has two ports, namely, a request port from which requests are received and a response port from which responses arrive.
Our modelling abstracts away the packet payloads and keeps only their types. Consequently we use nondeterminism to also account for different requests with the same type.
The internal relation $\tcache$ stores responses for packet types.
\end{example}

\subsection{Concrete (FIFO) Network Semantics}
The concrete semantics of a network $\Net = (H\cup M, E, P)$ is given by a transition system defined over a set of configurations.
In order to define the semantics we first need to define the notion of \emph{channels} which capture the transmission of packets in the network.
Formally, each (undirected) edge $\{u,v\}\in E$ in the network induces two directed \emph{channels}: $(u,v)$ and $(v,u)$. The channel $(v,u)$ is an \emph{ingress channel} of $u$, as well as an \emph{egress channel} of $v$. It consists of the sequence of packets that were sent from $v$ to $u$ and were not yet received by $u$ (and similarly for the channel $(u,v)$).
The capacity of  channels is unbounded, that is, the sequence of packets may be arbitrarily long.
In the concrete semantics, the channels admit a first-in-first-out (FIFO) behavior: Whenever a middlebox forwards a packet $p$ from a certain port it removes it from the head of the corresponding ingress channel and adds the generated packets to the tails of the corresponding egress channels (note that the mapping between channels and middlebox ports is unique).

\paragraph{Configurations and Runs}
A \emph{configuration} of a network consists of the content of each channel and the state of every middlebox.
Channels have an unbounded capacity, resulting in an infinite number of configurations even for finite state middleboxes.
The \emph{initial configuration} of a network consists of empty channels and initial states for all middleboxes.
A configuration $c_2$ is a \emph{successor} of configuration $c_1$ if it can be obtained by either: (i)~some host $h$ sending a packet $p\in P_h$ to a neighbor, thus appending the packet $p$ to the corresponding channel; or (ii)~some middlebox $m$ processing a packet $p$ from the head of one of its ingress channels, changing its state to $q'$ and appending output $o$ to its egress channels where $q',o$ are defined in accordance with $\delta_m$, i.e., if $q$ is the current state of $m$ and $\SinglePort$ is the port associated with the ingress channel then $(o,q')\in \delta_m(q,(p,\SinglePort))$.
This model corresponds to asynchronous networks with non-deterministic event order.

A \emph{run of a network from configuration $c_0$} is a  sequence of configurations $c_0,c_1,c_2,\dots$ such that  $c_{i+1}$ is a successor configuration of $c_{i}$. A \emph{run} is a run from the initial configuration. The set of \emph{reachable configurations from a configuration $c_i$} is the set of all configurations that reside on a run from $c_i$.
The set of \emph{reachable configurations} of a network is the set of reachable configurations from the initial configuration.

\subsection{Safety Verification of Stateful Networks} \label{sec:safety-properties}

In this section we define the \emph{safety} verification problem in stateful networks, as well as the special case of \emph{isolation}.

To describe safety properties, we augment middleboxes with a special \emph{abort state}  that is reached whenever $\delta_m(q,(p,pr)) = \emptyset$, i.e., the forwarding behaviour is undefined (not to be confused with the case where $(\emptyset, q') \in \delta_m(q, (p,\SinglePort))$ for some $q' \in Q_m$).
This lets middleboxes function as ``monitors'' for safety properties.
If $\delta_m(q,(p,pr)) = \emptyset$, and $h \in h(q)$, i.e., $h$ is a history leading to $q$, we say that $m$ \emph{aborts} on $h \cdot (p,\SinglePort)$ (and every extension thereof).
In the symbolic representation, this is captured by the $\babort$ command.

We define \emph{abort configurations} as network configurations where at least one middlebox is in an abort state.

\paragraph{Safety}
The input to the \emph{safety problem} consists of a network $\Net$. 
The output is $\true$ if no abort configuration is reachable in $\Net$, and $\false$ otherwise.

\paragraph{Isolation and Reachability}
An important example of a safety property is isolation.
Informally, isolation is the requirement that certain packets (e.g., packets from a certain host) never reach some host.
In the \emph{isolation problem}, the input is a network $\Net$, a set of hosts $H_i \subseteq H$ and a forbidden set of packets $P_i\subseteq P$. The output is $\true$ if there is no run of $\Net$ in which a host from $H_i$ receives a packet from $P_i$, and $\false$ otherwise.

The isolation problem can be formulated as a safety problem by introducing an \emph{isolation middlebox} $m_{h_i}$ for every host $h_i \in H_i$. The role of $m_{h_i}$ is to monitor all traffic to $h_i$, and abort if a forbidden packet $p \in P_i$ arrives.
All other packets are forwarded to $h_i$. (\figref{Isolation} shows a symbolic representation of such a middlebox connected to $h_i$ on port 0 and to the rest of the network on port 1.)
Clearly, isolation holds if and only if the resulting network is safe.

The \emph{Reachability problem} is the dual of the isolation problem (i.e., the output is flipped).

\begin{figure}
\begin{tabbing}
mm\=mm\=mm\=mm\=mm\=\kill
\> $\brcv(src, dst, tag, prt):$\\
\>\>$prt=0$ $\Rightarrow$ $\bfrw~\{(src,dst, tag,1)\}$\\
\>\>$prt=1 \land (src,dst,tag) ~\bin ~\tforbidden$ $\Rightarrow$ $\babort$\\
\>\>$prt=1 \land \neg((src,dst,tag) ~\bin ~\tforbidden)$ $\Rightarrow$ $\bfrw~\{(src,dst, tag,0)\}$\\
\end{tabbing}
\caption{\label{Fi:Isolation} Isolation checking middlebox.}
\end{figure}

\begin{example} \label{examp:topologies}
\figref{MidTop} shows several examples of interesting middlebox topologies for verification.
In all of the topologies shown we want to verify a variant of the isolation property.
In \figref{benchmark:lbids} we want to verify that $A$, a host, cannot send more than a fixed number of packets to $B$. Here $r_1$ and $r_2$ are rate limiters, \ie they count the number of packets they have seen going from one host to the other, and $lb$ is a load balancer that evenly spreads packets from $A$ along both paths (to minimize the load on any one path). In \figref{benchmark:fwproxy} we want to ensure that host $A$ cannot access data that originates in $S_1$, but should be allowed to access data from $S_2$, where $f$ is a firewall and $c$ is a proxy (cache) server. Finally in \figref{benchmark:mdc} we show a multi-tenant datacenter (e.g., Amazon EC2), where many independent tenants insert rules into firewalls ($f_1$ and $f_2$) and we want to ensure that the overall behaviour of these rules is correct. For example, we would like to ensure that $pri_1$ cannot communicate with $pri_2$, and $pub_2$ communicates with $pri_1$ only if $pri_1$ initiates the connection.
\end{example}

\begin{figure}[t]
    \begin{subfigure}[b]{0.3\textwidth}
        \centering
        \includegraphics[width=0.8\textwidth]{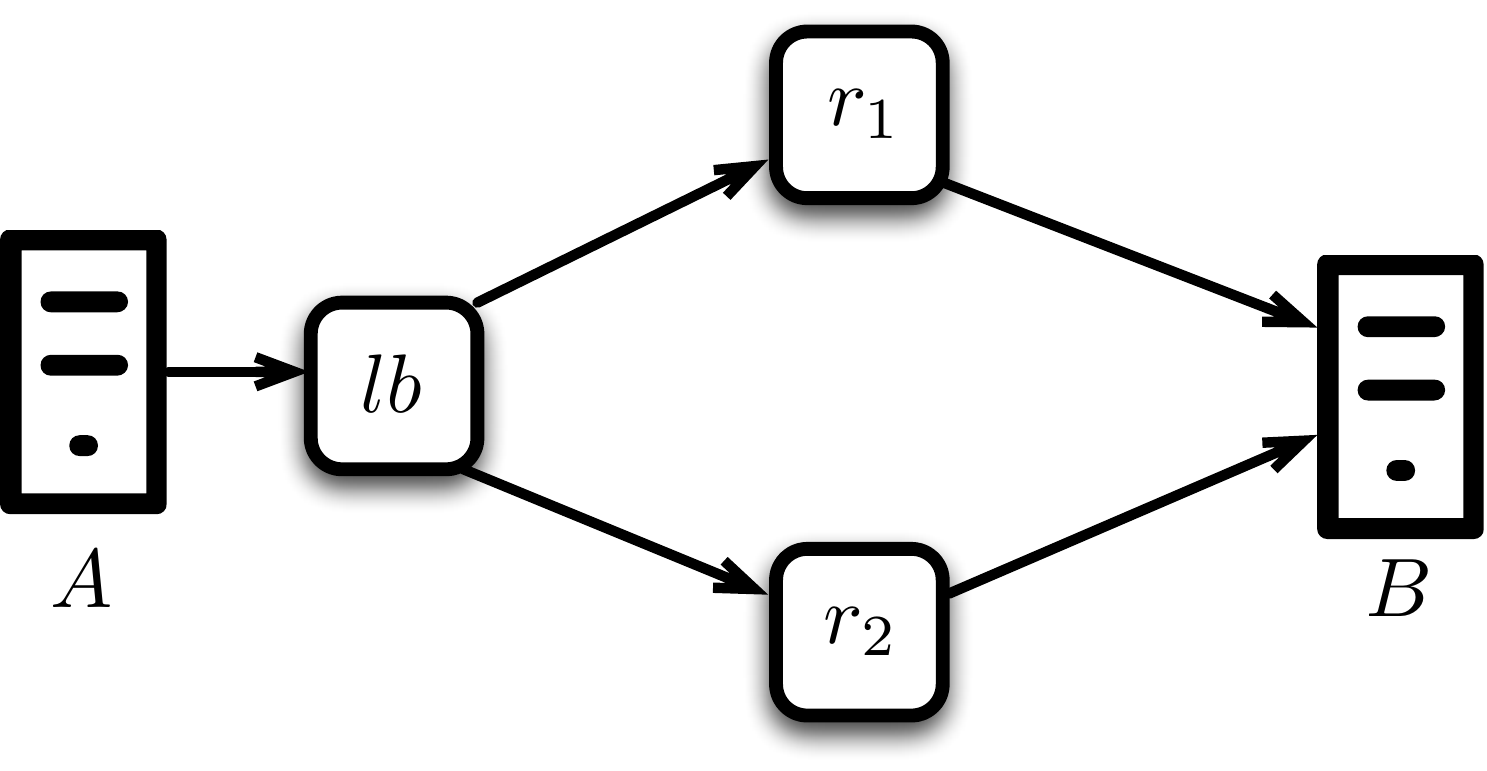}
        \caption{Load Balancer and Rate Limiter}
        \label{Fi:benchmark:lbids}
    \end{subfigure}
    \hfill
    \begin{subfigure}[b]{0.3\textwidth}
        \centering
        \includegraphics[width=0.8\textwidth]{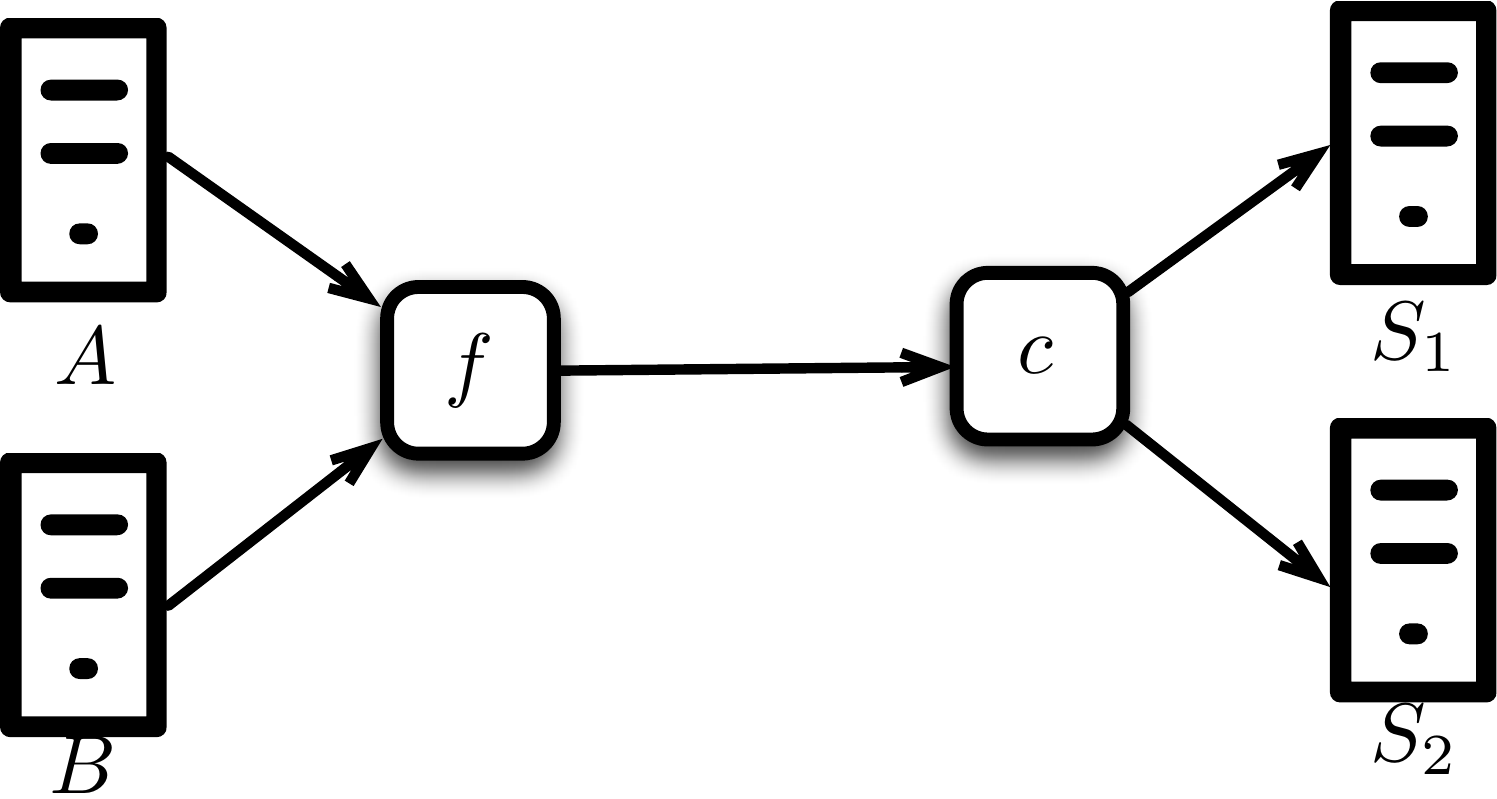}
        \caption{Firewall and Proxy}
        \label{Fi:benchmark:fwproxy}
    \end{subfigure}
    \hfill
    \begin{subfigure}[b]{0.3\textwidth}
        \centering
        \includegraphics[width=0.8\textwidth]{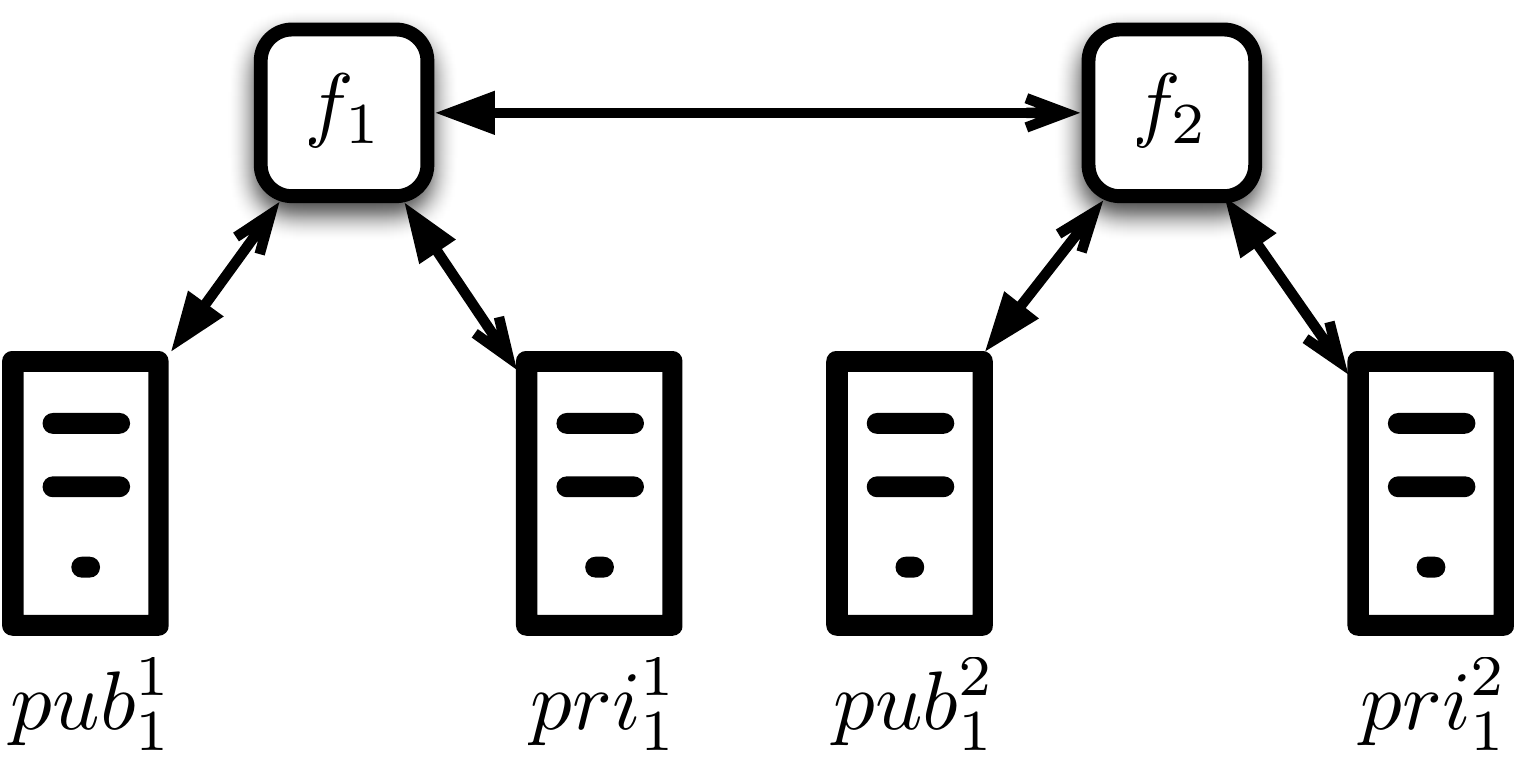}
        \caption{Multi-tenant data center}
        \label{Fi:benchmark:mdc}
    \end{subfigure}
    \caption{Interesting network topologies for verification.}
    \label{Fi:MidTop}
\end{figure}

\subsection{Undecidability of Safety w.r.t. the FIFO Semantics} \label{sec:undecidability}
In this section, we prove undecidability of the safety problem by showing that
(the specific example of) checking isolation w.r.t. the FIFO semantics is
undecidable, even when the network does not have forwarding loops. 
Forwarding loops occur in networks when a packet reaches the same middlebox (or
router) multiple times during the packet's traversal of the network. In the case
of deterministic networks (e.g., stateless networks that consist solely of
routers), forwarding loops result in the packet traversing the network in an
infinite path, never reaching the packet's destination. In general, the
existence of forwarding loops is considered a fault in the network design
\cite{hengartner2002detection}.

It is well known that an automaton with an ordered channel of messages (also known as a \emph{communicating FSM}) can simulate a Turing machine \cite{brand1983communicating}.
This can be used to show that the isolation problem over ordered channels is undecidable in the presence of forwarding loops: a forwarding loop allows a packet to traverse the network and reach the same middlebox any number of times. Therefore, it allows one middlebox in the network to simulate a communicating FSM by having all packets rerouted to it.
However, it turns out that forwarding loops are not the root cause for undecidability.
In this work, we prove that the isolation problem is still undecidable even in the absence of forwarding loops.

To formally define forwarding loops, we augment every packet sent by a host with a unique packet id (e.g., the host id combined with a time stamp). Middlebox forwarding is oblivious to this augmentation: forwarding relations do not depend on the packet id, nor do they change it.
We say that a network has a forwarding loop if there is a run in which a packet with the same packet id is received by the same middlebox twice (i.e.,  a run in which a packet that originates from a middlebox is received by the same middlebox again, possibly after modifications).

We now prove the undecidability result. For completeness of the presentation,
our proof shows a reduction from the halting problem of 2-counter machines
rather than from reachability of communicating FSMs. However, the same idea of
avoiding forwarding loops could be applied to the reduction from commmunicating
FSMs sketched above.

\begin{theorem}\label{thm:OrderedIsUndec}
The isolation problem under the FIFO network semantics is undecidable even for networks with finite-state middleboxes and without forwarding loops.
\end{theorem}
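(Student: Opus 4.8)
The plan is to reduce the halting problem for deterministic 2-counter (Minsky) machines to the isolation problem, constructing for each machine $\mathcal{C}$ a network $\Net_\mathcal{C}$ with finite-state middleboxes, no forwarding loops, such that a forbidden packet reaches a designated host if and only if $\mathcal{C}$ halts. The central idea is that an unbounded FIFO channel can store the machine's configuration: we encode a configuration with counter values $n_1, n_2$ and control state $\ell$ as a word in the channel of the form (roughly) $\ell\, a^{n_1}\, \#\, a^{n_2}\, \$$, where $a$ is a ``unit'' packet tag and $\ell, \#, \$$ are marker tags. A single ``controller'' middlebox $m$ repeatedly drains this word from one ingress channel, and, packet by packet, writes the successor configuration onto an egress channel that (via the network topology) feeds back the updated word to $m$'s input — but crucially we must arrange this feedback \emph{without a forwarding loop}, i.e.\ so that no individual packet-with-id returns to a middlebox it already visited.

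First I would lay out the encoding and the instruction simulation precisely: incrementing $c_1$ means copying the prefix $\ell\,a^{n_1}$, emitting one extra $a$ before the $\#$, then copying $a^{n_2}\$$; decrementing/zero-testing $c_1$ means checking whether the first symbol after $\ell$ is $\#$ (zero) or $a$ (positive) and acting accordingly. Since $m$ is finite-state, it can only remember $O(1)$ bits, so it cannot ``hold'' a counter value — but it does not need to, because the value lives in the channel and $m$ merely transduces the stream symbol-by-symbol, using its control state to track which field it is currently in and which machine instruction it is executing. When the simulated machine reaches its halt instruction, $m$ emits the forbidden packet to the monitored host; isolation fails iff this ever happens iff $\mathcal{C}$ halts.

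The main obstacle — and the only real novelty over the classical communicating-FSM argument — is \textbf{eliminating forwarding loops}. A naive design routes every rewritten packet from $m$'s output port straight back to $m$'s input port, which obviously creates a forwarding loop (the same packet id revisits $m$). The fix I would use is to break the feedback into finitely many stages by replicating the controller: use a cycle of $k$ copies $m_0, m_1, \dots, m_{k-1}$ of the transducer (for a suitable constant $k$, or even an unbounded but fixed-per-machine number), each passing its rewritten stream to the next, so that a packet travels $m_0 \to m_1 \to \cdots$; since $\mathcal{C}$'s run, if infinite, visits infinitely many instructions but each packet only ever moves \emph{forward} along this chain, no packet id is ever received twice by the same middlebox. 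To make the chain work with only finitely many middleboxes, I would instead let each ``round'' of the simulation spawn \emph{fresh} packet ids: hosts are the only source of packets, but we can route the drained word through a host-like relay, or — cleaner — observe that forwarding loops are defined via packet ids assigned \emph{by hosts}, and a packet generated by a middlebox as output gets a new id, so a middlebox re-emitting a transformed packet does \emph{not} reuse the incoming id. Under that reading (which the excerpt's definition supports: ``a packet that originates from a middlebox''), even a single self-feeding controller has no forwarding loop in the sense defined, and the reduction is immediate; I would state this carefully and, to be safe, also give the multi-copy construction as a topology that manifestly has no loop under the strictest reading.

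Finally I would verify the two directions: (i) if $\mathcal{C}$ halts, the scheduler can be driven so that $m$ faithfully simulates each instruction and eventually emits the forbidden packet — here one checks that the nondeterministic event order can always be resolved to process the channel contents in the intended sequence; (ii) if isolation fails, then by induction on the run the channel contents always encode a reachable configuration of $\mathcal{C}$ (the transducer's local moves preserve the invariant ``the channel holds a well-formed encoding of the current config''), so the forbidden packet is emitted only after $\mathcal{C}$ reaches its halt state. The bookkeeping for (ii) — showing malformed intermediate channel states either cannot arise or cannot lead to the forbidden packet — is the fiddly part, handled by having $m$ \babort\ (harmlessly, into a sink, not the monitored host) on any unexpected symbol, so only faithful simulations survive.
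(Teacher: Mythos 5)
Your overall reduction skeleton (halting of a two-counter machine, with the unbounded FIFO channel serving as external storage for the counters) matches the paper's strategy, although your encoding differs in a secondary way: you store the entire configuration as a single word $\ell\,a^{n_1}\,\#\,a^{n_2}\,\$$ circulating through one controller, whereas the paper uses a controller plus two dedicated middleboxes $c_1,c_2$ and represents each counter value by the \emph{number} of tag-$1$ packets in flight on that middlebox's channels, with $\#$ as an end-marker for the zero test. Either encoding can be made to work, so that part is not the issue.

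The genuine gap is in your treatment of forwarding loops, which is the only nontrivial content of the theorem beyond the classical communicating-FSM argument. Your ``cleaner'' resolution rests on a misreading of the definition: the paper states explicitly that packet ids are assigned by hosts and that middlebox forwarding ``does not depend on the packet id, nor does it change it,'' and the parenthetical you quote in fact says the opposite of what you take it to say --- a run in which a packet originating from a middlebox is received by that same middlebox again, \emph{possibly after modifications}, is precisely what constitutes a forwarding loop. So a self-feeding controller that rewrites the configuration word back onto its own input port has a forwarding loop, and your ``immediate'' reduction does not satisfy the theorem's hypothesis. Your fallback constructions do not close the gap either: you yourself note that a finite chain of controller copies cannot support an unbounded computation, and the ``host-like relay'' is left undeveloped --- in this model hosts cannot be programmed to relay; they nondeterministically emit packets from a fixed set $P_h$ at arbitrary times. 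The paper's actual fix is a repeater-host gadget: attach a dedicated host to each middlebox; when the middlebox would forward a packet with tag $t$ to port $p$, it instead discards the packet, and then forwards the \emph{next} packet it receives to port $p$ provided that packet comes from its repeater with tag $t$ (otherwise it moves to a sink state). Since the repeater is a host, the re-emitted packet carries a fresh id, so no id ever revisits a middlebox, and the middlebox's check neutralizes the hosts' nondeterminism. Without this (or an equivalent) gadget, your proof does not establish the ``without forwarding loops'' part of the statement.
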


\begin{proof}
We prove undecidability by a reduction from the (undecidable) halting problem of a two-counter machine to the reachability problem, which is the complement of the isolation problem.
A \emph{two-counter machine} $M$ consists of a finite set of control states $Q$, an initial state $q_0\in Q$, a final state $q_f\in Q$, and a set of instructions per state (state transitions).
An instruction determines the next state and manipulates the value of the counters $c_1, c_2$ (initially the value of the two counters is $0$).
An instruction is in one of the two following forms \cite{minsky1961recursive}:
\begin{itemize}
\item $q_1: c_i = c_i + 1$ \lstinline|; GOTO| $q_2$. \\
The instruction increments $c_i$ and changes the state from $q_1$ to $q_2$.
\item $q_1:$ \lstinline|If| $c_i=0$ \lstinline|GOTO| $q_2$ \lstinline|Else| $c_i:= c_i - 1$ \lstinline|; GOTO| $q_3$. \\
The instruction changes the state to $q_2$ if the counter value is zero; otherwise it decrements the counter and goes to state $q_3$.
\end{itemize}

We first describe a reduction that constructs a network with forwarding loops and allows discarding of packets. We then describe how to get rid of the forwarding loops and the discard operation.
Our reduction constructs a network with three middleboxes: a controller middlebox that simulates the state in $Q$, a $c_1$ middlebox that helps simulate the value of the first counter, and a $c_2$ middlebox that helps simulate the value of the second counter, as illustrared in \figref{fifo-undec}.
The network has two hosts: initiator and target.
Intuitively,
the initiator host initiates the simulation of the counter machine,
and the target host receives a packet if and only if the counter machine reaches the final state $q_f$.
Isolation holds if and only if the target host receives no packet.
Both hosts are connected to the controller, which is also connected to $c_1$ and $c_2$.
The set of packet tags is $T = \{\#, 1\}$. Recall that this determines the set of (abstract) packets.
The simulation is done by making sure that the total number of $1$ packets on the ingress and egress channels of each $c_i$ corresponds the value of the simulated counter.

In our construction, the middleboxes decide on forwarding based on the packet tag only.
Middlebox $c_i$ forwards all of its received packets back to the controller host.
We now describe the forwarding behaviour of the controller. Initially, the initiator sends two $\#$ packets to the controller. From that point on, the initiator sends only $1$ packets. 
As our network model does not allow to restrict the order in which hosts send packets, this scheme is enforced by the controller: if any other packet arrives, the controller goes to a sink state in which it discards all received packets. The controller forwards the first $\#$ to $c_1$ and the second $\#$ to $c_2$. When the controller gets a $1$ packet from the initiator it simulates a single step of the counter machine, as follows.
In an increment operation of $c_i$, the controller
sends a $1$ packet to $c_i$.
To simulate a zero test of $c_i$, the controller receives two packets from $c_i$ (if packets from other hosts or middleboxes are received, then the controller goes to a sink state).
If the first received packet is $\#$, then the controller forwards it back to $c_i$. If the second one is also $\#$, then the value of the counter is zero. If it is $1$, then it is discarded (the value of $c_i$ is decremented by $1$). If both packets are $1$, then the first one is discarded and the second is forwarded back to $c_i$.
The simulation of the states of the counter machine is performed by the states of the controller middlebox in a straightforward manner.
Finally, if the controller simulates a transition to $q_f$, then it forwards the packet to the target host.
Hence, the counter machine halts if and only if the target host is not isolated.

\begin{figure}
\begin{center}
\includegraphics[width=0.5\textwidth]{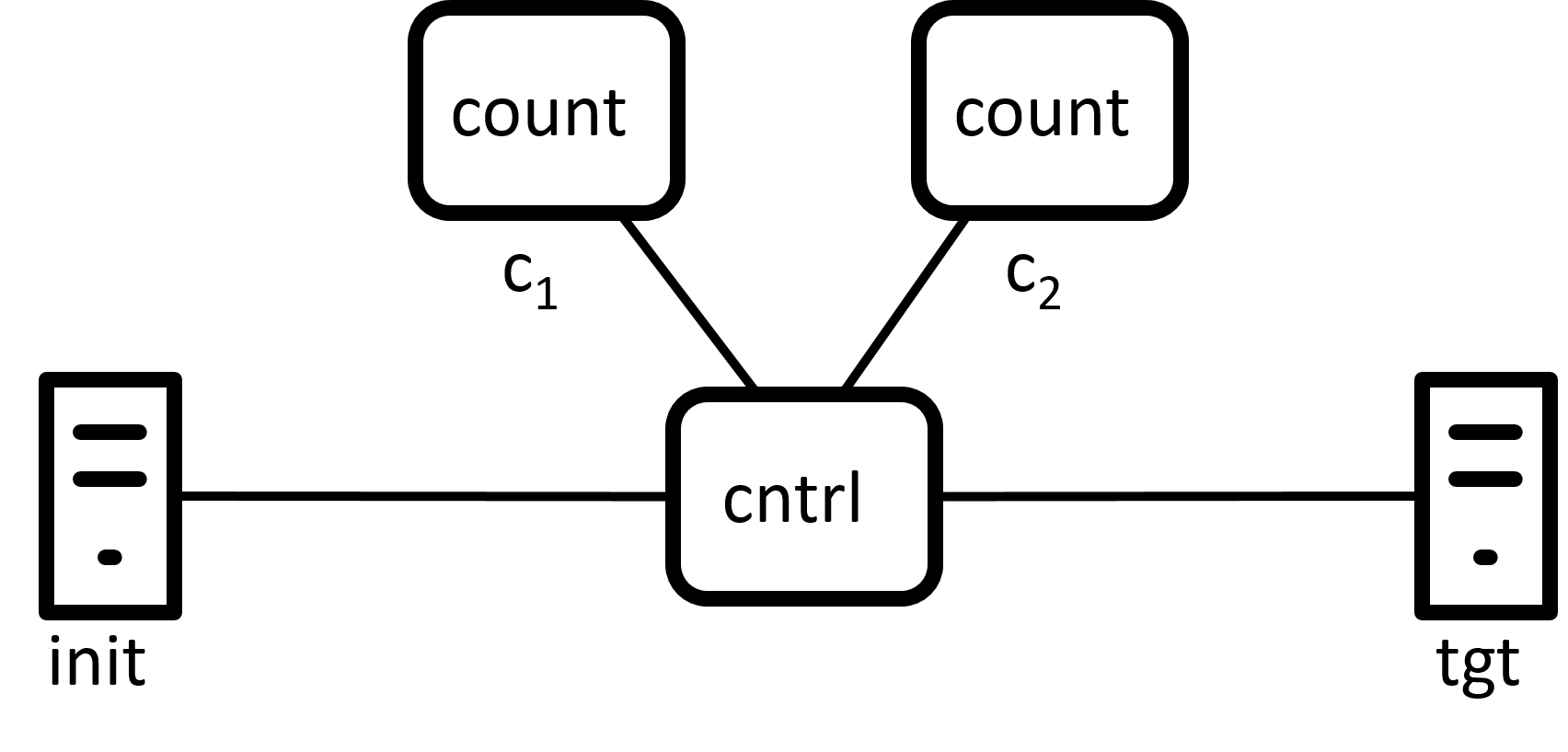}
\end{center}
\caption{The network resulting from the reduction from the halting problem for Two Counter Machines.}\label{Fi:fifo-undec}
\end{figure}

\emph{Construction without discard operation.}
To avoid packet discarding we add a dummy host, and packets that should be discarded are forwarded to the dummy host.

\emph{Construction without forwarding loops.}
To avoid forwarding loops, we add a repeater host to every middlebox.
In the new construction, if a middlebox receives a packet with tag $t$ and needs to forward it to port $p$, then it discards it, and (i)~if the next packet that it receives is not from its repeater with tag $t$, then it goes to a sink state.
(ii)~otherwise, it forwards the packet it got from its repeater to port $p$.
\qed
\end{proof}

\section{Abstract Network Semantics} \label{sec:unordered}

In this section we define an abstract network semantics, called the
\emph{unordered semantics}, which recovers decidability of the safety problem. A
similar setting was explored in \cite{lipton1976reachability,sen2006model} to
recover decidability and obtain the same complexity results we show in
Theorem~\ref{thm:EXPSPACEcomplete}

In the concrete (FIFO) network semantics channels are ordered.
In an ordered channel, if a packet $p_1$ precedes a packet $p_2$ in an ingress channel of some middlebox, then the middlebox will receive packet $p_1$ before it receives packet $p_2$.
We abstract this semantics by an \emph{unordered network semantics}, where the channels are unordered, i.e., there is no restriction on the order in which a middlebox receives packets from an ingress channel.
In this case, the sequence of pending packets in a channel is abstracted by a multiset of packets. Namely, the only relevant information is how many occurrences each packet has in the channel.
The definitions of configurations and runs w.r.t. the unordered semantics are adapted accordingly.
Note that this change does not affect the capacity of the network edges. Consequently the set of network configurations remains infinite.

\begin{remark}
Every run with respect to the FIFO network semantics is also a run with respect to the unordered semantics. Therefore, if safety holds with respect to the unordered semantics, then it also holds for the FIFO semantics, making the unordered semantics a sound abstraction of the FIFO semantics with respect to safety.

The abstraction can introduce false alarms, where a violation exists with respect to the unordered semantics but not with respect to the concrete semantics.
This is demonstrated by \exref{OrderMatters} which presents a network that violates isolation with respect to the unordered semantics, but satisfies isolation with respect to the FIFO semantics.
Still, in many cases, the abstraction is precise enough to enable verification.
In particular, in Lemma~\ref{lem:IncreaseOrderDoesNotMatter} we show that for an important class of networks, the two semantics coincide with respect to safety.

\emph{Lossy channel semantics} is another overapproximation of the FIFO network semantics considered in the literature, where the order on the channels is maintained, but packets can be lost. We note that the unordered semantics also over-approximates the lossy semantics with respect to safety, as any violating run with respect to the lossy semantics can be simulated by a run with respect to the unordered semantics where ``lost'' packets are starved until the violation occurs.
\end{remark}

\begin{example}\label{examp:OrderMatters}
Consider a network with two hosts ($h_1$ and $h_2$), each connected to an authentication middlebox ($m_1$ and $m_2$ respectively), as depicted in \figref{order-matters}. The authentication middleboxes are connected to each other as well.
Each authentication middlebox forwards all packets from a host only if the first packet seen from that host is an authentication key ($k_1$ and $k_2$ for $m_1$ and $m_2$ respectively), otherwise it drops all packets from that host. We would like to verify isolation between $h_1$ and $h_2$. Namely, we would like to verify that no packet with source $h_1$ arrives at $h_2$ and vice versa.

A possible scenario violating isolation w.r.t the unordered semantics is:
(i)~$h_1$ sends $k_1$ and then sends $k_2$;
(ii)~$m_1$ receives $k_1$ and then receives $k_2$ (and forwards both packets in that order).
(iii)~$m_2$ receives $k_2$ before it receives $k_1$ (i.e., the order on the channel between $m_1$ and $m_2$ was not maintained). $m_2$ forwards $k_2$ to $h_2$ and isolation is violated.

On the other hand, if all channels are FIFO, then if $h_1$ first sends $k_2$, it and all subsequent packets from $h_1$ will be dropped by $m_1$. If $h_1$ first sends $k_1$ instead, $m_1$ will forward it to $m_2$, which in turn will drop it and all subsequent packets from $h_1$. Consequently, isolation between $h_1$ and $h_2$ is preserved under the FIFO semantics.
\end{example}

\begin{figure}
\begin{center}
\includegraphics[width=0.5\textwidth]{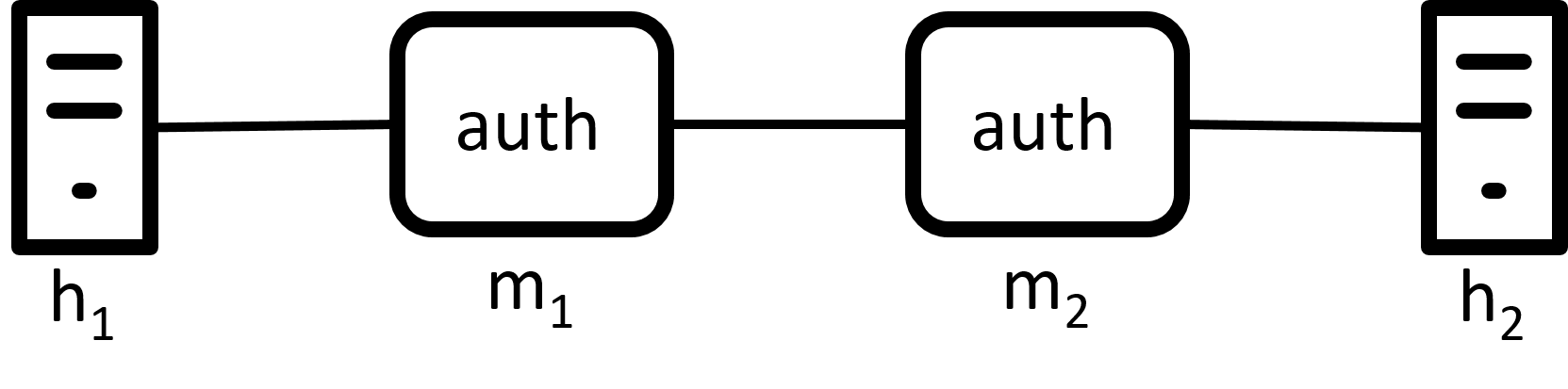}
\end{center}
\caption{A network with two hosts and two authentication middleboxes. Isolation in this network is preserced under the FIFO semantics, but is violated under the unordered semantics.}\label{Fi:order-matters}
\end{figure}

\paragraph{Decidability of Safety w.r.t. the Unordered Semantics}
In the unordered semantics, the network forms a special case of \emph{monotone transition systems}:
We define a partial order $\leq$ between network configurations such that $c_1 \leq c_2$ if the middlebox states in $c_1$ and $c_2$ are the same and $c_2$ has at least the same packets (for every packet type) in every channel.
The network is monotone in the sense that for every run from $c_1$ there is a corresponding run from any bigger $c_2$, since more packets over a channel can only add possible scenarios.
The partial order is trivially a well-quasi-order (as the number of packets cannot be negative), and the predecessor relation is clearly computable.
The classical results of Abdulla et al.~\cite{abdulla1996general} and Finkel et al.~\cite{finkel2001well} prove that in monotone transition systems a backward reachability algorithm always terminates and thus, the safety problem is decidable. Formal arguments and complexity bounds are provided by \thmref{EXPSPACEcomplete}. 
\section{Classification of Stateful Middleboxes}\label{sec:MboxClass}

Encouraged by the decidability of safety w.r.t. the unordered semantics, we are now interested in investigating its complexity.
As a first step, in this section, we identify three special classes of forwarding behaviours of middleboxes within the class of arbitrary middleboxes. Namely, stateless, increasing, and progressing middleboxes. We show that these classes capture the behaviours of real world middleboxes. The classes naturally extend to classes of networks: a network is stateless (respectively, increasing, progressing or arbitrary) if all of its middleboxes are.
As we show in \secref{complexityLowerBounds} and \secref{complexityUpperBounds}, each of these classes results in a different complexity of the safety problem.
Our definitions apply both for finite-state and infinite-state middleboxes.

\paragraph{Stateless Middlebox}
A middlebox $m$ is \emph{stateless} if it can be implemented as a transducer with a single state (in addition to the abort state), i.e., its forwarding behaviour does not depend on its history (with the exception of abort).
Formally, a middlebox $m$ is stateless if for every two histories $h_1,h_2 \in (P\times \PortSet)^*$, packet $p \in P$, port $\SinglePort \in \PortSet$ and output set $o\in 2^{P\times \PortSet}$, $(h_1,(p,\SinglePort),o)\in \ForwardRel$ iff $(h_2,(p,\SinglePort),o)\in \ForwardRel$.

\paragraph{Increasing Middlebox}
A middlebox $m$ is \emph{increasing} if its forwarding relation $\ForwardRel$ is monotonically increasing w.r.t. its history, where histories are ordered by the \emph{subsequence} relation\footnote{A subsequence is a sequence that can be derived from another sequence by deleting some elements without changing the order of the remaining elements.}, denoted by $\sqsubseteq$. Formally, a  middlebox  $m$ is increasing if for every two histories $h_1, h_2 \in (P\times \PortSet)^*$:
if $h_1 \sqsubseteq h_2$, then for every packet $p$, port $\SinglePort$ and output sets $o_1, o_2 \in  2^{P\times \PortSet}$, if
$(h_1, (p,\SinglePort), o_1) \in \ForwardRel$ and $(h_2, (p,\SinglePort), o_2) \in \ForwardRel$ then $o_1 \subseteq o_2$.
Intuitively, this means that new information can only expand the forwarding policy of an increasing middlebox, or lead to an abort.

\begin{remark}
The ``increasing'' property implies that the forwarding relation of an increasing middlebox is in fact a function. Hence, the middlebox can be implemented by a deterministic transducer.
In the following we will refer to the forwarding \emph{function} $\ForwardFunc$ of increasing middleboxes instead of the forwarding relation $\ForwardRel$.
\end{remark}

The following lemma ensures that the behaviour of an increasing middlebox can be precisely captured by a finite-state deterministic transducer.
Its proof uses Higman's lemma~\cite{higman1952ordering} (based on well quasi ordering).

\begin{lemma}\label{lem:IncreasingIsFiniteState}
Any infinite-state increasing middlebox has an implementation as a deterministic finite-state increasing middlebox.
\end{lemma}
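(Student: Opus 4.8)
The plan is to show that the forwarding function $\ForwardFunc$ of an increasing middlebox $m$ factors through a finite quotient of the set of histories, and that this quotient can serve as the state space of a finite-state deterministic transducer that is again increasing. First I would observe that, since $m$ is increasing, for a fixed input $(p,\SinglePort)$ the value $\ForwardFunc(h,(p,\SinglePort))$ is monotone in $h$ with respect to the subsequence order $\sqsubseteq$: if $h_1 \sqsubseteq h_2$ and both are defined, then $\ForwardFunc(h_1,(p,\SinglePort)) \subseteq \ForwardFunc(h_2,(p,\SinglePort))$. Because the output alphabet $\Gamma = 2^{P\times\PortSet}$ is finite, along any $\sqsubseteq$-increasing chain of histories the output on each fixed input can only increase finitely many times before stabilizing, and likewise whether the middlebox aborts on a given input is a ``downward-closed in $\sqsubseteq$ once triggered'' event. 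So the natural candidate for the state reached after history $h$ is the function $\tau_h : \Sigma \to \Gamma \cup \{\bot\}$ sending $(p,\SinglePort)$ to $\ForwardFunc(h,(p,\SinglePort))$ (or to $\bot$ if $m$ aborts), together with enough bookkeeping to recover the states reached after one more step.

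The key step is to use Higman's lemma to bound the number of reachable states. Consider the map $h \mapsto \tau_h$. I claim this map already has finite image, but more care is needed because $\tau_h$ alone need not be a congruence: reading a further letter $(p,\SinglePort)$ from $h$ and from $h'$ with $\tau_h = \tau_{h'}$ might a priori lead to histories with different $\tau$. To fix this, I would instead define the state of $h$ to be the full ``behaviour function'' $B_h$ that maps every finite continuation $w \in \Sigma^*$ to the pair consisting of the output sequence and the abort status that $m$ produces when fed $w$ after $h$ --- i.e. $B_h$ is the residual language/transduction of $m$ after $h$. Two histories with $B_h = B_{h'}$ are genuinely Myhill--Nerode equivalent, so the quotient transducer is well defined, deterministic, and computes the same $\ForwardFunc$; it remains to show there are finitely many distinct $B_h$. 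Here monotonicity enters: if $h_1 \sqsubseteq h_2$ then (by an easy induction using the increasing property on each prefix of a continuation) $B_{h_1} \leq B_{h_2}$ in the pointwise order where outputs are ordered by $\subseteq$ and ``non-abort'' $\leq$ ``abort''. Since $\Sigma^*$ is countable and each coordinate of $B$ ranges over a finite set, an infinite set of pairwise-distinct $B_h$ would, via Higman's lemma applied to the well-quasi-order $(\Sigma^*, \sqsubseteq)$, yield an infinite $\sqsubseteq$-chain $h_{i_1} \sqsubseteq h_{i_2} \sqsubseteq \cdots$ with $B_{h_{i_1}} \le B_{h_{i_2}} \le \cdots$ strictly increasing in infinitely many coordinates --- impossible, since on any single coordinate the chain of finite values can strictly increase only boundedly often, and a standard pigeonhole/König argument across the countably many coordinates rules out the infinite strictly-increasing chain. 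Hence finitely many states; the resulting transducer is finite-state.

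Finally I would check that the quotient transducer is still \emph{increasing} and still \emph{deterministic}. Determinism is immediate from the Myhill--Nerode construction (each history maps to at most one state, each state--letter pair to at most one output and successor). For the increasing property, note that the histories-to-states map respects $\sqsubseteq$ in the sense above, and the forwarding function of the quotient agrees with $\ForwardFunc$ on all histories, so the defining inequality ``$h_1 \sqsubseteq h_2 \Rightarrow o_1 \subseteq o_2$'' is inherited verbatim. The main obstacle I anticipate is the second step: making precise that finitely many residual behaviours $B_h$ occur. The subtlety is that Higman's lemma gives a well-quasi-order on histories, but one must translate ``no infinite antichain / no infinite bad sequence'' into ``finite image of $h\mapsto B_h$'', which requires combining the wqo of the domain with the finiteness of each output coordinate and the monotonicity of $B$ --- essentially the same packaging used to prove that monotone transition systems with a wqo have finitely many ``behaviour types''. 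Everything else (defining the transducer, checking determinism, checking the increasing property, checking $\ForwardFunc$ is preserved) is routine.
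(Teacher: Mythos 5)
Your overall strategy (quotient by the Myhill--Nerode congruence of the residual behaviours $B_h$ and argue that the quotient is finite) is a genuinely different route from the paper's, and your observation that the output matrix $\tau_h$ alone need not be a congruence is a real subtlety. However, the step you yourself flag as the main obstacle --- showing that only finitely many distinct residuals $B_h$ occur --- contains a genuine gap, and the argument you sketch for it is false. From an infinite family of pairwise-distinct residuals, Higman's lemma does give a chain $h_{i_1}\sqsubseteq h_{i_2}\sqsubseteq\cdots$ with $B_{h_{i_1}}\le B_{h_{i_2}}\le\cdots$ strictly increasing. But an infinite strictly increasing chain in a pointwise-ordered product of finite posets over a countably infinite index set is perfectly possible: take $f_n\colon\mathbb{N}\to\{0,1\}$ to be the indicator function of $\{1,\dots,n\}$; each coordinate increases at most once, yet the chain is infinite and strictly increasing. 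So ``each coordinate can increase only boundedly often'' does not yield a contradiction, and no pigeonhole or K\"onig-style packaging will; finiteness of the set of residuals simply does not follow from these ingredients.

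The paper closes exactly this gap by a different use of Higman's lemma: rather than counting residuals, it fixes an output matrix $A$ and observes that the set $h(\mathit{UP}(A))$ of histories whose matrix dominates $A$ is upward closed under $\sqsubseteq$, hence has a \emph{finite basis} $\{h_1,\dots,h_o\}$. Since the upward closure of a single word under the subsequence order is a regular language ($\Sigma^*a_1\Sigma^*a_2\cdots a_n\Sigma^*$), each $h(\mathit{UP}(A))$ is regular, each level set $h(A)$ is a Boolean combination of finitely many such sets and hence regular, and the finite-state transducer is the product of the DFAs recognizing the level sets; the product states are finer than the matrices, which also disposes of your congruence worry. If you wish to keep your residual-based formulation, you would still have to route through this finite-basis/regularity argument (finitely many regular level sets give a finite-index right congruence that refines $h\mapsto\ForwardFunc(h)$ together with all its right quotients, hence finitely many $B_h$); the direct cardinality argument cannot be repaired as stated.
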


\begin{proof}
Consider an infinite-state increasing middlebox $m$, and its forwarding function $\ForwardFunc : (P\times \PortSet)^* \times (P \times \PortSet) \to 2^{P\times \PortSet}$. Recall that $\ForwardFunc$ might be a partial function.

Let $\ForwardFunc(h)$ denote an $\ell\times k$ \emph{output matrix} for the middlebox $m$ and history $h$, where $|\AbsPackets|=\ell$ and $|\PortSet|=k$. We further denote $\AbsPackets=\{p_1,\dots,p_\ell\}$ and $\PortSet=\{\SinglePort_1,\dots,\SinglePort_k\}$. Every entry in the output matrix $\ForwardFunc(h)$ contains the output set for the corresponding pair of packet and port, or $\top$ if it is undefined. Formally $\ForwardFunc(h)_{i,j}=\ForwardFunc(h,(p_i,\SinglePort_j))$ or $\ForwardFunc(h)_{i,j}=\top$ when $\ForwardFunc$ is undefined for the input.

As $\AbsPackets$ and $\PortSet$ are finite, we get that there is a finite number of different output matrices. We denote them by $A_1,\dots,A_n$.
With every output matrix $A_i$ we associate the set of matching histories $h(A_i) = \{h\mid \ForwardFunc(h) = A_i\}$.
Note that $h(A_1) \cup \ldots \cup h(A_n) = (P\times \PortSet)^*$ and that $h(A_i) \cap h(A_j) = \emptyset$ for every $i \neq j$ (since the forwarding function is deterministic). Therefore, for every history $h$ there exists a unique $i$ such that $h \in h(A_i)$.

In the following, we will show that for every $A_i$, the set $h(A_i)$ is regular
and thus we can implement the forwarding function $\ForwardFunc$ of $m$ by using
finite-state automata to recognize the matrix that corresponds to the current
history and then forwarding the current packet accordingly.

We show that for every output matrix $A$, $h(A)$ is regular. 	
We define a partial order $\leq$ over matrices as: $A \leq B$ iff $A_{i,j} \subseteq B_{i,j}$ for every pair of indices $i,j$,
(where $X \subseteq \top$ for every $X \in 2^{P\times \PortSet}$).
We denote by $\mathit{UP}(A)$ the upwards closure of $\{A\}$ with respect to the $\leq$ order on matrices.
We extend the definition of $h(A)$ to sets of matrices: for a (possibly infinite) set of matrices $\mathcal{A}$ we define $h(\mathcal{A}) = \bigcup_{A\in\mathcal{A}}\{h\mid \ForwardFunc(h) = A\}$.
We note that since $m$ is increasing, the set $h(\mathit{UP}(\{A\}))$ is upwards closed with respect to the subsequence relation over histories.
Indeed, if $h_1 \in h(\mathit{UP}(A))$, then $\ForwardFunc(h_1) \geq A$. For every $h_2 \sqsupseteq h_1$, $\ForwardFunc(h_1) \leq \ForwardFunc(h_2)$ (as $m$ is increasing), and thus $\ForwardFunc(h_2) \geq A$, which means that $h_2 \in h(\mathit{UP}(A))$ as well.
Hence, by Higman's lemma and the finite basis property of wqo, we get that $h(\mathit{UP}(A))$ has a finite basis (which consists of histories). We denote the basis $\{h_1,\dots,h_o\}$. Then $h\in h(\mathit{UP}(A))$ if and only if $h\sqsupseteq h_i$ for some $i=1,\dots,o$.

We further observe that for a given history $h_i$, the (infinite) set $\{h\mid h_i\sqsubseteq h\}$ is a regular language, and as regular languages are closed under finite union, we get that the (infinite) set of histories $h(\mathit{UP}(A))$ is regular. Finally, we note that $h(A) = h(\mathit{UP}(A)) \setminus \bigcup \{h(\mathit{UP}(A')) \mid A' \geq A \wedge A' \neq A \}$. Since there are finitely many output matrices, closure properties of regular languages imply that $h(A)$ is regular.

To complete the proof, we describe the transducer contruction. Let $D_i$ be a
finite-state automaton that recognizes $h(A_i)$. We construct a finite-state
transducer $m'$ for $m$, as follows. $m'$ runs $D_1,\ldots,D_n$ in parallel.
They all start from their initial states, and on every new packet $p$ that
arrives from port $\SinglePort$, $m'$ updates the states of $D_1,\ldots,D_n$ in
parallel based on $(p,\SinglePort)$. Exactly one of them, say $D_i$, will reach
an accepting state, in which case $m'$ will process the packet as defined by
$A_i$. Correctness is ensured since for every history $h$, $D_i$ accepts $h$ if
and only if $h \in h(A_i)$, which by definition ensures that $\ForwardFunc(h) =
A_i$. In addition, the construction results in a finite-state transducer since
the number of matrices is finite.

\qed
\end{proof}

\paragraph{Precision of Abstract Semantics in Increasing Networks}
Recall that in general, safety w.r.t. the FIFO semantics and the unordered
semantics do not coincide. However, the following lemmas show that for
increasing networks (with either finite-state or infinite-state middleboxes)
they must coincide, making the abstraction precise for such networks.
Intuitively, this is because in increasing networks if a packet $p$ reaches a
middlebox $m$ once, then unless a middlebox in the network reaches an abort
state, the packet $p$ can reach $m$ again, thus enabling the simulation of
unordered channels with ordered ones. The following lemma formalizes this claim.

\begin{lemma}\label{lem:PacketCanComeAgain}
Let $\Net$ be an increasing network.
For every middlebox $m$, packet $p$ and port $\SinglePort$,
if there exists a run $r$ of $\Net$ from the initial configuration in the FIFO semantics such that in the last step $m$ receives $p$ from $\SinglePort$, then from \emph{any} configuration there exists a run, in the FIFO semantics, that ends in a step in which $m$ receives $p$ from $\SinglePort$ (or in abort).
\end{lemma}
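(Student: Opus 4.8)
The plan is to obtain the required run from an arbitrary configuration $c$ by \emph{replaying} the given run $r$ step by step, ``draining'' any packets that happen to block a replayed step. Write $r = c_0, c_1, \dots, c_n$ with $c_0$ the initial configuration and the last transition $c_{n-1}\to c_n$ being ``$m$ receives $p$ from $\SinglePort$''. As in the treatment of forwarding loops, tag every packet that appears during $r$ with a distinct instance identifier, so that each of the finitely many process steps of $r$ consumes a uniquely determined instance $\pi_1,\dots,\pi_n$ and each send step creates a uniquely determined instance.

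First I would define the replay from $c$. Process the steps of $r$ in their original order. A host-send step is replayed verbatim (hosts are unconstrained). To replay a step ``$m'$ processes $(p',\SinglePort')$'', which consumes instance $\pi_i$ from $m'$'s ingress channel for port~$\SinglePort'$: first let $m'$ consume, one by one in FIFO order, all packets currently ahead of $\pi_i$ on that channel --- these are ``junk'': packets already present in $c$, or packets produced by earlier drains --- each such consumption being a legal FIFO step whose output is appended to the tails of $m'$'s egress channels; then let $m'$ consume $\pi_i$, now at the head. Whenever some middlebox transition is undefined, that middlebox enters its abort state and we stop, which is one of the two permitted outcomes. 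Otherwise the replay ends with the transition that replays $c_{n-1}\to c_n$, i.e.\ with ``$m$ receives $p$ from $\SinglePort$''.

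Next I would check the two points that make this work. \emph{FIFO order of the replayed packets.} Packets enter every channel in the same relative order as in $r$: host sends and middlebox outputs are issued in the same order as in $r$, while junk is only ever appended to channel tails (or sits at the front, having been in $c$). Hence for $i'<i$ whose packets lie on a common channel, $\pi_{i'}$ is ahead of $\pi_i$ just as in $r$; in particular $\pi_i$ is never among the junk drained for an earlier step, so when we reach step $i$ the instance $\pi_i$ is still present and unconsumed, and the draining puts it at the head. Also, draining always terminates, since a channel holds only finitely many packets at any moment and the outputs produced go to \emph{other} channels. \emph{Increasingness.} When $m'$ replays the step consuming $\pi_i=(p',\SinglePort')$, its history is exactly the history it had at the corresponding point of $r$ with extra junk packets interleaved in, hence a supersequence of it; since the network is increasing, $m'$ either aborts on $(p',\SinglePort')$ (permitted) or is still defined there and produces a superset of the output it produced in $r$. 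Consequently every configuration of the replay dominates the corresponding configuration of $r$ in the channel order, so no other obstruction can arise.

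The crux --- and the only place that needs care --- is the FIFO bookkeeping: one must ensure that the on-demand draining used to unblock step $i$ never consumes an instance $\pi_j$ needed for a later step $j>i$, nor reorders it relative to where $r$ put it. This is exactly the order-preservation argument above, and it is what the ``draining only appends to tails'' observation is for; everything else (free host sends, termination of draining, monotone growth of outputs by increasingness) is routine.
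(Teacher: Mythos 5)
Your proof is correct, but it takes a genuinely different route from the paper's. The paper argues by induction on $|r|$: if the final packet $p$ was sent to $m$ by a neighbour middlebox $m'$, it takes the history $h'=(p'_1,\SinglePort'_1),\dots,(p'_n,\SinglePort'_n)$ of $m'$ together with the triggering packet, observes that each prefix of $r$ ending in $m'$ receiving one of these packets is a strictly shorter run, chains the runs supplied by the induction hypothesis to re-deliver all of them to $m'$ in order from the given configuration $c$, and then invokes increasingness ($h'$ is a subsequence of the new history $h''$, so $\ForwardFunc_{m'}(h'',(p',\SinglePort'))\supseteq\ForwardFunc_{m'}(h',(p',\SinglePort'))$) to conclude that $m'$ resends $p$; finally $m$ flushes its ingress channel. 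Your single-pass replay-with-draining avoids the recursion entirely and in fact proves something stronger: the \emph{whole} of $r$ can be simulated from $c$ so that every middlebox's replay history is a supersequence of its history in $r$ --- which is essentially the statement that Lemma~\ref{lem:IncreaseOrderDoesNotMatter} later needs and re-derives by another induction. It also yields a witness run of length linear in $|r|$ plus the drained junk, whereas the paper's recursive chaining can blow up the length of the constructed run; the price you pay is the FIFO bookkeeping, which you correctly identify as the crux and handle adequately. Two small inaccuracies, neither fatal: (i) your reason for termination of draining (``outputs go to other channels'') is not quite right, since the model admits self-loop ports (see the PSPACE-hardness construction in Remark~\ref{rem:nTagsIsPSPACE}); the correct reason is that outputs are appended behind $\pi_i$, so the set of packets that must be drained is fixed and finite. (ii) The replay history of $m'$ also contains, as a prefix, whatever history $m'$ had already accumulated in the arbitrary configuration $c$ --- not only interleaved junk; the conclusion that it is a supersequence of the $r$-history still holds, and the paper makes this point explicitly.
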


\begin{proof}
We prove the assertion by induction on $|r|$ (the length of the run from the initial configuration).
We fix $m,p,\SinglePort,r$, and an arbitrary configuration $c$ from which we wish to show a run.

If $|r| = 1$, then it must be the case that $m$ received the packet from a neighbor host.
Hence, $c$ has a run in which the same neighbor host sends the same packet to $m$, and after all the previous packets in the ingress channel of $m$ are processed, the packet $p$ arrives from port $\SinglePort$.

If $|r| > 1$, then we consider two distinct cases.
In the first case, the packet was sent to $m$ by a neighbor host, and by the same arguments as before the assertion holds.
In the second case, the packet was sent to $m$ by a neighbor middlebox $m'$.
Let $h'=(p'_1,\SinglePort_1 '),\dots,(p'_n,\SinglePort_n ')$ be the history of packets received by $m'$ before it sent the packet, and let $(p',\SinglePort ')$ be the packet that triggered the forwarding of $p$ from $m'$ to $m$.
Since these packets were received by $m'$ before the last step of $r$ it must be
the case that there exist $n+1$ runs $r_1,\dots,r_{n},r'$ such that run $r_i$
ends when $m'$ receives packet $(p'_i,\SinglePort_i ')$, and run $r'$ ends when
$m'$ receives $(p',\SinglePort ')$. Each of the runs $r_1,\dots,r_{n},r'$ has a
length of at most $|r|-1$, since they are subsequences of the prefix of $r$ that
excludes the packet $(p,\SinglePort)$ sent from $m'$ to $m$.

Hence, by the induction hypothesis there is a run over $\Net$ that begins in $c$ and ends in some configuration $c_1$ after $m'$ received the packet $(p'_1,\SinglePort_1 ')$.
Similarly, for every $i=1,\dots,n$ there is a run that begins in $c_i$ and ends in some configuration $c_{i+1}$ after $m'$ received the packet $(p'_i,\SinglePort_i ')$.
Finally, there is a run from $c_{n+1}$ to a configuration $c'$ that ends after $m'$ received $(p',\SinglePort ')$.
Consider the history $h''$ of $m'$ that is formed in the run $c \leadsto c_1 \leadsto \dots c_{n+1} \leadsto c'$.
Regardless of the history of $m'$ in $c$ (which is the prefix of $h''$), we get that $h'$ is a subsequence of $h''$ (as $(p'_{i+1},\SinglePort_{i+1}')$ is added after $(p'_i,\SinglePort_i ')$).
Hence, after $m'$ receives $(p',\SinglePort ')$, it must forward $p$ to $m$ (due to the fact that $\ForwardFunc_{m'}(h,(p',\SinglePort '))\subseteq \ForwardFunc_{m'}(h'',(p',\SinglePort '))$).
Hence, after $m$ processes all the packets in its ingress channel, it will receive $(p,\SinglePort)$ (or will get to an abort state).
\qed
\end{proof}

In Lemma~\ref{lem:IncreaseOrderDoesNotMatter}, we use the property shown in
Lemma~\ref{lem:PacketCanComeAgain} to prove that any reachable configuration in
an unordered network, is also a reachable configuration of a FIFO network. Given
an unordered violating run, we use the construction described in the proof of
Lemma~\ref{lem:PacketCanComeAgain} to build a FIFO run that ends in the packet
that caused the violation in the unordered run, or in a FIFO violating run in
case the construction from Lemma~\ref{lem:PacketCanComeAgain} resulted in an
abort state.

\begin{lemma}\label{lem:IncreaseOrderDoesNotMatter}
Let $\Net$ be an increasing network.
Then the output of the safety problem in $\Net$ w.r.t. the FIFO semantics and the unordered semantics is identical.
\end{lemma}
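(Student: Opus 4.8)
The plan is to establish the two directions of the equivalence separately. One direction is trivial: since every FIFO run is also an unordered run (as noted in the Remark preceding \exref{OrderMatters}), if safety holds w.r.t. the unordered semantics then it holds w.r.t. the FIFO semantics. So the entire content of the lemma is the converse: if there is an unordered run reaching an abort configuration, then there is a FIFO run reaching an abort configuration. I would phrase this as: every abort configuration reachable in the unordered semantics is ``reachable'' in the FIFO semantics in the weak sense that \emph{some} abort configuration is FIFO-reachable. (We do not need the exact same configuration, only that safety fails.)

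\textbf{Main construction.} Fix an unordered run $u = c_0, c_1, \dots, c_N$ ending in an abort configuration, so in the last step some middlebox $m$ receives a packet $(p,\SinglePort)$ for which $\delta_m$ is undefined (equivalently, $m$ enters its abort state). I would prove by induction on $N$ the following strengthened statement: for every unordered run of length $N$ from the initial configuration ending in a step where middlebox $m$ receives $(p,\SinglePort)$, there is a FIFO run from the initial configuration that ends either in a step where $m$ receives $(p,\SinglePort)$, or in an abort configuration. The base case $N=1$ is immediate: the packet came directly from a neighbouring host, and the host can send it in the FIFO semantics with empty channels. For the inductive step, I would trace backwards through $u$: the packet $(p,\SinglePort)$ was placed in $m$'s ingress channel either by a host (immediate, as in the base case) or by a neighbouring middlebox $m'$ upon receiving some triggering packet $(p',\SinglePort')$, after $m'$ had received some history $h'$ of packets. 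Every one of these receive-events at $m'$ happened strictly before step $N$ of $u$, so restricting $u$ to the sub-runs leading up to each of them gives unordered runs of length $< N$ (they are genuine runs because deleting later events from an unordered run — in particular the final delivery of $(p,\SinglePort)$ — still yields a valid unordered run, by monotonicity). Applying the induction hypothesis repeatedly, I concatenate: first a FIFO run delivering the first packet of $h'$ to $m'$, then from the resulting configuration a FIFO run delivering the second packet of $h'$, and so on, and finally one delivering the trigger $(p',\SinglePort')$ to $m'$ — unless at any point one of these sub-runs ends in an abort configuration, in which case we are already done. Along this concatenated FIFO run, $m'$ accumulates a history $h''$ which has $h'$ as a subsequence (the packets $p'_1, p'_2, \dots$ are appended in order, possibly interleaved with others). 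Since $m'$ is increasing, $\ForwardFunc_{m'}(h'',(p',\SinglePort')) \supseteq \ForwardFunc_{m'}(h',(p',\SinglePort')) \ni (p,\SinglePort)$ (in the right egress channel), so when $m'$ processes $(p',\SinglePort')$ it forwards $(p,\SinglePort)$ to $m$ — again, unless $\ForwardFunc_{m'}$ is undefined on the longer history, which can only happen if it aborts, and then we are done. Finally $m$ processes its entire ingress channel in FIFO order until it reaches $(p,\SinglePort)$; either this happens, or $m$ aborts earlier on some packet. This is essentially the construction already carried out in the proof of \lemref{PacketCanComeAgain}, reused with the distinguished ``arbitrary configuration'' replaced by the genuine initial configuration, and with the bookkeeping to detect aborts along the way.

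\textbf{Where the obstacle lies.} The delicate point is not the increasing property itself — that is used in exactly one clean inequality — but the fact that building the FIFO run may cause \emph{some} middlebox to abort at a point different from where the unordered run aborted, possibly even on a different packet and a different middlebox. I would handle this by making ``or ends in an abort configuration'' a first-class disjunct throughout the induction and observing that it is exactly what we want for the lemma: we only need safety to fail in the FIFO semantics, not to reproduce the identical violation. A second point requiring care is the well-definedness of the ``restricted'' unordered sub-runs: I must argue that removing the final delivery event (and any events depending on it) from $u$ yields legal shorter unordered runs whose last step is the desired receive-event at $m'$; this follows because in the unordered semantics channels are multisets, so a packet that was sitting in a channel unconsumed can simply be ignored, and the causal prefix of $u$ up to each receive-event at $m'$ is self-contained. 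Apart from this, everything is a routine reuse of \lemref{PacketCanComeAgain} plus the observation that FIFO $\subseteq$ unordered gives the reverse inclusion for free.
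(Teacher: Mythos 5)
Your proposal is correct and follows essentially the same route as the paper: the trivial direction from FIFO-runs being unordered runs, and the converse by induction on the length of the unordered run, chaining re-deliveries via \lemref{PacketCanComeAgain} so that the neighbouring middlebox $m'$ accumulates a FIFO history containing its unordered history as a subsequence, then invoking the increasing property to force the final forwarding (with ``reach an abort configuration'' carried as an acceptable alternative outcome throughout). The paper phrases its induction invariant as the existence of an ordered run $r'$ with $r|_m \sqsubseteq r'|_m$ rather than as ``ends in the same receive event or aborts,'' but this is only a cosmetic difference in bookkeeping.
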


\begin{proof}
Recall that any (violating) run w.r.t. the FIFO semantics is also a viable (violating) run w.r.t. the unordered semantics. Therefore, in order to prove the assertion of the lemma, it suffices to prove that for every violating run w.r.t. the unordered semantics there is a violating run w.r.t. the FIFO semantics.

We prove that for every unordered run $r$ and every middlebox $m$ there exists an ordered run $r'$ s.t. $r|_m \sqsubseteq r'|_m$ where $r|_m$ is the history of middlebox $m$ in run $r$.

The proof is by induction on the length of the unordered run $r$.
The base case, where $|r| = 0$, is clear as the history is necessarily empty.

For $|r|>0$, the induction hypothesis guarantees that for the prefix of $r$ of length $|r-1|$, denoted $r_{-1}$, there exists an ordered run $r'_{-1}$ s.t. $r_{-1}|_m \sqsubseteq r'_{-1}|_m$. If $m$ is not the recipient of the last packet, then we consider $r' = r'_{-1}$. The resulting history for middlebox $m$ is $r'|_m=r'_{-1}|_m$, and because $r|_m=r_{-1}|_m$ in this case, we have that $r|_m \sqsubseteq r'|_m$.

If $m$ is the recipient of the last packet, we consider two distinct cases.
In the first case, the final packet $(p,\SinglePort)$ in $r$ was sent by a neighbor host. Since hosts can send packets in any configuration, we append the last event of $r$ to $r'_{-1}$, resulting in the ordered run $r'$. The resulting history for middlebox $m$ is $r'|_m=r'_{-1}|_m\cdot (p,\SinglePort)$, and because $r|_m=r_{-1}|_m\cdot(p,\SinglePort)$, we have that $r|_m \sqsubseteq r'|_m$.

In the second case, the final packet $(p,\SinglePort)$ in $r$ was sent by a neighbor middlebox $m'$. We consider the history of middlebox $m'$ for $r_{-1}$ --- the prefix of $r$ of length $|r-1|$, denoted $h=r_{-1}|_{m'}=\langle(p_0,\SinglePort_0),\cdots,(p_l,\SinglePort_l)\rangle$. By the induction hypothesis, there exists an ordered run $r''_{-1}$ s.t. $r_{-1}|_{m'} \sqsubseteq r''_{-1}|_{m'}$, and by \lemref{PacketCanComeAgain} we get that for every packet $(p_i,\SinglePort_i)$ in $h$ from any configuration there exists an ordered run that ends in middlebox $m'$ receiving $(p_i,\SinglePort_i)$, or there exists a run that leads to a safety violation (in which case we have reached the goal of this construction and are done).

We proceed by constructing the run $r'$. We first construct the run $\tilde{r} = r'_{-1}\cdot r'_{1}\cdots r'_{l}$ where $r'_{-1}$ is the ordered run guaranteed by the induction hypothesis s.t. $r_{-1}|_m \sqsubseteq r'_{-1}|_m$, and
$r_i$ is the ordered run ending in the middlebox $m'$ receiving the packet $(p_i,\SinglePort_i)$, starting from the configuration at the end of the previous run.
The construction ensures that $r_{-1}|_m \sqsubseteq \tilde{r}|_m$ (since $r_{-1}|_m \sqsubseteq r'_{-1}|_m$). In addition, because $r_{-1}|_{m'} =\langle(p_0,\SinglePort_0),\cdots,(p_l,\SinglePort_l)\rangle \sqsubseteq \tilde{r}|_{m'}$ 
and $m'$ is increasing, $m'$ can send the packet $(p,\SinglePort)$ to $m$ after $\tilde{r}$. We obtain $r'$ by appending to $\tilde{r}$ the final event of $r$, where $m'$ sends the packet $(p,\SinglePort)$ to $m$.
Since $r'|_m =  \tilde{r}|_m \cdot (p,\SinglePort)$, $r|_m =  r_{-1}|_m \cdot (p,\SinglePort)$ and $r_{-1}|_m \sqsubseteq \tilde{r}|_m$, we get that $r|_m \sqsubseteq r'|_m$.

In particular, we can construct an ordered run in which $m$ has an aborting history.
\qed
\end{proof}

\paragraph{Progressing Middlebox}
In order to define progressing middleboxes, we define an equivalence relation between middlebox states based on their forwarding behaviour. States $q_1,q_2$ are equivalent, denoted $q_1 \approx q_2$, if $L(q_1) = L(q_2)$.
A middlebox $m$ is \emph{progressing} if it can be implemented by a transducer in which whenever the state is changed into a non-equivalent state, it will never return to an equivalent state. Formally, if $(o',q') \in \delta_m(q,(p,\SinglePort))$ and $q' \not\approx q$ (where $q,q'$ are reachable states of $m$) then for any future sequence of packets $h\in (P\times \PortSet)^*$, if $(\gamma'', q'') \in \delta_m(q',h)$ for some $\gamma''$ and $q''$, then $q'' \not\approx q$.

As opposed to increasing middleboxes, progressing middleboxes might require infinitely many states. In this case nondeterminism is essential as it allows  to support the abstraction of infinite-state middleboxes via finite-state transducers.
\begin{example}[Infinite-state progressing middlebox]\label{examp:InfMemoryEnentuallyStateless}
Consider the packet space $H\times H \times \{0,1\}$, and a deterministic middlebox $m$ with a single port whose forwarding function is defined as follows.
As long as all received packets have tag $0$, then each packet is forwarded (as is) back to the single port. When a packet with tag $1$ arrives for the first time, if the number of previous packets is prime, then all future packets are discarded.
Otherwise, all future packets are forwarded back to the single port.
Prime numbers are not recognizable by finite-state machines. Hence, there is no finite-state implementation of $m$.
On the other hand, $m$ is progressing since its state always progresses (from counting to always discarding or always forwarding).
\end{example}

Finite-state progressing middleboxes have the following useful property:
\begin{lemma}\label{lem:EventuallyStatelessDAG}
Every finite-state progressing middlebox has an implementation as a finite-state transducer whose underlying state graph has a tree structure, except for, possibly, self-loops.
\end{lemma}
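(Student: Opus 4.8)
The plan is to start from a finite-state transducer $T=(Q,q^0,\delta)$ that implements $m$ and witnesses the progressing property, and to reshape its state graph in two moves: first collapse each strongly connected component to a single self-looping state, then unfold the resulting acyclic graph into a tree. Throughout we may assume every state of $T$ is reachable from $q^0$, since deleting unreachable states does not change the forwarding relation.

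\emph{Step 1: every SCC of $T$ lies inside one $\approx$-class.} Suppose $q_1,q_2$ belong to the same strongly connected component but $q_1\not\approx q_2$, and take a cyclic walk $q_1\leadsto q_2\leadsto q_1$ inside that component. Along the sub-walk $q_1\leadsto q_2$ pick the first transition $r\to r'$ with $r\approx q_1$ and $r'\not\approx q_1$; it exists because the walk starts at a state $\approx q_1$ and ends at one $\not\approx q_1$, and it satisfies $r\not\approx r'$ (from $r\approx q_1\not\approx r'$). But the walk continues from $r'$ through $q_1$ back to $r$, so a state $\approx r$ (namely $r$ itself) is reachable in $T$ from $r'$, contradicting the progressing property applied to $r\to r'$. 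Hence all states of an SCC share a common language, which we call $\mathcal L_S$.

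\emph{Step 2: collapse SCCs.} Build $T'$ by contracting each SCC $S$ of $T$ to a single state $\sigma_S$, with the SCC of $q^0$ as initial state; for every internal transition $(q,(p,\SinglePort),o,q')$ of $S$ add a self-loop $(\sigma_S,(p,\SinglePort),o,\sigma_S)$, and for every transition of $T$ from a state of $S$ into a different SCC $S'$ add $(\sigma_S,(p,\SinglePort),o,\sigma_{S'})$. The graph of $T'$ on the SCC nodes is the condensation of $T$, hence acyclic apart from the added self-loops. One then checks $L_{T'}(\sigma_S)=\mathcal L_S$ by induction along the (well-founded) condensation order: the inclusion $\supseteq$ is immediate by simulating runs of $T$, and for $\subseteq$ one uses Step~1 — since every state of $S$ carries the same language, any self-loop taken in $T'$ can be realised by a matching transition inside $S$ without affecting which SCC is eventually left or with which output, so a run of $T'$ pulls back to a run of $T$ with the same input/output sequence. (Equivalently, one can phrase Steps~1--2 as: the $\approx$-quotient of $T$ is language-preserving, since merging states with identical languages is sound, and, by the argument of Step~1 lifted to classes, its state graph is acyclic modulo self-loops.)

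\emph{Step 3: unfold to a tree.} $T'$ is a finite transducer whose underlying graph is a finite DAG with self-loops only. Unfold it in the standard way: take as states the directed paths in this DAG starting at $\sigma_{S_0}$, let every non-self-loop transition of $T'$ extend the current path, and keep every self-loop of $T'$ as a self-loop on the corresponding node. Finiteness of the DAG together with acyclicity modulo self-loops makes the unfolding finite (its size is bounded by the number of such paths, possibly exponential — acceptable, since the lemma asserts only existence), and by construction its state graph is a tree except for self-loops; this is the required implementation. The main obstacle is the language-preservation argument in Step~2 (equivalently, soundness of the $\approx$-quotient plus acyclicity of the quotient graph): it hinges delicately on having \emph{identical} languages for all states of an SCC, rather than merely equivalent one-step behaviour, so that contracting an SCC to a single self-looping state changes neither the reachable SCCs and outputs along a run nor, ultimately, the recognised forwarding relation.
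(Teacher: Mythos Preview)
Your proof is correct and follows essentially the same approach as the paper: collapse language-equivalent states to obtain a DAG (modulo self-loops), then unfold to a tree. The paper's version is terser---it invokes a ``minimal transducer'' in which no two states share a language and observes that any non-trivial cycle would then directly contradict progressing---whereas your SCC-collapse with the explicit language-preservation argument is a more careful rendering of the same idea.
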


\begin{proof}

We show an implementation as a directed acyclic graph (DAG), possibly with self loops. The transformation to a tree is then straightforward.
Let $m$ be a minimal transducer that implements the progressing middlebox.
We consider the language $L(q)$ of each state $q$ in $m$.
Minimality ensures that no two states in $m$ have the same language (otherwise they are equivalent and can be merged). Therefore, each state $q$ represents a \emph{unique} language $L(q)$.

Towards a contradiction we assume that there is a directed loop that is not a self-loop in $m$.
A loop implies that there are two states $q_1 \not \approx q_2$ in $m$ such that $q_1$ transitions to $q_2$ by some sequence $h_2$ and $q_2$ transitions back to $q_1$ by some sequence $h_3$. Further, by minimality of $m$, $q_1$ is reachable by some sequence $h_1$.

Since $m$ is progressing, contradiction is obtained.
\qed
\end{proof}

The next lemma summarizes the hierarchy of the classes (as illustrated by Figure~\ref{Fi:Complexity}).
\begin{lemma}\label{lem:Hierarchy}
\begin{itemize}
\item Any stateless middlebox is also increasing.
\item Any increasing middlebox is also progressing.
\end{itemize}
\end{lemma}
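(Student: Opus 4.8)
The plan is to establish the two inclusions separately by unfolding the definitions; the first is immediate, the second is where an idea is needed. For \emph{stateless $\Rightarrow$ increasing}: by definition the forwarding behaviour of a stateless middlebox does not depend on its history, so for all histories $h_1,h_2$, every input $(p,\SinglePort)$ and every output $o$ we have $(h_1,(p,\SinglePort),o)\in\ForwardRel$ iff $(h_2,(p,\SinglePort),o)\in\ForwardRel$, and ``$\delta_m$ undefined on $(p,\SinglePort)$'' holds after $h_1$ iff after $h_2$. Reading the middlebox as a (deterministic) forwarding function $\ForwardFunc$ — which is anyway the setting of the increasing class, cf.\ the remark that increasing middleboxes are functions — the output after $h_1$ and after $h_2$ is literally the same set. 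Hence for any $h_1\sqsubseteq h_2$ with both sides defined on $(p,\SinglePort)$ we get $o_1=o_2$, a fortiori $o_1\subseteq o_2$, which is exactly the requirement for $m$ to be increasing.

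For \emph{increasing $\Rightarrow$ progressing}, fix an increasing middlebox $m$; by the remark it is deterministic with a forwarding function $\ForwardFunc$. As in the proof of \lemref{IncreasingIsFiniteState} I write $\ForwardFunc(h)$ for the output matrix at history $h$ (an aborting input recorded as the top element $\top$) and use the monotonicity fact proved there: $h_1\sqsubseteq h_2$ implies $\ForwardFunc(h_1)\le\ForwardFunc(h_2)$ entrywise. As the witnessing implementation I take the canonical transducer $T$ whose states are the classes of histories under the relation $h\approx h'$ defined by ``$\ForwardFunc(hw)=\ForwardFunc(h'w)$ for every $w$'' (equivalently, $L(\cdot)$-equivalence of the associated states), where the transition on $(p,\SinglePort)$ sends the class of $h$ to the class of $h\cdot(p,\SinglePort)$ and emits $\ForwardFunc(h,(p,\SinglePort))$. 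This is well defined because $\approx$ is a right congruence, and it implements $m$; moreover by \lemref{IncreasingIsFiniteState} it is finite-state when $m$ is (though this is not needed, since ``progressing'' is stated for infinite-state middleboxes as well).

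The crux is a ``no return'' property of $T$: \emph{if $h\cdot w_1\not\approx h$ then $h\cdot w_1\cdot w_2\not\approx h$ for every $w_2$.} I prove the contrapositive. Assume $h\approx h\cdot w_1\cdot w_2$, i.e.\ $\ForwardFunc(hw)=\ForwardFunc(hw_1w_2w)$ for all $w$. For every $w$ there is the subsequence chain $hw\sqsubseteq hw_1w\sqsubseteq hw_1w_2w$ (delete the $w_1$-block, then the $w_2$-block), so monotonicity gives $\ForwardFunc(hw)\le\ForwardFunc(hw_1w)\le\ForwardFunc(hw_1w_2w)=\ForwardFunc(hw)$, forcing $\ForwardFunc(hw_1w)=\ForwardFunc(hw)$ for all $w$, i.e.\ $h\cdot w_1\approx h$. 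This is precisely the progressing condition for $T$: given $(o',q')\in\delta_m(q,(p,\SinglePort))$ with $q'\not\approx q$, pick a history $h$ reaching $q$ and set $w_1=(p,\SinglePort)$, so $q'$ is the class of $h\cdot w_1$; for any continuation $w_2$ the reached state $q''$ is the class of $h\cdot w_1\cdot w_2$, and the no-return property yields $q''\not\approx q$. Hence $m$ is progressing.

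The difficulty is entirely in the second item, and the two points to get right are: (i) choosing the \emph{canonical} transducer as the witness, so that ``equivalent state'' literally means ``identical $\ForwardFunc$-future'' and the claim reduces to a statement purely about $\ForwardFunc$; and (ii) invoking the \emph{full} monotonicity of the output matrix, including stickiness of aborts ($\top$ as the top element), without which the sandwich $\ForwardFunc(hw)\le\ForwardFunc(hw_1w)\le\ForwardFunc(hw)$ would not collapse to an equality. Incidentally, the same no-return argument applied to the minimal transducer re-derives, for increasing middleboxes, the DAG-with-self-loops structure of \lemref{EventuallyStatelessDAG}.
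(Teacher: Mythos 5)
Your proposal is correct and follows essentially the same route as the paper: the first item is the same definition-unfolding, and the crux of the second item is the same subsequence sandwich $hw \sqsubseteq hw_1w \sqsubseteq hw_1w_2w$ combined with monotonicity of the (matrix-valued) forwarding function and the equality forced by returning to an equivalent state. The only differences are cosmetic — you argue by contrapositive on the canonical transducer where the paper argues by contradiction on an arbitrary deterministic implementation — and your explicit remark about the stickiness of $\top$ makes precise an assumption the paper leaves implicit.
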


\begin{proof}
The first part of the lemma is straightforward.

Consider the second part of the lemma.
Let $m$ be a deterministic transducer of an increasing middlebox and $\ForwardFunc$ is its forwarding function.
Towards a contradiction assume that $m$ is not progressing, i.e. there exist two states $q_1 \not \approx q_2$ and three histories $h_0,h_1,h_2$ s.t. $(\gamma_0,q_1) \in \delta_m(q^0, h_0)$, $(\gamma_1,q_2) \in \delta_m(q^0, h_0\cdot h_1)$ and $(\gamma_2,q_1) \in \delta_m(q^0, h_0\cdot h_1\cdot h_2)$.

Because $q_1 \not \approx q_2$, there exists a history $h$  s.t. $\ForwardFunc(h_0 \cdot h) \neq \ForwardFunc(h_0\cdot h_1 \cdot h)$, and since $m$ is increasing it must be the case that
$\ForwardFunc(h_0 \cdot h) \subset \ForwardFunc(h_0\cdot h_1 \cdot h)$.

However, since $m$ is deterministic and $h_0$ and $h_0\cdot h_1\cdot h_2$ lead to the same state, namely $q_1$, it must be that $\ForwardFunc(h_0 \cdot h)=\ForwardFunc(h_0\cdot h_1\cdot h_2 \cdot h)$ and we get that
$\ForwardFunc(h_0\cdot h_1 \cdot h) \supset \ForwardFunc(h_0\cdot h_1\cdot h_2 \cdot h)$, in contradiction to the fact that $m$ is increasing.
\qed
\end{proof}

\subsection{Syntactic Characterization of Middlebox Classes} \label{sec:classes-symbolic}
The classes of middleboxes defined above can be characterized  via syntactic
restrictions on their symbolic representation. In \secref{complexityUpperBounds}
we will use the syntactic characterization to obtain more realistic complexity
upper bounds, stated in terms of the symbolic representation rather than the
explicit state-space of middleboxes.

A middlebox representation is \emph{syntactically stateless} if it does not use any insert or remove command on any relation.
A middlebox representation is \emph{syntactically increasing} if it does not use the remove command on any relation,
does not use negated membership predicates in the guards
and all guards are mutually exclusive (i.e. no two guards can be \emph{true} at the same time).
A middlebox representation is \emph{syntactically progressing} if it does not use the remove command on any relation.

\begin{lemma}\label{lem:StatelessIffAMDLStateless}
Every stateless finite-state middlebox has an equivalent syntactically stateless symbolic representation and vice versa.
\end{lemma}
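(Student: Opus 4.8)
The plan is to establish the two implications separately, the ``vice versa'' direction being the easy one. If a symbolic representation is syntactically stateless it contains no \binsert{} and no \bremove{} command, so every command it can execute — whether an \bfrw, an \babort, a sequential composition, or a nested guarded command block built from such commands — leaves the valuation of every relation unchanged. Consequently the finite-state transducer it denotes has a single reachable state apart from the abort state, namely the initial valuation, which is exactly the defining property of a stateless middlebox (and in particular forces its forwarding relation to be history-independent).

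For the forward direction, let $m$ be a stateless finite-state middlebox with forwarding relation $\ForwardRel$. Because $m$ is stateless, $\ForwardRel$ does not depend on the history, so for each input letter $(p,\SinglePort)\in\Sigma=\AbsPackets\times\PortSet$ I may define the finite set $O(p,\SinglePort)=\{\,o\in 2^{\AbsPackets\times\PortSet}\mid (h,(p,\SinglePort),o)\in\ForwardRel\,\}$ independently of $h$; moreover $m$ aborts on $(p,\SinglePort)$ precisely when $O(p,\SinglePort)=\emptyset$. I then construct a symbolic middlebox $A$ whose body is a single guarded command block that contains, for every $(p,\SinglePort)\in\Sigma$ with $p=(s,d,t)$ and guard $g_{p,\SinglePort}$ equal to $src=s \band dst=d \band tag=t \band prt=\SinglePort$ (here $s,d,t,\SinglePort$ and all tuple components below are constants of the finite domains $H$, $T$, $\PortSet$): the command $g_{p,\SinglePort}\Rightarrow\babort$ when $O(p,\SinglePort)=\emptyset$, and otherwise, for every non-empty $o\in O(p,\SinglePort)$, the command $g_{p,\SinglePort}\Rightarrow\bfrw~o$ where $o$ is written out as its finite list of constant tuples. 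If $\emptyset\in O(p,\SinglePort)$ I handle it either by including no command for $(p,\SinglePort)$ (when $O(p,\SinglePort)=\{\emptyset\}$, relying on the default ``no guard fires yields empty output'' behaviour) or, when $\emptyset$ coexists with non-empty outputs, by adding one further command $g_{p,\SinglePort}\Rightarrow b$ with $b$ an inner guarded command block all of whose guards are unsatisfiable, which therefore produces no output. Since $A$ uses no relation-update command it is syntactically stateless. (Equivalently, one can first implement $m$ by the one-state transducer sending each $(p,\SinglePort)$ to $O(p,\SinglePort)\times\{q_0\}$ and then apply the construction of \lemref{SymbToFiniteState}, observing that there the relation $R$ stays equal to $\{q_0\}$, so the membership guard is vacuously true and the \binsert{}/\bremove{} pair is a no-op and may be deleted.)

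Finally I verify equivalence. The transducer denoted by $A$ has a single non-abort state $q$, and on input $(p,\SinglePort)$ the guards true at $q$ are exactly the copies of $g_{p,\SinglePort}$; executing a nondeterministically chosen one of them aborts iff $O(p,\SinglePort)=\emptyset$ and otherwise emits an arbitrary element of $O(p,\SinglePort)$ — this agrees with $\ForwardRel$ on every input, and since both $A$ and $m$ are history-independent their forwarding relations coincide, so $A$ is equivalent to $m$. The only delicate point in the whole argument is the bookkeeping for the empty output set, forced by the fact that the grammar's $\bfrw$ takes a non-empty list of tuples; everything else is a routine unfolding of the symbolic semantics, so I do not expect a genuine obstacle.
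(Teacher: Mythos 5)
Your proof is correct and follows the same idea as the paper, which simply declares the lemma trivial on the grounds that both the single-state transducer and the update-free symbolic program describe a fixed forwarding table. Your version merely spells out that observation in detail, including the careful (and accurate) handling of the empty output set, which the paper does not bother to address.
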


\begin{proof}
The lemma is trivial for stateless middleboxes, as both the transducer and the symbolic representation simply describe a fixed forwarding table.
\qed
\end{proof}

\begin{lemma}\label{lem:IncreasingIsAMDLIncreasing}
Every increasing finite-state middlebox has an equivalent syntactically increasing symbolic representation and vice versa.
\end{lemma}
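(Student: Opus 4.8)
The two directions are of very different difficulty, and I would treat them separately.

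\emph{The ``vice versa'' direction} (a syntactically increasing representation denotes a semantically increasing middlebox) is the routine one. The plan is to prove, by induction on the structure of commands, the following: if a command $c$ is executed from two relation valuations $V_1 \subseteq V_2$, then either the execution from $V_2$ aborts, or both executions terminate normally with resulting valuations $V_1' \subseteq V_2'$ and outputs $o_1 \subseteq o_2$. The cases $\bfrw$, $\binsert$, $\babort$ are immediate; sequential composition chains the hypothesis and takes unions of outputs; and for a guarded command block the key point is that, since guards contain no negation, a guard true at $V_1$ stays true at $V_2$, and mutual exclusivity of the block's guards forces it to remain the \emph{unique} true guard, so the same branch runs (and if no branch ran at $V_1$, its empty output is trivially dominated). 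A separate trivial induction shows no command ever deletes a tuple. Applying the structural lemma to the main guarded command block and inducting on $|h_2|$ then gives $V(h_1) \subseteq V(h_2)$ whenever $h_1 \sqsubseteq h_2$; feeding one more packet and invoking the lemma once more yields exactly the defining inequality of $\ForwardRel$ (vacuous when the $h_2$-side aborts).

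\emph{The forward direction} (every increasing finite-state middlebox $m$ has a syntactically increasing representation) is the substantial one. Since increasing middleboxes are deterministic, I take $m$ together with the matrix machinery from the proof of \lemref{IncreasingIsFiniteState}: the forwarding function $\ForwardFunc$, the finitely many output matrices $A_1,\dots,A_n$, and for each $i$ a finite $\sqsubseteq$-basis $\mathcal{B}_i$ of the upward-closed set $h(\mathit{UP}(A_i))$. That proof already supplies the two facts I need: $h \in h(\mathit{UP}(A_i))$ iff $w \sqsubseteq h$ for some $w \in \mathcal{B}_i$, and $\ForwardFunc(h)$ is the $\leq$-maximum of $\{A_i : h \in h(\mathit{UP}(A_i))\}$, a maximum attained because $\ForwardFunc(h)$ is itself one of the matrices. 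Writing $(A)_a$ for the entry of a matrix at a concrete input $a \in P\times\PortSet$, it follows that $\ForwardFunc(h,a) = \bigcup\{(A_i)_a : \exists w \in \mathcal{B}_i,\ w \sqsubseteq h\}$ (an entrywise computation: each such $A_i$ is $\leq \ForwardFunc(h)$, and $\ForwardFunc(h)$ occurs in the family), and $m$ aborts on $h\cdot a$ exactly when some $\top$ occurs in that union.

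The symbolic program then only needs to track, for each basis word $w = b_1\cdots b_\ell$ ranging over $\bigcup_i \mathcal{B}_i$, how long a prefix of $w$ has been matched as a subsequence of the history, and to do so using insertions only. I use fresh constants $c_{w,0},\dots,c_{w,\ell}$ (with $c_{w,0}$ present initially) and a single unary relation $\mathit{Prog}$, with intended invariant $c_{w,t}\in\mathit{Prog}$ iff $b_1\cdots b_t \sqsubseteq h$; this property is monotone under $\sqsubseteq$, so no $\bremove$ is ever needed. The main block has one guarded command per concrete input $a$, whose guard pins the four packet fields to the components of $a$ (so these guards are pairwise mutually exclusive); its body is the sequential composition of an output phase and an update phase. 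The output phase is, for each basis word $w$, a singleton block with guard $c_{w,\ell}\in\mathit{Prog}$ and command $\bfrw\,\bigl(\bigcup\{(A_i)_a : w\in\mathcal{B}_i\}\bigr)$, with that command replaced by $\babort$ when the union contains $\top$; since $\bfrw$ outputs accumulate by union, the net output is exactly $\ForwardFunc(h,a)$, read against the pre-state since the update phase runs afterwards. The update phase is, for each $w$ and each index $t$ with $b_t = a$, processed in \emph{decreasing} order of $t$, a singleton block with guard $c_{w,t-1}\in\mathit{Prog}$ and command $\mathit{Prog}.\binsert\,c_{w,t}$; the decreasing order is what keeps a single input from advancing a counter by more than one step (which matters only when $w$ has equal consecutive letters), because when $c_{w,t-1}$ is tested, only counters of larger index have been touched. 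Every guard is positive and every block is a singleton except the mutually exclusive main block, so the representation is syntactically increasing; a routine induction on $|h|$ confirms the $\mathit{Prog}$-invariant and hence that the program realises exactly the forwarding function of $m$, aborts included.

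I expect the main obstacle to be the two encoding ideas of the forward direction and their compatibility with the syntactic restrictions: realising the subsequence-matching automaton with insertions only, rather than state changes, while keeping each input's advance bounded; and emitting the monotone output as a union of guarded $\bfrw$ commands inside a sequential composition rather than inside a mutually exclusive block --- the latter being sound precisely because, by \lemref{IncreasingIsFiniteState}, the family of matrices active at a given history has an \emph{attained} maximum, so its coordinatewise union is again a matrix, namely $\ForwardFunc(h)$. The remaining fiddly point is keeping the $\top$/abort bookkeeping consistent with the (somewhat informal) treatment of abort in \lemref{IncreasingIsFiniteState}.
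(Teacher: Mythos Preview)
Both directions of your proposal are correct. The converse direction is essentially the paper's argument made rigorous: the paper asserts in two sentences that mutual exclusivity gives determinism, positivity plus insert-only gives $R^{h_1}\subseteq R^{h_2}$ for $h_1\sqsubseteq h_2$, and positivity again gives monotone output; your structural induction on commands fills in exactly these steps.

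The forward direction, however, follows a genuinely different route from the paper. Rather than going through the Higman bases of \lemref{IncreasingIsFiniteState}, the paper invokes \lemref{Hierarchy} (increasing $\Rightarrow$ progressing) and \lemref{EventuallyStatelessDAG} to assume without loss of generality that the transducer has a \emph{tree} shape. The symbolic program then keeps a single unary relation $R$ of visited states, initialised to $\{q_0\}$; on input $p$ it executes, as a sequential composition of singleton guarded blocks, ``for every $q_i\in R$: insert $q_i\!\to_p$'' followed by ``for every $q_i\in R$: output $q_i(p)\setminus \mathit{pre}(q_i)(p)$''. Since the visited states always form a root-to-leaf path and the outputs along that path are $\subseteq$-monotone, this union telescopes to the current state's output $q_k(p)$.

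What each approach buys: the paper's construction is linear in the number of transducer states and its correctness proof is a one-line telescoping-sum argument, but it needs the detour through the progressing/tree lemmas. Your construction works directly from the subsequence-closure characterisation of increasing behaviour and avoids that detour; the cost is that the Higman bases are non-constructive and potentially large, and you must separately justify the per-coordinate union identity $\ForwardFunc(h,a)=\bigcup\{(A_i)_a:\exists w\in\mathcal{B}_i,\,w\sqsubseteq h\}$, the decreasing-index ordering in the update phase, and the abort bookkeeping. All three are handled correctly in your outline.
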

\begin{proof}
We first show that every increasing finite-state middlebox has an equivalent syntactically increasing symbolic representation.
Let $m$ be an increasing finite-state middlebox implemented by a deterministic transducer with state set $Q=\{q_0,\dots,q_n\}$, where $q_0$ is the initial state. By Lemma~\ref{lem:Hierarchy} and Lemma~\ref{lem:EventuallyStatelessDAG} we may assume w.l.o.g that the underlying graph of $m$ is a tree.
We construct a symbolic program $A$ with one unary relation $R$ over the constants $q_{-1},q_0,\dots,q_n$. Initially $R = \{q_0\}$.
To describe $A$ we need the next three notations. To reduce the notational burden, we use packets $p$ instead of pairs $(p,pr)$ of a packet and an input port.
For a state $q_i$ and a packet $p$ we denote the successor state of $q_i$ according to packet $p$ by $q_i\to_p$ (we note that possibly $q_i\to_p = q_i$).
The successor state is unique since the transducer is deterministic.
We denote by $q_i(p)$ the output of $m$ when $m$ is in state $q_i$ and packet $p$ is received.
We denote the (single) predecessor of $q_i$ in the tree by $\mathit{pre}(q_i)$
(we note that in case the state $q_i$ has a self loop, the predecessor function
returns the unique predecessor of $q_i$ that is not $q_i$. i.e.,
$\mathit{pre}(q_i)=q_j$ s.t. $q_i\neq q_j$).
For uniformity, we assume that the root $q_0$ also has a predecessor, namely, $q_{-1}$ with $q_{-1}(p) = \emptyset$ for every packet $p$.

We now describe how $A$ processes a packet $p$:
\begin{itemize}
\item \emph{Relation update.} For every $q_i \in R$: insert $q_i\to_p$ to $R$.
\item \emph{Output.} For every $q_i \in R$: output $q_i(p) \setminus \mathit{pre}(q_i)(p)$.
\end{itemize}
We first observe that $A$ can be implemented as a syntactically increasing program.
Indeed, the ``for every'' loops can be replaced by a sequential composition of finitely many guarded commands
consisting of positive relation membership queries, and only $\binsert$ update operations.

We now show that the forwarding behaviours of $A$ and $m$ are identical and hence $A$ is indeed a correct symbolic representation of $m$.
Let $h$ be an arbitrary history and let $p$ be an arbitrary packet.
By a simple induction we get that the states in the relation $R$ are exactly the states that $m$ visited while processing the history $h$.
We assume w.l.o.g that the set of visited states (after history $h$) is $\{q_0,\dots,q_k\}$ and that $q_i = \mathit{pre}(q_{i+1})$.
We prove, by induction on $k$, that the outputs of $m$ and $A$ are identical.
In the base case $k=0$, and the proof follows as we defined $\mathit{pre}(q_{0})(p)=q_{-1}$ and $q_{-1}(p)=\emptyset$.
For $k>1$, we observe that since $m$ is increasing and a prefix is also a subsequence, then $q_{k-1}(p)\subseteq q_{k}(p)$.
Hence, $q_{k}(p) = (q_{k}(p) \setminus q_{k-1}(p)) \cup q_{k-1}(p)$.
By the induction hypothesis, we get that $A$ first outputs $q_{k-1}(p)$, and by the implementation of $A$, we get that it then outputs $q_{k}(p) \setminus q_{k-1}(p)$.
Hence, overall $A$ outputs $q_k(p)$, and the proof of the claim is complete.

To conclude, we proved that $A$ is a syntactically increasing symbolic representation of $m$.

For the converse direction, we show that the forwarding behaviour of a middlebox given via a syntactically increasing symbolic representation is increasing.
Let $A$ be a syntactically increasing symbolic program. For simplicity we assume that $A$ has only one relation $R$.
The mutually exlusive guard requirement implies determinstic execution. Consequently, for a history $h$ we can denote by $R^{h}$ the unique content of relation $R$ after $h$.
We claim that if $h_1 \sqsubseteq h_2$, then $R^{h_1}\subseteq R^{h_2}$.
The proof follows from the fact that all the guards in $A$ have positive conditions and from the fact that elements are only added to the relation.
As the forwarding behaviour depends only on the state of the relation, and since all conditions are positive, we get that the forwarding behaviour is increasing.
\qed
\end{proof}

\begin{lemma}\label{lem:AMDLCyclicIffAcyclic}
Every progressing finite-state middlebox has an equivalent syntactically progressing symbolic representation and vice versa.
\end{lemma}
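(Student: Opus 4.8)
The claim is an equivalence, so I would prove its two implications separately, expecting ``syntactically progressing $\Rightarrow$ progressing'' to be the one needing care. For ``progressing $\Rightarrow$ syntactically progressing'', the plan is to start from the normal form of \lemref{EventuallyStatelessDAG}: assume w.l.o.g.\ that the finite-state progressing middlebox $m$ is given by a transducer whose underlying state graph is a tree with, possibly, self-loops, with state set $Q=\{q_0,\dots,q_n\}$ rooted at $q_0$. I would use one unary relation $R$ over the constants $q_0,\dots,q_n$, initialized to $\{q_0\}$, and maintain the invariant that after a history $h$ the relation $R$ holds exactly the states $m$ visits along $h$; since the graph is a tree this set is always the root-to-current path, so it only grows and is never shrunk. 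The current state is recoverable from $R$ as the unique $q_i\in R$ none of whose proper tree-children lies in $R$, which is expressible as a guard because guards may use equalities and (possibly negated) membership predicates, and a syntactically progressing representation forbids only $\bremove$. For every state $q_i$, input packet--port pair, and $m$-transition out of $q_i$ on that pair I add a guarded command whose guard conjoins ``$q_i$ is the current state'' with ``the input is that pair'' and whose body does $R.\binsert$ of the corresponding successor of $q_i$ (a no-op if that edge is a self-loop) followed by $\bfrw$ of the prescribed output; when $\delta_m$ is undefined on a pair the body is $\babort$. The executable commands then correspond precisely to $m$'s transitions from the current state on the current input (faithfully reflecting any nondeterminism of $m$), and a routine induction on $|h|$ yields the invariant and that outputs agree with $m$. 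As the program uses no $\bremove$, it is syntactically progressing.

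For ``syntactically progressing $\Rightarrow$ progressing'', the structural observation is that such a program performs only insertions, so in its induced ``natural'' transducer $N$ --- whose states are valuations of the relations $R_1,\dots,R_k$ --- every transition is monotone: $q\to q'$ implies $q\subseteq q'$ componentwise. Since the relations range over finite domains there are finitely many valuations, so along every infinite input the valuation stabilizes after finitely many steps. The plan is to lift this to the language level --- the forwarding behavior also stabilizes along every input --- which means the minimal automaton recognizing the forwarding language of $N$ is acyclic except for self-loops; taking that automaton as the implementing transducer settles progressing, since it is minimal (so $\approx$ is equality on its states) and a transition $q\to q'$ with $q'\not\approx q$ then cannot lie on a cycle, so from $q'$ one never returns to $q$, hence never to an $\approx$-equivalent state.

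The delicate step is exactly that ``lift to the language level''. Monotone growth of valuations is not monotone growth of the state language --- the extra tuples of a larger valuation may be behaviorally dead, so a strictly larger valuation can be $\approx$-equivalent to a smaller one --- and the nondeterminism of guarded-command blocks leaves no single run to track. I would handle it by passing to the determinization of $N$ viewed as an automaton over the finite alphabet $\Sigma\times\Gamma$: a macro-state is a set of valuations, their union still grows monotonically and hence stabilizes along the given input, and one argues that once the union has stabilized the macro-state itself stabilizes, so the associated residual language stabilizes. Converting this into ``the minimal automaton has no non-self cycle'' --- equivalently, ruling out a run $q\to q'\leadsto q''$ with $q'\not\approx q$ and $q\subsetneq q''$ yet $q''\approx q$ --- is the main obstacle, and is where the bookkeeping about monotone growth, finiteness, and the determinized state space is spent.
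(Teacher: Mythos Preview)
For ``progressing $\Rightarrow$ syntactically progressing'' your plan coincides with the paper's: both invoke \lemref{EventuallyStatelessDAG} to pass to a tree-shaped transducer, use a single unary relation $R$ over $\{q_0,\dots,q_n\}$ that accumulates the visited states, and recover the current state as the unique deepest element of $R$ (your ``$q_i\in R$ with no child in $R$'' is exactly the paper's ``maximal state in $R$ in topological order''). This direction is essentially identical.

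For the converse the paper is far terser than you. It takes the natural transducer $N$ (states $=$ valuations of the relations), observes that without $\bremove$ every transition is $\subseteq$-monotone so the underlying graph of $N$ is a DAG with self-loops, and stops there, implicitly taking $N$ itself as the witnessing implementation. You are right that this glosses over the $\approx$ issue: one can write a syntactically progressing program whose natural transducer has reachable valuations $q\subsetneq q'\subsetneq q''$ with $L(q)=L(q'')\neq L(q')$, so that $q\to q'$ with $q'\not\approx q$ yet $q'\leadsto q''\approx q$; the natural transducer then violates the progressing condition even though the underlying middlebox \emph{is} progressing via a different (smaller) implementation. So your instinct that this direction needs more than the paper supplies is well-founded, and your move to a minimized implementation is the natural way to close it.

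That said, your specific repair also has a gap. In the subset construction over $\Sigma\times\Gamma$, an $N$-state in the current macro-state that has no transition on the current input--output letter is simply dropped, so the union of the valuations in the macro-state need \emph{not} grow monotonically (e.g.\ the macro-state $\{\{a\},\{b\}\}$ can step to $\{\{b,c\}\}$, whose union has lost $a$). The stabilization argument you build on that monotonicity therefore does not go through as stated; you would need a different monotone invariant, or a different witnessing transducer, to finish this direction.
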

\begin{proof}
We first show that every progressing finite-state middlebox has an equivalent syntactically progressing symbolic representation.
Let $m$ be a progressing finite-state middlebox, and by Lemma~\ref{lem:EventuallyStatelessDAG} we may assume w.l.o.g that the underlying state graph of $m$ is a tree.
Let $Q = \{q_0,\dots,q_n\}$ be the states of $m$.
We construct a symbolic program $A$ similarly to the construction in the proof of \lemref{IncreasingIsAMDLIncreasing}
(with one unary relation $R$ over the constants $q_0,\dots,q_n$, where initially $R = \{q_0\}$, and where $R$ accumulates the traversed states).
When a packet $p$ is processed, the program identifies the current state by computing a maximal (according to topological order) state $q_i$ in $R$ (this is implemented using a guard for every path from the tree root to each state in the state tree).
It then adds $q_i\to_p$ to $R$ and outputs $q_i(p)$.
Since $m$ is a tree,  there always exists exactly one maximal state in $R$, and we get that $A$ always simulates $m$ correctly.

For the converse direction, we show that the forwarding behaviour of a middlebox given via a syntactically progressing symbolic representation is progressing.
Let $A$ be a syntactically progressing symbolic program. For simplicity we assume that $A$ has only one relation $R$.
We recall that the domain of $R$ is always finite, and thus it has only a finite number of different states (interpretations).
We construct a middlebox $m$ whose states are exactly the states of $R$, and the forwarding function is exactly according to those states.
As $A$ is progressing, we get that elements are only added to $R$, and thus the underlying graph of $m$ is progressing.
\qed
\end{proof}

\subsection{Examples}\label{sec:Examples}
In this section, we introduce several middleboxes, each of which resides in one of  the classes of the hierarchy presented above.

\paragraph{ACL Switch}
An \emph{ACL switch} has a fixed access control list (ACL) that indicates which packets it should forward and which packets it should discard.
Typically the rules in the list refer to the port number or to hosts that are allowed to use a certain service.
As such, the forwarding policy of an ACL switch is based only on the source host and/or ingress port of the current packet, and does not depend on previous packets.
Hence, an ACL switch can be implemented by a stateless middlebox.

\paragraph{Hole-Punching Firewall}
A \emph{hole-punching firewall} is described in Example~\ref{Ex:firewall}.
As the set of trusted hosts depends on the history of the middlebox, a hole punching firewall cannot be captured by a stateless middlebox. (Formally, given two different histories, the forwarding function might produce a different output for the same packet and port.)

However, a hole punching firewall is an increasing middlebox. This follows since for every source host $s$ and two histories $h_1\sqsubseteq h_2$, if $s$ is trusted according to $h_1$, then it is also trusted according to $h_2$.
The proof of the latter is by induction on $|h_1|$.
In the base case $|h_1|=0$, and therefore $s$ is in the initial list of trusted hosts (and therefore, it is trusted also in $h_2$).
If $|h_1|>0$, then $h_1 = h_1' \cdot (p,pr)$. We consider two distinct cases:
In the first case $s$ was trusted before the last packet $p$ in $h_1$ was received.
Hence, by the induction hypothesis we get that $s$ is trusted also in $h_2$.
In the second case $s$ became trusted only after the last packet $p$ was processed.
In this case, $p$ had a trusted source host $s_1$ (according to $h_1'$) with destination $s$. Since $h_1\sqsubseteq h_2$, there exist $h_2', h_2''$ such that $h_2 = h_2' \cdot (p,pr) \cdot h_2''$ and $h_1' \sqsubseteq h_2'$.
By the induction hypothesis, the source host $s_1$ of the last packet $p$ is also trusted according to $h_2'$, and therefore $s$ is trusted also in $h_2' \cdot (p,pr)$. As the set of trusted hosts never decreases, $s$ remains trusted in $h_2$.

\paragraph{Learning Switch}
A \emph{learning switch} dynamically learns the topology of the network and constructs a routing table accordingly.
Initially, the routing table of the switch is empty.
For every host $h$ the switch remembers the first port from which a packet with source $h$ has arrived.
When a packet arrives, if the port of the destination host is known, then the packet is forwarded to that port;
otherwise, the packet is forwarded to all connected ports excluding the input-port.

A learning switch is a progressing middlebox. Intuitively, after the middlebox's forwarding function has changed to incorporate the destination port for a certain host $h$, it will never revert to a state in which it has to flood a packet destined to $h$. A learning switch is however, not an increasing middlebox, as packets destined to a host whose location is not known are initially flooded, but after the location of the host is learned, a single copy of all subsequent packets is sent.

\figref{Learning} depicts a symbolic representation of a learning switch that uses a binary relation \tconnected\ storing connections between hosts and ports.
If the port of the destination host is known, then the packet is forwarded to that port;
otherwise, the packet is forwarded to all connected ports excluding the input-port.
The last command in the program is a syntactic shorthand used to avoid the explicit enumeration of incoming ports required to correctly perform the flood operation.

\begin{figure}
\begin{tabbing}
mm\=mm\=mm\=mm\=mm\=\kill
\> $\brcv(src, dst, tag, prt):$\\
\>\>$\neg~((dst, prt)~\bin~\tconnected)~\Rightarrow$\\
\>\>\>$\tconnected.\binsert (src, prt);$ // remember src's port\\
\>\>$(dst, 1)~\bin~\tconnected~\Rightarrow~\bfrw~ \{(src,dst,tag,1)\}$\\
\>\>$(dst, 2)~\bin~\tconnected~\Rightarrow~\bfrw~ \{(src,dst,tag,2)\}$\\
\>\>$(dst, 3)~\bin~\tconnected~\Rightarrow~\bfrw~ \{(src,dst,tag,3)\}$\\
\>\>$\neg~((dst, 1)~\bin~\tconnected)~\land~$\\
\>\>\>$\neg~((dst, 2)~\bin~\tconnected)~\land~$\\
\>\>\>$\neg~((dst, 3)~\bin~\tconnected)~\Rightarrow$\\
\>\>\>\>$\bfrw~\{(src,dst,tag,oprt) \mid oprt~\bin~\tallPorts~\band~oprt \neq prt\}$ 	// flood\\
\end{tabbing}
\caption{\label{Fi:Learning}%
A learning switch with three ports.}
\end{figure}

\paragraph{Proxy Server}
The \emph{Proxy server} as described in Example~\ref{Ex:proxy} is a progressing middlebox. After it has stored a response, it nondeterministically replies with the stored response, or sends the request to the server again.
Once a new request is responded by a proxy the forwarding behaviour changes as it takes into account the new response, and it never returns to the previous forwarding behaviour (as it does not ``forget'' the response).
This example demonstrates how nondeterminism is used to model middleboxes whose concrete behaviour depends on packet payloads. In a concrete network model that does not abstract away the packet payload, the proxy middlebox would always reply to a request with a stored response and never forward it to the server.

\paragraph{Round-Robin Load Balancer}
A \emph{load balancer} is a device that distributes network traffic across a number of servers.
In its simplest implementation, a round-robin load balancer with $n$ out-ports (each connected to a server) forwards the $i$-th packet it receives to out-port $i\pmod n$.
Round-robin load balancers are not progressing middleboxes, as the same forwarding behaviour repeats after every cycle of $n$ packets.

\figref{Load} depicts a symbolic representation of a round-robin load balancer with 3 ports: port 0 is an `input' port, and ports 1 and 2 are `output' ports on which the load balancer splits the incoming traffic. It uses a unary relation {\tnextport} to hold the port to which the next packet is to be sent.

\begin{figure}
\begin{tabbing}
mm\=mm\=mm\=\kill
\> $\brcv(src, dst, tag, prt):$\\
\>\>$(prt=0\land(1)~\bin~\tnextport)~\Rightarrow~\bfrw~ \{(src,dst,tag,1)\}$\\
\>\>$(prt=0\land(1)~\bin~\tnextport)~\Rightarrow~\tnextport.\bremove~ 1$\\
\>\>$(prt=0\land(1)~\bin~\tnextport)~\Rightarrow~\tnextport.\binsert~ 2$\\
\>\>$(prt=0\land(2)~\bin~\tnextport)~\Rightarrow~\bfrw~ \{(src,dst,tag,2)\}$\\
\>\>$(prt=0\land(2)~\bin~\tnextport)~\Rightarrow~\tnextport.\bremove~ 2$\\
\>\>$(prt=0\land(2)~\bin~\tnextport)~\Rightarrow~\tnextport.\binsert~ 1$\\
\>\>$(prt=1\lor~prt=2)~\Rightarrow~\bfrw~ \{(src,dst,tag,0)\}$\\
\end{tabbing}
\caption{\label{Fi:Load}%
A 3-port round-robin load-balancer.}
\end{figure}

\begin{remark}
In practice, middlebox behaviour can also be affected by timeouts and session termination. For example, in a firewall, a trusted host may become untrusted when a session terminates (which makes the firewall behaviour no longer increasing).
Similarly, cached content of a cache server expires after a certain period of time (which violates progress).
In this work, we do not model timeouts and session termination.
\end{remark} 
\section{Lower Bounds on Complexity of Safety w.r.t. the Unordered Semantics} \label{sec:complexityLowerBounds}

When considering the unordered network semantics, the safety problem becomes decidable for networks with finite-state middleboxes.
In this section, we analyze its complexity lower bounds.
The complexity bounds are w.r.t the input size, namely,
(i)~the number of hosts; (ii)~number of middleboxes; and (iii)~the encoding size of the middleboxes functionality, i.e., the size of the explicit state machine (if the encoding is explicit) or the number of characters in the symbolic representation (if the encoding is symbolic).

In \secref{complexityUpperBounds} we present matching upper bounds for networks represented symbolically. Since symbolic representations are at least as succinct as  explicit-state descriptions of finite-state middleboxes, all the lower bounds obtained for the explicit finite-state model apply for the symbolic one as well, and all the upper bounds obtained for the symbolic model are applicable to the explicit finite-state model, resulting in tight complexity bounds, both for explicit finite-state middleboxes and for symbolic ones.

We obtain lower bounds for the safety verification problem by considering the isolation problem. Recall that the isolation problem reduces to a safety problem by the introduction of isolation middlebox. Since isolation middleboxes are stateless, they do not change the class of the input network. We can therefore  deduce that the same lower bounds also hold for the more general safety problem.

\subsection{Unordered Safety in Progressing Networks is coNP-hard.}
\begin{lemma}\label{lem:progressingIsNPHard}
The isolation problem w.r.t. the unordered network semantics for a progressing network is coNP-hard.
\end{lemma}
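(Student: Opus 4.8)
The plan is to prove coNP-hardness by a polynomial reduction from $\textsc{Unsat}$ (the set of unsatisfiable CNF formulas), which is coNP-complete. Since \emph{isolation} is by definition the dual of \emph{reachability}, it suffices to build, for an arbitrary CNF formula $\varphi$, a network $\Net$ together with a target host $d$ and a forbidden packet, such that some run of $\Net$ (under the unordered semantics) delivers the forbidden packet to $d$ if and only if $\varphi$ is satisfiable. Then isolation holds in $\Net$ exactly when $\varphi\in\textsc{Unsat}$, giving the bound. Moreover, the network I construct must be \emph{progressing}; by \lemref{AMDLCyclicIffAcyclic} it is enough to describe its middleboxes symbolically using only $\binsert$, $\bfrw$ and $\babort$ (no $\bremove$), since such a representation is syntactically progressing.

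Concretely, let $\varphi = C_1 \wedge \dots \wedge C_m$ over variables $x_1,\dots,x_n$. I would use a network with exactly two hosts, an initiator $s$ and the target $d$, and a single middlebox $M$ with port $0$ towards $d$ and port $1$ towards $s$. The middlebox $M$ keeps one unary relation $\mathit{Asn}$ over the constants $x_1,\dots,x_n$, initially empty, with the reading ``$x_i \bin \mathit{Asn}$ iff $x_i$ is \emph{true}'' (variables never inserted are read as \emph{false}). The tag set is $\{\mathtt{set}_i : i\in[n]\}\cup\{\mathtt{go},\mathtt{bad}\}$; the initiator may send $(s,d,\mathtt{set}_i)$ for each $i$ and $(s,d,\mathtt{go})$, while $d$ sends nothing. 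The program of $M$ is one guarded-command block: for each $i$ the guard $\mathit{tag}=\mathtt{set}_i$ triggers $\mathit{Asn}.\binsert\,x_i$ with no output; and the guard $\mathit{tag}=\mathtt{go}\ \band\ \bigwedge_{j=1}^{m}\psi_j$ triggers $\bfrw~\{(s,d,\mathtt{bad},0)\}$, where $\psi_j$ encodes ``$C_j$ is satisfied by $\mathit{Asn}$'': a disjunct $x_a \bin \mathit{Asn}$ for each positive literal $x_a$ of $C_j$ and a disjunct $\bnot(x_b \bin \mathit{Asn})$ for each negative literal $\neg x_b$. This program has size $O(n+|\varphi|)$, uses no $\bremove$, so $M$ is progressing and the whole network is progressing; the forbidden set is $\{(s,d,\mathtt{bad})\}$ at host $d$.

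Correctness then goes through in both directions. If $\varphi$ is satisfied by an assignment $\sigma$, the run in which $s$ first sends $\mathtt{set}_i$ for exactly the $i$ with $\sigma(x_i)=\mathit{true}$ and then sends $\mathtt{go}$ drives $M$ to a state where $\mathit{Asn}$ encodes $\sigma$ and every $\psi_j$ holds, so $M$ emits $\mathtt{bad}$ to $d$ and isolation fails. Conversely, $M$ emits $\mathtt{bad}$ only while processing a $\mathtt{go}$ packet in a state whose relation $\mathit{Asn}$ makes all $\psi_j$ true, and that relation directly defines a satisfying assignment of $\varphi$. The reduction is clearly polynomial. Two things make the argument robust for free: $M$ never forwards anything back towards $s$, so the network has no forwarding loops; and $M$'s reaction to $\mathtt{go}$ depends only on the accumulated contents of $\mathit{Asn}$, not on the order in which packets arrived on its ingress channel, so the unordered (multiset) channel semantics changes nothing.

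The main obstacle in this style of reduction is precisely ensuring that the asynchronous, unordered channel semantics cannot manufacture a spurious accepting run; in the one-middlebox construction above this essentially evaporates, and the only residual checks are that restricting $P_s$ to the intended tags is legitimate (it is, by the definition of the model) and that processing a $\mathtt{go}$ packet whose $\mathit{Asn}$ is not yet satisfying is harmless (it produces no output and no state change, so $s$ may simply send more $\mathtt{set}_i$ packets and try again — this can only help reach $d$, never wrongly). If one instead prefers a more ``realistic'' network — one nondeterministic variable-middlebox per $x_i$ that guesses and remembers its truth value, and a sequential controller middlebox that interrogates them clause by clause, each state being a (clause index, literal index) pair so the controller only moves forward and stays progressing — then the obstacle returns sharply: a stale or duplicated acknowledgement on an unordered channel could be read as the current one. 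The fix is to tag every query/acknowledgement with the clause-and-literal index it concerns and to send the controller to a sink state on any off-protocol packet; and if forwarding-loop freeness is additionally wanted one would graft on the repeater-host device from the proof of \thmref{OrderedIsUndec}.
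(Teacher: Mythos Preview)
Your proof is correct, but takes a genuinely different route from the paper's. The paper reduces from Hamiltonian Path: given a directed graph $G=(V,E)$ with source $s$ and target $t$, it builds a network whose middleboxes are \emph{flood-once} gadgets (one per vertex), each of which, on its first received packet, increments a numeric tag and floods, and thereafter drops everything. A packet tagged $|V|$ reaches the target host iff $G$ has a Hamiltonian $s$--$t$ path. Thus the paper pushes all the combinatorial content into the \emph{topology} while keeping each middlebox extremely simple; it even strengthens the result (in \lemref{LearningSwitchIsNPHard}) to networks built only from stateless middleboxes and learning switches, emphasising that the hardness arises with ``standard'' components.

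Your reduction goes the opposite way: trivial topology (two hosts, one middlebox), with the entire CNF formula compiled into a single guard. This is shorter and more direct for the bare complexity claim, and it sidesteps the unordered-channel subtleties almost entirely since there is only one channel that matters. What it gives up is the ``realistic middleboxes'' flavour: your middlebox is a bespoke SAT-checker rather than something one would find in a network. Both approaches are valid proofs of \lemref{progressingIsNPHard}; the paper's buys a stronger statement about which middlebox classes already suffice for hardness, while yours buys simplicity. Your closing paragraph, sketching the multi-middlebox variant and the tagging discipline needed to tame unordered channels there, is a nice bridge toward the paper's style of argument.
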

\begin{proof}
We show a reduction  from the (NP-hard) Hamiltonian Path problem to the reachability problem, which is the complement of the isolation problem.
Recall that the Hamiltonian Path problem is given a directed graph $G(V,E)$, a source vertex $s \in V$ and a target vertex $t \in V$, and it determines whether there is a simple path from $s$ to $t$ in $G$ with length $|V|$.

In the reduction, we use
\emph{flood-once} middleboxes that upon receiving a packet with a numeric tag (from a finite domain) increment the packet tag and flood the new packet. All following packets that arrive at the middlebox are discarded. These flood-once middleboxes are finite-state progressing middleboxes.

We construct a network with a single flood-once middlebox for every vertex in the graph and connect them in accordance with the edges in the graph.
In addition, we create two hosts $h_{s}$ and $h_{t}$ and connect them to the middleboxes representing the source and target in the graph.
We use packet tags $\{0,\ldots,|V|\}$.
Host $h_{s}$ sends packets with tag $0$. The reachability problem is to determine whether $h_{t}$ can receive a packet with tag $|V|$.

The flood once middleboxes ensure that the packet tags `count' the length of the path. Thus, a Hamiltonian Path corresponds to a packet with the tag $|V|$ arriving at the destination host $h_{t}$, and the correctness of the reduction follows.
\qed
\end{proof}

The following lemma shows that a similar result can be obtained using more ``standard'' middleboxes, namely, stateless middleboxes and learning switches.
\begin{lemma}\label{lem:LearningSwitchIsNPHard}
The isolation problem w.r.t. the unordered network semantics for a network where each middlebox is either stateless or a learning switch is coNP-hard.
\end{lemma}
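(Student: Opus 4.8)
The plan is to reduce the (NP-hard) Hamiltonian Path problem to the \emph{reachability} problem, the complement of isolation; this makes reachability NP-hard and hence isolation coNP-hard. Given a directed graph $G(V,E)$ with $s,t\in V$ and $n=|V|$, I construct a polynomial-size network in which every middlebox is either stateless or a learning switch, together with a monitored host $h_t$, such that in some run of the unordered semantics $h_t$ receives a packet tagged $n$ iff $G$ has a Hamiltonian path (a simple $s$--$t$ path through all $n$ vertices). Packet tags are drawn from $\{0,1,\dots,n\}$, and ``incrementing a tag'' is realized, exactly as for the flood-once middleboxes of \lemref{progressingIsNPHard}, by a finite block of guarded commands over these constants.

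The heart of the construction is to re-implement the flood-once middleboxes of \lemref{progressingIsNPHard} out of a learning switch plus stateless glue. For each $v\in V$ I use a learning switch $L_v$ and a stateless \emph{entry} middlebox $e_v$. The middlebox $e_v$ has one incoming port per edge $(w,v)\in E$ (wired to $L_w$), plus an extra incoming port from a host $h_s$ when $v=s$, and a single outgoing port wired to a distinguished port of $L_v$, which I call port $0$; the remaining ports of $L_v$ are wired, one per edge $(v,u)\in E$, to $e_u$, and when $v=t$ one further port of $L_v$ is wired to $h_t$. On a packet arriving on an incoming port with tag $<n$, the stateless $e_v$ outputs on its outgoing port a packet with tag incremented by one and with \emph{both} source and destination set to one fixed host $a$ (with $P_a=\emptyset$); packets that arrive back from $L_v$, and packets already tagged $n$, produce empty output. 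The host $h_s$ may send only $(a,a,0)$, and $h_t$ sends nothing. Consequently every packet that ever reaches a switch $L_v$ arrives on port $0$ with source and destination $a$, and the flood-once behaviour comes for free: processing its first such packet, $L_v$ does not yet know the port of $a$, so it \emph{floods} it to all its other ports (feeding the $e_u$'s, i.e.\ propagating the search to the out-neighbours of $v$) while simultaneously recording that $a$ sits on port $0$; every later packet it receives is destined for $a$, whose port is now known to be $0$, so it is sent straight back out port $0$ into $e_v$, which drops it. Hence each $L_v$ floods at most once, and if it floods it floods with exactly the tag its first incoming packet carried.

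Correctness then follows the counting argument of \lemref{progressingIsNPHard}. If $h_t$ receives a packet tagged $n$, that packet was flooded by $L_t$ with tag $n$, forcing $e_t$ to have delivered a tag-$(n{-}1)$ packet, which was flooded by some $L_w$ with $(w,t)\in E$ with tag $n-1$ (a switch can emit a packet on a non-$0$ port only by flooding, since the only host it ever learns is $a$ on port $0$), and so on down to $L_s$ flooding with tag $1$ (tag-$1$ packets originate only in $e_s$, which is fed only by $h_s$); since each switch floods at most once, the $n$ switches in this chain are $n$ \emph{distinct} vertices forming an $s$--$t$ path, i.e.\ a Hamiltonian path. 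Conversely, given a Hamiltonian path $s=v_1,\dots,v_n=t$, I exhibit a run: $h_s$ sends $(a,a,0)$, $e_s$ turns it into $(a,a,1)$, $L_{v_1}=L_s$ floods it, and at each step I process only the flooded copy heading toward $e_{v_{i+1}}$ — the unordered scheduler is free to leave every other copy unprocessed, so each $L_{v_{i+1}}$ is still in its initial state when the tag-$(i{+}1)$ packet reaches it and it floods; finally $L_{v_n}=L_t$ floods the tag-$n$ packet, which reaches $h_t$ since $h_t$ hangs off one of $L_t$'s ports. Taking the forbidden set to be all packets tagged $n$ and the monitored host to be $h_t$, isolation holds iff $G$ has no Hamiltonian path; as the network is polynomial and all its middleboxes are stateless (the $e_v$) or learning switches (the $L_v$), coNP-hardness follows.

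The delicate point is proving the flood-once property rigorously under the unordered semantics, where the scheduler picks which pending packet each switch processes next: one must rule out a race in which $L_v$ floods several packets before it has learned where $a$ is. This is precisely why all counting packets are given source $=$ destination $= a$, so the learning is a side effect of processing the very first packet rather than of a separately reflected packet; but one still has to check that no packet with source $a$ can ever reach $L_v$ on a port other than $0$ (so $L_v$ learns $a\mapsto 0$ and nothing else), and that the forwarding decision for a packet is taken against the switch's table \emph{before} that packet's source is recorded — the reading of the learning-switch semantics under which the first packet is indeed flooded. Forwarding loops need not be avoided here, since the lemma makes no loop-freeness claim and $G$ may well contain cycles.
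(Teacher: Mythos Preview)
Your reduction is the same as the paper's in outline --- Hamiltonian Path to reachability, one learning switch per vertex playing the role of a flood-once gadget --- but your per-vertex gadget is considerably leaner (one switch $L_v$ plus one stateless $e_v$, versus the paper's switch $v_{\mathit{LS}}$ plus three stateless middleboxes $v_A,v_B,v_C$ and a three-round handshake over three host identities $h_s,h_t,h_d$). Unfortunately the simplification rests entirely on the $\mathit{src}=\mathit{dst}=a$ trick, and that is where the argument breaks.

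You already isolate the fragile step: the flood-once behaviour of $L_v$ requires that the forwarding decision be taken against the table \emph{before} the current packet's source is recorded. But the conventional learning-switch semantics is learn-then-forward: upon receiving a packet, first record $(\mathit{src},\mathit{prt})$, then look up $\mathit{dst}$. Under that reading, the very first $(a,a,i)$ arriving at $L_v$ on port~$0$ causes $L_v$ to learn $a\mapsto 0$ and then, since $\mathit{dst}=a$ is now known, forward straight back out port~$0$ into $e_v$, which drops it. No flood ever happens, and already the easy direction (Hamiltonian path $\Rightarrow$ reachability) fails. You state the forward-then-learn reading as a hypothesis in your last paragraph but never discharge it; nothing in the paper's definition of a learning switch (\secref{Examples}) licenses it, and the paper's own gadget is visibly designed \emph{not} to rely on it.

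This is exactly why the paper's construction is heavier: it keeps $\mathit{src}\neq\mathit{dst}$ throughout, so learning the source of a packet never resolves the destination of that same packet, and it runs a short protocol (an $(h_s,h_t,n)$ packet triggers a $(h_d,h_s,n)$ reply, which triggers an $(h_t,h_d,n)$ reply) through which $v_{\mathit{LS}}$ is taught by a \emph{later} packet that $h_t$ sits on the $v_A$ port; only thereafter are subsequent $(h_s,h_t,\cdot)$ packets shunted back to $v_A$ and dropped. That handshake is what makes the flood-once property hold under learn-then-forward and under adversarial unordered scheduling. To salvage your approach you would need either to establish that the learning switch of \secref{Examples} is forward-then-learn, or to add a teach-the-destination step with $\mathit{src}\neq\mathit{dst}$ --- at which point you will have essentially reconstructed the paper's gadget.
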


\begin{proof}
The proof is by reduction from the (NP-hard) Hamiltonian Path problem to the reachability problem.
We use the same notation as used in the proof of \lemref{progressingIsNPHard}.
W.l.o.g we assume that the out-degree of all vertices of the directed graph $G$ is two.
For the reduction, we construct a network with three hosts, namely, $h_s,h_t$ and $h_d$, and $4|V|$ middleboxes, 
as described below. The topology of the resulting network is illustrated in \figref{hamiltonian-path-red}.
The set of packet tags is $\{0,\dots,|V|\}$.
As before, the reachability problem is to determine whether host $h_{t}$ can receive a packet with tag $|V|$.

We now describe the network in more detail. With every vertex $v$ we associate three stateless middleboxes, namely, $v_A,v_B$ and $v_C$, and a learning switch $v_{\mathit{LS}}$, illustrated in \figref{widget}. Intuitively, these middleboxes will simulate a ``flood once'' middlebox.
The middlebox $v_A$ is connected to $v_B$, $v_C$ and $v_{\mathit{LS}}$.
The middlebox $v_{\mathit{LS}}$ is connected to $v_B$ and $v_C$ as well as to $v_A$, and if $(v,u_1)\in E$ and $(v,u_2)\in E$, then $v_B$ has a link to $(u_1)_A$ and $v_C$ is connected to $(u_2)_A$.
Host $h_s$ is connected to $(v_s)_A$ and is allowed to send only the packet $(h_s,h_t,0)$ (source $h_s$, destination $h_t$, and tag $0$).
Host $h_t$ is connected to 
$(v_t)_B$ and $(v_t)_C$.
Host $h_d$ is a dummy host, disconnected from any middlebox. Its purpose is merely to allow three distinct host ids.
The forwarding function of the learning switch is as described in \secref{Examples}.
The forwarding function of the stateless middleboxes is defined as follows:
\begin{itemize}
\item packets received by $v_A$ from some $u_B$ or $u_C$: if the packet is $(h_s,h_t,n)$, namely, source is $h_s$, destination is $h_t$ and packet tag is $n$, then forward it to $v_{\mathit{LS}}$.
\item packets received by $v_A$ from $v_{\mathit{LS}}$: if the packet is $(h_d,h_s,n)$, then forward packet $(h_t,h_d,n)$ to $v_{\mathit{LS}}$.
\item packets received by $v_B, v_C$ from $v_{\mathit{LS}}$: if the packet is $(h_s,h_t,n)$ forward packet $(h_d,h_s,n)$ to $v_{\mathit{LS}}$.
If the packet is $(h_t,h_d,n)$ forward packet $(h_s,h_t,n+1)$ to the appropriate $u_A$.
Otherwise, discard.
\end{itemize}
All other packets are discarded.

We first give an informal description of how a packet is processed and then turn to formally prove the correctness of the reduction.
When $v_A$ receives a $(h_s,h_t,n)$ packet from some $u_B$ or $u_C$ it sends it to the learning switch.
When $v_{\mathit{LS}}$ first receives the packet it forwards it to all of its neighbors except for $v_A$ (from which it was received) and marks the port connected to $v_A$ as the destination port to $h_s$.
$v_B$ and $v_C$ reply with $(h_d,h_s,n)$, and when the first of these packets arrives to $v_{\mathit{LS}}$, then it marks either $v_B$ or $v_C$ as the destination of $h_d$.
In addition, as the port connected to $v_A$ is marked as the destination to $h_s$, the learning switch sends the packets $(h_d,h_s,n)$ to $v_A$.
$v_A$ responds with $(h_t,h_d,n)$.
When $v_{\mathit{LS}}$ receives the packet it marks the port connected to $v_A$ as the destination for $h_t$ and forwards the packet to $v_B$ or $v_C$ (depending on which was marked as the destination for $h_d$).
$v_B$ or $v_C$ increments the tag and forwards the packet to a neighbor $u_A$.
All additional packets of the form $(h_s,h_t,n')$ that will arrive to $v_A$ after $v_B$ or $v_C$ has already incremented the tag will be forwarded by $v_{\mathit{LS}}$ back to $v_A$ (as it was marked as the destination port to $h_t$), and in $v_A$ they will be discarded.

We now give a formal proof.
We claim two assertions:
(i)~For every $v\in V$, at most one of the middleboxes $v_B$ and $v_C$ forwards a packet to an adjacent node (other than $v_{\mathit{LS}}$).
(ii)~Both $v_B$ and $v_C$ will never forward the same packet twice.
The proof of item~(i) is due to the fact that every packet passes through the learning switch and the learning switch will mark only one of $v_B$ or $v_C$ as the destination of $h_d$.
The proof of item~(ii) is due to the fact that if a packet $p$ is generated as a result of $v_B$ ($v_C$) sending a packet to an adjacent middlebox, then at this stage $v_A$ is already marked by the learning switch as the destination of $h_t$.
Therefore, when the packet $p$ reaches $v_A$, it will be forwarded from the learning switch back to $v_A$ and will be discarded.
Hence, it can never reach $v_B$ ($v_C$) again.
By the two assertions we get that reachability holds if and only if a packet visited $|V|$ different middleboxes $(v_1)_{X_1},\dots,(v_{|V|})_{X_{|V|}}$ for $X_i \in \{B,C\}$, and each such middlebox was visited exactly once.
Hence, reachability holds iff a Hamiltonian path exists.
\qed
\end{proof}

\begin{figure}
\begin{center}
\includegraphics[width=0.5\textwidth]{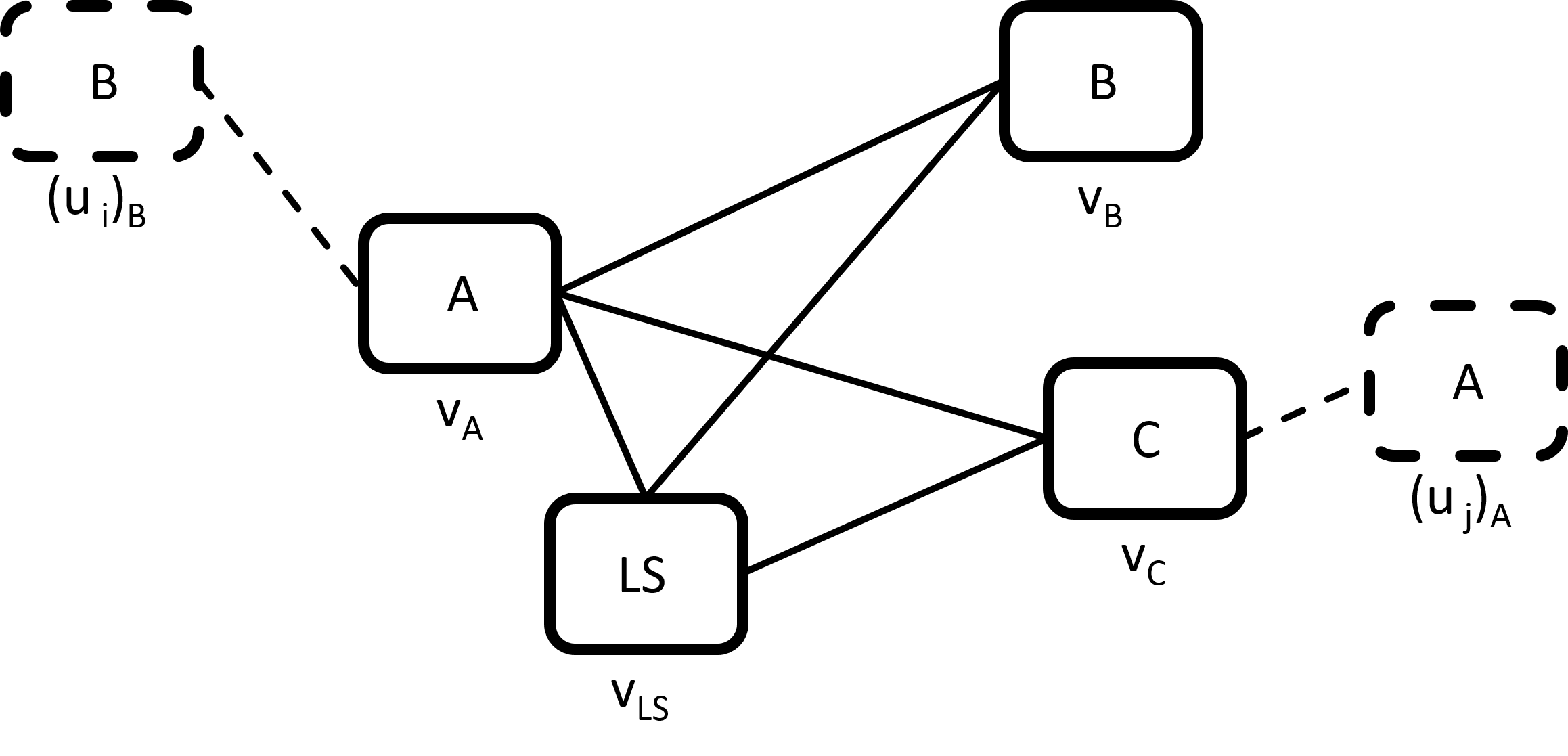}
\end{center}
\caption{The network `gadget' associated with vertex $v$ in the reduction from the Hamiltonian Path problem to network reachaility. The vertex $v$ has an incoming edge from $u_i$ and an outgoing edge to vertex $u_j$ in the input graph $G$.}\label{Fi:widget}
\end{figure}

\begin{figure}
\begin{center}
\includegraphics[width=0.5\textwidth]{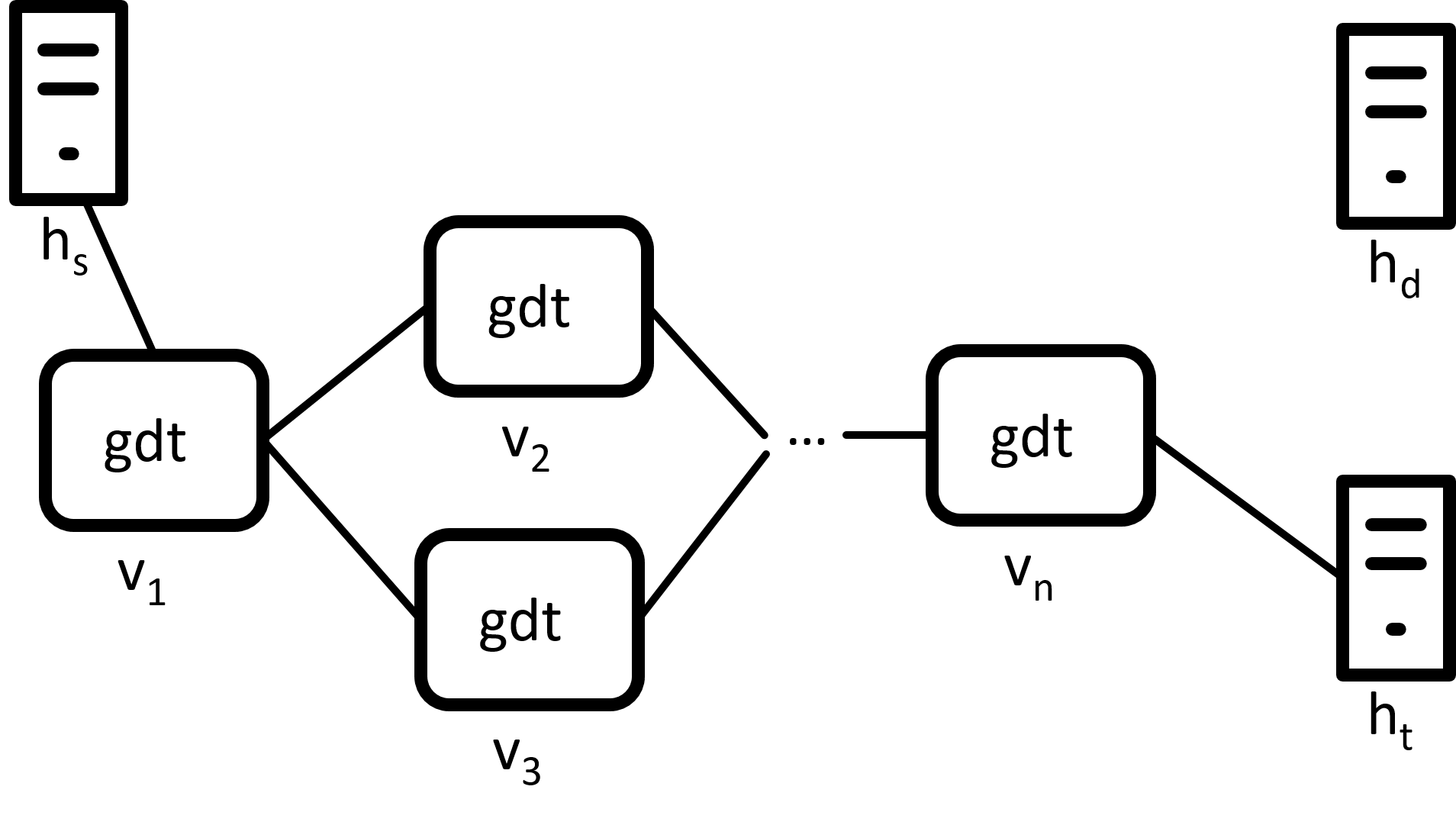}
\end{center}
\caption{The network resulting from the reduction from the Hamiltonian Path problem to network reachability.}\label{Fi:hamiltonian-path-red}
\end{figure}

\subsection{Unordered Safety in arbitrary networks is EXPSPACE-hard.} \label{sec:unordered-expspace-lower}
The result in this section is similar to previous work on message passing
systems with unordered communication
channels~\cite{lipton1976reachability,sen2006model}, and is included here for
completeness of presentation.

The lower bound is obtained by a reduction from the \emph{VASS control state reachability problem}. We first present the problem and its known complexity results.
A \emph{vector addition system with states (VASS)} is a weighted directed graph $(V,E,v_0,w:E\to\Z^k)$, where $V$ is a finite set of vertices (\emph{Control States}), $E\subseteq V\times V$ is a set of directed edges, $v_0$ is the initial vertex, and $w$ is a weight function that assigns a $k$-dimensional weight vector to every edge.
A (finite) path $\pi$ in the directed graph is \emph{valid} if it begins in $v_0$ and every prefix of $\pi$ has a non-negative sum of weights in every dimension.

The \emph{VASS control state reachability problem} gets as input a VASS and a \emph{reachability set} $R\subseteq V$, and checks whether there exists a valid path in the VASS to (at least) one vertex in $R$.

\begin{lemma}[\cite{cardoza1976exponential,lipton1976reachability,rackoff1978covering}]\label{fact:VASSComplexity}
The VASS control state reachability problem is EXPSPACE-complete.
Moreover, it is EXPSPACE-hard even when the coefficients of every vector in the image of the weight function are bounded by $\pm 1$, and even when every vector has at most one non-zero dimension.
\end{lemma}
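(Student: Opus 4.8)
\emph{Overall plan.} I would prove the two directions separately: the EXPSPACE upper bound via Rackoff's small-witness argument \cite{rackoff1978covering}, and the EXPSPACE lower bound via Lipton's recursive simulation of doubly-exponentially bounded counter machines \cite{lipton1976reachability}; then I would observe that the lower-bound construction already lives in the restricted form claimed (or can be put there by a trivial gadget).

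\emph{Upper bound.} First I would establish the small-witness property: if some vertex of $R$ is reachable by a valid path, then it is reachable by a valid path of length at most $2^{2^{c\cdot m\log m}}$, where $m$ is the size of the input and $c$ a constant. This is proved by induction on the number $k$ of dimensions, considering ``$i$-covering'' runs that must stay non-negative only in the first $i$ coordinates: once those coordinates grow past (length bound)$\times$(largest absolute weight) they no longer constrain the run, so a bounded-length prefix suffices to reach that regime, and the tail is handled by the induction hypothesis on fewer coordinates. Given this length bound, control-state reachability is decidable in nondeterministic space sufficient to store the $k$-dimensional counter vector (whose entries stay doubly-exponential, hence need exponentially many bits) together with a step counter up to the bound: one simply guesses and simulates the path edge by edge. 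Thus the problem is in NEXPSPACE, and Savitch's theorem gives EXPSPACE; in particular it is decidable.

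\emph{Lower bound.} For EXPSPACE-hardness I would reduce from acceptance of a Turing machine running in space $2^n$, equivalently halting of a two-counter machine whose counters are bounded by $B=2^{2^n}$, which is EXPSPACE-complete. The obstruction is that a VASS cannot test a counter for zero. Lipton's device is to keep, for each bounded counter $c$ (bound $B_i = 2^{2^i}$), a complementary counter $\bar c$ with the invariant $c+\bar c = B_i$, so that a zero-test on $c$ becomes ``subtract $B_i$ from $\bar c$, rely on non-negativity to enforce $\bar c\ge B_i$, then add $B_i$ back''. This needs a gadget that adds/subtracts exactly $2^{2^i}$, which is built recursively: add $2^{2^i}$ by running a loop $2^{2^{i-1}}$ times, each iteration adding $2^{2^{i-1}}$, where the loop counter (bounded by $2^{2^{i-1}}$) is itself controlled by its own complement counter and a depth-$(i-1)$ gadget; the recursion bottoms out at a constant bound. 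A size accounting shows the whole VASS is polynomial in $n$, and its valid paths to a designated target vertex correspond exactly to accepting runs of the machine. Since the reduction is polynomial time, control-state reachability is EXPSPACE-hard, matching the upper bound.

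\emph{The restricted form and the main obstacle.} Lipton's gadgets only ever change a single counter by $\pm 1$ per step, so the construction already yields a VASS in which every edge's weight vector has at most one non-zero coordinate of absolute value $1$; alternatively, any edge with weight $w$ can be split into a chain of $\|w\|_1$ fresh intermediate vertices each changing one coordinate by $\pm 1$, preserving valid reachability (this is cheap precisely because the hardness construction is natively unary-stepped, so no exponential blow-up from binary weights arises). Combining the three parts gives the lemma. The genuinely hard part is the lower-bound construction: arranging the recursive ``add/subtract $2^{2^i}$'' subroutines so that the complement invariants are preserved exactly, so that a ``cheating'' run that exits some loop early (leaving an auxiliary counter non-zero) provably cannot reach the target, and so that the sizes telescope to a polynomial. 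Rackoff's bound, while technical, is a comparatively routine induction once the covering runs are set up correctly.
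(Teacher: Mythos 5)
Your proposal is correct, but note that the paper does not actually prove this lemma: it imports it as a known fact from the cited references (Cardoza--Lipton--Meyer, Lipton, Rackoff) and only uses it downstream, in Corollary~\ref{cor:SimpleVASS}, to get hardness for ``simple'' VASSs. What you have written is a faithful reconstruction of those classical proofs: Rackoff's induction over ``$i$-covering'' runs for the doubly-exponential length bound and hence the EXPSPACE upper bound (via guessing the run in exponential space and Savitch), and Lipton's recursive complement-counter gadgets for simulating zero-tests of counters bounded by $2^{2^n}$ for the lower bound. Your closing observation is the one that actually matters for this paper's use of the lemma --- the ``moreover'' clause holds because Lipton's construction is natively built from unit increments and decrements of single counters (and any remaining edge can be split into a chain of fresh unit-weight edges without affecting control-state reachability, since the construction carries no binary-encoded weights) --- and that is exactly the form the paper's reduction to stateful network reachability relies on. No gap; you have simply supplied the proof the paper delegates to the literature.
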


To simplify our proofs we define the class of \emph{simple VASSs} as all VASSs that satisfy:
\begin{itemize}
\item Every weight vector has exactly one non-zero coefficient which is either $+1$ or $-1$.
\item All the outgoing edges of every vertex $v$ have different weight vectors.
Formally, for every $v_1,v_2,v_3\in V$, if $(v_1,v_2),(v_1,v_3) \in E$ and $w(v_1,v_2)=w(v_1,v_3)$, then $v_2 = v_3$.
\end{itemize}

The next claim is a simple corollary of Lemma~\ref{fact:VASSComplexity}.
\begin{corollary}\label{cor:SimpleVASS}
The control state reachability problem over simple VASS systems is EXPSPACE-hard.
\end{corollary}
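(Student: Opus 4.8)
The plan is to reduce from the hard fragment of the VASS control state reachability problem isolated in Lemma~\ref{fact:VASSComplexity} — where every weight vector has \emph{at most} one non-zero coordinate, and that coordinate is $\pm 1$ — to the same problem restricted to \emph{simple} VASSs. The reduction will be a polynomial-time, polynomial-size graph surgery that preserves the set of reachable control states. There are exactly two gaps between ``hard fragment'' and ``simple'': (a) a weight vector may be the zero vector (``at most one'' non-zero coordinate allows zero), and (b) a vertex may have two distinct outgoing edges carrying the same weight vector. I would remove these one at a time, using fresh dimensions and fresh intermediate vertices.

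For (a) I would add one fresh dimension, say dimension $0$, and replace every edge $e=(u,v)$ whose weight is the zero vector by a two-edge detour $u \to z_e \to v$ through a fresh vertex $z_e$, with weights $+1$ and then $-1$ in dimension $0$. Since $z_e$ has a single outgoing edge, dimension $0$ is incremented and immediately decremented, so it never becomes negative and returns to $0$; thus validity of paths — and hence reachability of the original control states, none of which is among the $z_e$ — is unaffected. After this step every weight vector has exactly one non-zero coordinate, equal to $\pm 1$.

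For (b) let $d$ be the maximum out-degree and add $d$ further fresh dimensions $g_1,\dots,g_d$. For each original vertex $v$, fix an ordering $e_1,\dots,e_{d_v}$ of its outgoing edges and replace $e_i=(v,v')$ with weight $w$ (now exactly one $\pm1$ entry) by the three-edge path $v \xrightarrow{+g_i} y_i^v \xrightarrow{-g_i} z_i^v \xrightarrow{w} v'$ through fresh vertices $y_i^v,z_i^v$. The outgoing edges of $v$ are now $v \xrightarrow{+g_i} y_i^v$ for $i=1,\dots,d_v$, whose weight vectors lie in pairwise distinct dimensions, while every fresh vertex has out-degree $1$; so every vertex of the resulting VASS meets the simple-VASS conditions. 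Reusing a dimension $g_i$ across gadgets attached to different source vertices is harmless, since each $g_i$ is always bumped and immediately unbumped across a vertex of out-degree $1$, and the distinctness requirement is per-vertex.

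Finally I would check correctness. A valid path of the transformed system that ends at an original vertex must decompose into whole gadgets, because the fresh intermediate vertices $z_e, y_i^v, z_i^v$ all have out-degree $1$, so there is no way to ``linger'' there; contracting each gadget yields a path of the original VASS, and conversely expanding each original edge into its gadget turns a valid path into a valid path. On the original dimensions the two paths agree at every gadget boundary (inside a gadget the original coordinates stay constant and then jump by $w$ only on the last edge), so prefix-nonnegativity transfers in both directions, and the fresh dimensions never obstruct a prefix. Hence a control state in $R$ is reachable in the transformed simple VASS iff it is reachable in the original VASS, and the transformation is plainly polynomial, so Corollary~\ref{cor:SimpleVASS} follows from Lemma~\ref{fact:VASSComplexity}. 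The only mildly delicate point — and the ``main obstacle'' — is verifying that the added dimensions genuinely never force a prefix negative and that the gadget decomposition is forced; both reduce to keeping every fresh intermediate vertex at out-degree $1$.
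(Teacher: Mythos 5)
Your reduction is correct, and it supplies exactly the routine gadget argument that the paper leaves implicit---the paper gives no proof of Corollary~\ref{cor:SimpleVASS} beyond asserting that it is a simple corollary of Lemma~\ref{fact:VASSComplexity}. Both of your gadgets (the $+1/-1$ detour through a fresh out-degree-one vertex to eliminate zero weight vectors, and the per-vertex fresh dimensions to force distinct outgoing weight vectors) keep every prefix nonnegative and leave the original coordinates untouched except at gadget boundaries, so control-state reachability is preserved in both directions and the hardness transfers.
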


Next, we show a reduction from control state reachability over simple VASS systems to stateful network reachability.

The reduction is straightforward:
given a VASS system $(V,E,v_0,w:E\to\Z^k)$ and a reachability set $R\subseteq V$ we construct a network with two hosts, namely $h_1$ and $h_2$ and one middlebox $m$ (see Figure~\ref{fig:ReductionFromVASS}).
The network reachability problem is to determine whether a packet with source host $h_1$ can reach $h_2$.
The set of packet tags is $T=\{1,\dots,k\}$ (where $k$ is the number of dimensions in the VASS system).
We denote by $p_t=(h_1,h_2,t)$, and $P_T=\{p_t \mid t\in T\}$ the packets host $h_1$ sends.
We associate each packet $p_t$ with a vector $\vec{t}\in \N^k$ that consists of $1$ in dimension $t$ and the rest of the dimensions are zero.
The set of states of $m$ is $V$ (with initial state $v_0$) with the addition of one sink state.
When in sink state, the middlebox discards all incoming packets and remains in sink state.
We now describe the transitions of the middlebox $m$ from state $v \in V$:
\begin{itemize}
\item Upon receipt of a packet $p_t$ from port 1:
  \begin{itemize}
  \item If $v\in R$, then forward the packet to port 3 (reachability is obtained).
  \item If there exists $u\in V$ such that $(v,u)\in E$ (of the VASS) and $w(v,u) = \vec{t}$, then:
    \begin{itemize}
    \item Forward $p_t$ to port 2
    \item Change state to $u$
    \end{itemize}
  \item Else (such $u$ does not exists), discard packet and go to sink state.
  \end{itemize}
\item Upon receipt of a packet $p_t$ from port 2:
  \begin{itemize}
  \item If $v\in R$, then forward the packet to port 3 (reachability is obtained).
  \item If there exists $u\in V$ such that $(v,u)\in E$ (of the VASS) and $w(v,u) = -\vec{t}$, then:
    \begin{itemize}
    \item Discard the packet
    \item Change state to $u$
    \end{itemize}
  \end{itemize}
\item Upon receipt of a packet from port 3, go to sink state.
\item Upon receipt of a packet $p \not \in P_T$ from any port, go to sink state.
\end{itemize}

In order to prove the correctness of the reduction we give the next definitions and notations.
A \emph{VASS configuration} is a tuple $(v,\vec{c})\in V\times \N^k$ which consists of a vertex and a vector.
A configuration is reachable in $n$ steps if there exists a valid path in the VASS with length exactly $n$ and total sum of weights $\vec{c}$.
We denote by $S_\mathit{VASS}(n)$ the (finite) set of all configurations that are reachable in $n$ steps.

A \emph{VASS-network configuration} is a tuple $(v,\vec{c})\in V\times \N^k$, where $v$ is the state of the middlebox $m$ and $\vec{c}$ corresponds to the multiplicity of the packets of $P_T$ in the multiset of packets in port 2.
That is, if the multiplicity of packet $p_t$ in the multiset is $r$, then dimension $t$ of $\vec{c}$ is $r$.
We say that a VASS-network configuration is reachable in $n$ steps if there exists a scenario that consists of exactly $n$ middlebox packet processing events that forms the configuration.
We denote by $S_\mathit{Network}(n)$ the (finite) set of all VASS-network configurations that are reachable in $n$ steps.

\begin{lemma}\label{lem:ProofOfReduction}
For every $n \geq 0$: $S_\mathit{VASS}(n) = S_\mathit{Network}(n) - (\{\mathit{sink}\}\times \N^k)$.
\end{lemma}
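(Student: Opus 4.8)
The plan is to prove the identity by induction on $n$, setting up a step-for-step correspondence between the transitions of the simple VASS and the packet-processing events of $m$. Under this correspondence an edge of weight $\vec t$ from $v'$ to $v$ is mimicked by $m$ receiving $p_t$ on port $1$ in state $v'$ and forwarding it to port $2$ (the port-$2$ multiset gains one $p_t$ and $m$ moves to $v$), and an edge of weight $-\vec t$ is mimicked by $m$ receiving $p_t$ on port $2$ and discarding it (the multiset loses one $p_t$, $m$ moves to $v$); the VASS requirement that every prefix sum stay non-negative corresponds exactly to the fact that $m$ can consume a $p_t$ from port $2$ only if one is present. Before starting I would record two harmless normalizations: that no vertex of $R$ has an outgoing edge (this does not affect reachability of $R$), and that a ``reachability obtained'' step — forwarding to port $3$ — is taken to move $m$ to the sink state (reachability of $h_2$ is one-shot, so nothing is lost); I also note that simpleness of the VASS means that for each state $v'$ and tag $t$ there is at most one $u$ with $(v',u)\in E$ and $w(v',u)=\pm\vec t$, so the simulating transitions of $m$ are deterministic. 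For the base case $n=0$: a valid VASS path of length $0$ is the lone vertex $v_0$ with (empty, hence zero) weight sum, and a network scenario with no middlebox event consists only of host sends, which never touch port $2$ nor the middlebox state, so both sides equal $\{(v_0,\vec 0)\}$ and $v_0\neq\mathit{sink}$.

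For the inclusion $S_\mathit{VASS}(n+1)\subseteq S_\mathit{Network}(n+1)\setminus(\{\mathit{sink}\}\times\N^k)$, take $(v,\vec c)$ witnessed by a valid path whose last edge $(v',v)$ has weight $\pm\vec t$; its length-$n$ prefix witnesses $(v',\vec c')\in S_\mathit{VASS}(n)$ with $\vec c'=\vec c\mp\vec t$, so by the induction hypothesis there is a network scenario with $n$ middlebox events reaching the VASS-network configuration $(v',\vec c')$. Since $(v',v)\in E$, the normalization gives $v'\notin R$. If the weight is $\vec t$, I extend the scenario by a host send of $p_t$ and then $m$ processing $p_t$ from port $1$; as $v'\notin R$ and $(v',v)$ is the unique $+\vec t$-edge out of $v'$, this forwards $p_t$ to port $2$ and moves $m$ to $v$, giving port-$2$ multiset $\vec c'+\vec t=\vec c$. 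If the weight is $-\vec t$, then $\vec c'=\vec c+\vec t$ has a positive $t$-coordinate, so port $2$ already contains a $p_t$, which $m$ now processes from port $2$, discarding it and moving to $v$, leaving multiset $\vec c'-\vec t=\vec c$. In both cases the extended scenario has $n+1$ middlebox events and ends in $(v,\vec c)$ with $v\in V$, hence not of the form $(\mathit{sink},\cdot)$.

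For the reverse inclusion, take $(v,\vec c)\in S_\mathit{Network}(n+1)$ with $v\neq\mathit{sink}$, and look at the prefix of the witnessing scenario through its $n$-th middlebox event, which ends in some $(v'',\vec c'')\in S_\mathit{Network}(n)$, followed by the $(n+1)$-st event $e$. Going through the transition cases of $m$, every event whose target state is not the sink is one of the two simulating moves — receiving $p_t$ from port $1$ in a non-$R$ state $v''$ with a $+\vec t$-edge $(v'',v)$, or receiving $p_t$ from port $2$ in a non-$R$ state $v''$ with a $-\vec t$-edge $(v'',v)$ — since non-$P_T$ packets, receipts on port $3$, a missing successor, an $R$-state, or already being in the sink all lead to the sink, contradicting $v\neq\mathit{sink}$. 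So $e$ corresponds to an edge $(v'',v)$ of the VASS, and $v''\in V\setminus\{\mathit{sink}\}$, hence by the induction hypothesis $(v'',\vec c'')\in S_\mathit{VASS}(n)$ along some valid length-$n$ path. Appending the edge $(v'',v)$ gives a length-$(n+1)$ path with total sum $\vec c''+w(v'',v)=\vec c\geq\vec 0$; the only new prefix to verify for validity is the whole path, and it is non-negative since either $w(v'',v)$ is a positive unit vector, or it is $-\vec t$ and the fact that $e$ was enabled guarantees $\vec c''$ had a positive $t$-coordinate. Therefore $(v,\vec c)\in S_\mathit{VASS}(n+1)$.

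The step I expect to be the main obstacle is the bookkeeping around $R$-states and the ``reachability obtained'' transitions: without the two normalizations, an $m$ parked in an $R$-state could keep absorbing packets and re-emitting the very same VASS-network configuration at ever larger step counts (breaking $S_\mathit{Network}(n+1)\subseteq S_\mathit{VASS}(n+1)$ modulo the sink), or could refuse to follow an outgoing VASS edge (breaking the other inclusion), so the crux is checking that every such degenerate event either falls into the sink or is ruled out by the normalizing assumptions, and that these assumptions genuinely preserve the reduction.
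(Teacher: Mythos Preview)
Your proof is correct and follows the same inductive approach as the paper's brief argument (a step-for-step correspondence between VASS edges and port-$1$/port-$2$ events, handling the two inclusions separately), only considerably more carefully: the paper glosses over exactly the $R$-state edge cases you flag and patch via your two normalizations. One small addition is warranted: the paper's specification of $m$ on port~$2$ is silent about the case ``$v\notin R$ and no matching $-\vec t$-edge exists'', so your claim in the reverse inclusion that ``a missing successor \ldots\ leads to the sink'' there should be recorded as a third normalization rather than read off the construction.
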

\begin{proof}
The proof is by induction over $n$. The proof for $n=0$ is trivial.
For $n>0$, let $(v,\vec{c})$ be an arbitrary VASS configuration in $S_\mathit{VASS}(n-1)$. We claim that every successor configuration of $(v,\vec{c})$ is also in $S_\mathit{Network}(n)$.
The proof is straightforward. If the successor is reachable by an addition of positive vector $\vec{r}$, then a corresponding successor in the network is obtained when $h_1$ sends a packet of type $r$ and $m$ processes the packet.
If the successor is reachable by an addition of negative vector $\vec{r}$, then by the induction hypothesis there exists a pending packet in port 2 with type $r$, and a successor in the network is obtained when $m$ processes one packet from port 2 with type $r$.
Hence, we get that $S_\mathit{VASS}(n) \subseteq S_\mathit{Network}(n) - (\{\mathit{sink}\}\times \N^k)$.
The proof that $S_\mathit{Network}(n) - (\{\mathit{sink}\}\times \N^k) \subseteq S_\mathit{VASS}(n)$ follows from similar arguments.
\qed
\end{proof}

The next lemma follows immediately from Lemma~\ref{lem:ProofOfReduction} and Corollary~\ref{cor:SimpleVASS}.
\begin{lemma}\label{cor:EXPSPACEhard}
The reachability problem w.r.t. the unordered network semantics for an arbitrary network is EXPSPACE-hard.
\end{lemma}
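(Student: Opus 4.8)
The plan is to assemble the two ingredients that are already in place: the correctness of the reduction, captured by Lemma~\ref{lem:ProofOfReduction}, and the EXPSPACE-hardness of control state reachability over simple VASS systems, Corollary~\ref{cor:SimpleVASS}. What remains is to show that, for the network $\Net$ constructed from a simple VASS $(V,E,v_0,w\colon E\to\Z^k)$ and a reachability set $R\subseteq V$, the host $h_2$ can receive a packet in $\Net$ under the unordered semantics if and only if there is a valid path in the VASS ending in some vertex of $R$. EXPSPACE-hardness of unordered network reachability then follows, since the network $\Net$ is plainly computable in time polynomial in the size of the VASS.

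For the forward direction, I would argue that delivery of a packet to $h_2$ faithfully witnesses that $m$ visited a state of $R$. Inspecting the transitions of $m$: the sink state discards every incoming packet, and the only rules that forward to port $3$ are the two guarded by ``$v\in R$''. Hence, if $h_2$ ever receives a packet, then immediately before the corresponding forwarding step $m$ is in a genuine state $v\in R$, which is reached after some number $n$ of middlebox packet-processing events, and the multiset of packets in port $2$ at that point has multiplicities described by some $\vec c\in\N^k$; that is, $(v,\vec c)\in S_\mathit{Network}(n)$ with $v\neq\mathit{sink}$. By Lemma~\ref{lem:ProofOfReduction}, $(v,\vec c)\in S_\mathit{VASS}(n)$, so there is a valid path of length $n$ in the VASS reaching $v\in R$, and control state reachability holds.

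For the converse, suppose a valid path of length $n$ reaches $v\in R$ with weight sum $\vec c$, so $(v,\vec c)\in S_\mathit{VASS}(n)$; by Lemma~\ref{lem:ProofOfReduction}, $(v,\vec c)\in S_\mathit{Network}(n)$, so there is a scenario after which $m$ is in state $v\in R$. Appending one step in which $h_1$ sends an arbitrary $p_t\in P_T$ and $m$ processes it from port $1$, the ``$v\in R$'' rule forwards it to port $3$, so $h_2$ receives a packet. This establishes the equivalence, hence a correct reduction; since $\Net$ has two hosts, one middlebox with $|V|+1$ states, and $|T|=k$ packet tags, with all transitions read directly off $E$ and $w$, the construction is polynomial. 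Corollary~\ref{cor:SimpleVASS} then yields EXPSPACE-hardness of unordered network reachability (and, dually, of unordered network isolation and safety, since EXPSPACE is closed under complement).

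There is no genuine obstacle here --- all the content lives in Lemma~\ref{lem:ProofOfReduction} and Corollary~\ref{cor:SimpleVASS}. The only points deserving a line of care are (i)~verifying that no run of $m$ reaches port $3$ from the sink state, so that ``$h_2$ receives a packet'' is equivalent to ``$m$ enters a state of $R$'', and (ii)~observing that the middlebox $m$ is unrestricted --- its state graph inherits whatever cycles $E$ contains --- so the bound applies to the class of arbitrary networks rather than to one of the restricted subclasses of Section~\ref{sec:MboxClass}.
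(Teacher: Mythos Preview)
Your proposal is correct and follows exactly the approach the paper takes: the paper's own proof is the single sentence ``The next lemma follows immediately from Lemma~\ref{lem:ProofOfReduction} and Corollary~\ref{cor:SimpleVASS},'' and you have simply spelled out the two directions of the equivalence between VASS control-state reachability and $h_2$ receiving a packet, together with the polynomial size of the construction. The two points of care you flag (the sink state never forwards to port~$3$; the middlebox is unrestricted) are precisely the residual checks needed to make ``follows immediately'' honest.
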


\begin{figure}
\begin{center}
\includegraphics[width=0.5\textwidth]{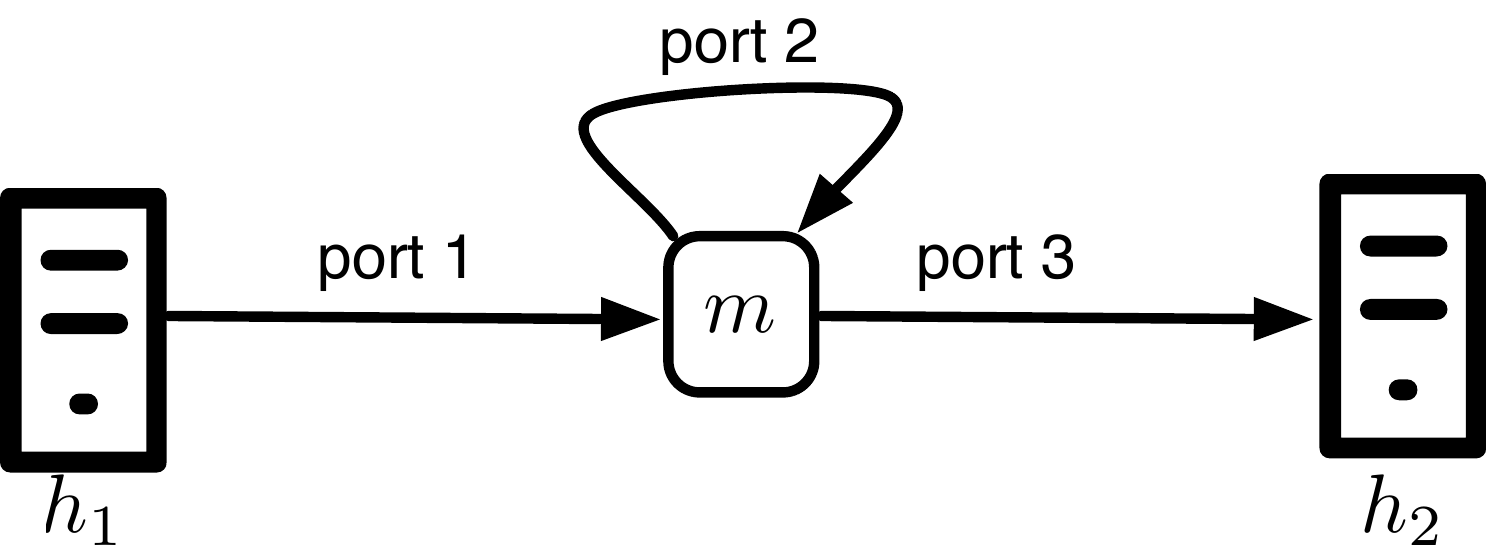}
\end{center}
\caption{The network resulting in the reduction from the VASS control state reachability problem.}\label{fig:ReductionFromVASS}
\end{figure} 
\section{Upper Bounds on Complexity of Safety w.r.t. the Unordered Semantics} \label{sec:complexityUpperBounds}

This section provides complexity upper bounds for the safety problem of stateful networks w.r.t. the unordered semantics of networks.
Our complexity analysis considers symbolic representations of middleboxes (which might be exponentially more succinct than explicit-state representations).
The obtained upper bounds match the lower bounds from \secref{complexityLowerBounds} (hence, the bounds are tight).

\begin{remark}\label{rem:SmallArity}
The complexity upper bounds we present are under the assumption that all relations used to define middlebox states may have at most polynomial number of elements (polynomial in the size of the network and the size of the middlebox representation).
To enforce this limitation we assume that the arity of relations is constant.
If the arity of the relation is bounded by a constant $c$, then the number of elements is bounded by the polynomial $n^c$, where $n$ is the size of the network.

In all of our examples we use relations with arity at most three, and since abstract packets have only three attributes, we believe that most applications will use relations with small arity.

\end{remark}

\paragraph{The Input to the Safety Verification Problem}
The input to the safety verification problem is given in the form of a network topology description, and the symbolic representations of the middleboxes in the network.

The complexity results in this section are given in terms of the number of hosts in the network $|H|$, the size of the type domain $|T|$, the total number of ports in the network $|\PortSet|$, the number of middleboxes in the network $|M|$, and the total size of the symbolic representation $|S|=\sum |S_i|$ where $|S_i|$ is the size of the symbolic representation of middlebox $m_i$.

In our complexity analysis we sometime refer to the set of packets in the networks. Recall that the set of packets in the networks is $P=H\times H\times T$, and so the size of $P$ is $|P|=|H|^2|T|$.
Finally, in our complexity analysis we also refer to $\sum |\mathcal{R}_i|$ which denotes the total size of the domains of relations of middleboxes in the network where $\mathcal{R}_i$ is the domain of relation $R_i$.
Note that $|\mathcal{R}_i|$ is polynomial in the in the size of $|H|$, $|\PortSet|$ and $|T|$, as the arity of $R_i$ is fixed and the domains of its dimensions are taken from $H$, $\PortSet$ and $T$.

\subsection{Unordered Safety of Increasing Networks is in PTIME} \label{sec:upper-increasing}
In this section, we show that safety of syntactically increasing networks is in PTIME.
\begin{figure}[t]
\begin{tabbing}
mm\=mm\=mm\=mm\=mm\=\kill
\> $\mathit{StateData}$ := $\{m \mapsto \mathit{InitialRelationValues}(m) \mid m \in M\}$\\
\> $\mathit{PacketData}$ := $\{m \mapsto \mathit{NeighborHostPackets}(m) \mid m \in M\}$\\
\> $\bwhile$ fixed-point not reached \\
\>\>$\bforeach~m\in M$, $(p,\SinglePort) \in \mathit{PacketData}(m)$\\
\>\>\>let $q = \mathit{StateData}(m)$\\
\>\>\>$\bif$ $\delta_m(q,(p,\SinglePort)) = \emptyset$  $\bthen$ $\breturn$ violation //  abort state reached \\
\>\>\>let $(q',o) \in \delta_m(q,(p,\SinglePort))$\\
\>\>\>$\mathit{StateData}$ := $\mathit{AddData}(m,q')$\\
\>\>\>$\mathit{PacketData}$ := $\mathit{AddPacketsToNeighbors}(m,o)$\\
\>$\breturn$ safe
\end{tabbing}
\caption{\label{Fi:ptime}%
Safety checking of increasing networks.}
\end{figure}

\figref{ptime} presents a polynomial algorithm for determining safety of a syntactically increasing network. The algorithm performs a fixed-point computation of the set of all tuples present in middlebox relations in reachable middlebox states,
as well as the set of all different packets transmitted in the network.
For every middlebox $m \in M$, the algorithm maintains the following sets:
\begin{itemize}
\item $\mathit{StateData}(m)$: a set of pairs of the form $(R,\overline{d})$ where $R$ is a relation of $m$, and $\overline{d}$ is a tuple in the domain of $R$, indicating that there is a run in which $\overline{d}\in R$.

\item $\mathit{PacketData}(m)$: a set of pairs of the form $(p,\SinglePort)$, where $p$ is a packet and $\SinglePort$ is a port of $m$, indicating  that $p$ can reach $m$ from port $\SinglePort$.

\end{itemize}
$\mathit{StateData}(m)$ is initialized to reflect the initial values of all middlebox relations. $\mathit{PacketData}(m)$ is initialized to include the packets $P_h$ that can be sent from neighbor hosts $h \in H$.
As long as a fixed-point is not reached, the algorithm iterates over all middleboxes and their packet data. For each middlebox $m$ and $(p,\SinglePort) \in \mathit{PacketData}(m)$, $m$ is run over $(p,\SinglePort)$ from a state $q$ in which every relation $R$ contains all the tuples $\overline{d}$ such that $(R,\overline{d}) \in \mathit{StateData}(m)$. The sets $\mathit{StateData}(m)$ and $\mathit{PacketData}(m')$ for every neighbor $m'$ of $m$, are updated to reflect the discovery of more elements in the relations (more reachable states), and more packets that can be transmitted.

As the algorithm only adds relation elements and packets, the number of additions is bounded by $(|P||\PortSet|+\sum |\mathcal{R}_i|)$. At every iteration of the $\bwhile$ loop, at least one relation element or packet is added to $\mathit{StateData}$ or $\mathit{PacketData}$ respectively.
The number of $\bforeach$ iterations in every single $\bwhile$ iteration is bounded by $|P||\PortSet|$.
The runtime of every $\bforeach$ iteration is linear in the runtime of the corresponding middlebox, which is linear in the size of its symbolic representation.
This is because the computation of $\delta_m(q,(p,\SinglePort))$ consists of executing the middlebox program, and since the symbolic representation does not have loops, the runtime is linear. Hence, the runtime of a single iteration of the $\bforeach$ loop can be bounded by $|S|$.

The total running time of the algorithm is then bounded by $(|P||\PortSet|+\sum |\mathcal{R}_i|)|P||\PortSet||S|$, and hence polynomial.

The correctness of the algorithm relies on the next lemma, which is a variation of Lemma~\ref{lem:PacketCanComeAgain}.
\begin{lemma}\label{lem:PacketCanComeAgainAlsoUnordered}
For every increasing network,
if there is a run in the unordered semantics in which packet $p$ arrives to port $\SinglePort$ of middlebox $m$, then any run $r$ in the unordered semantics has an extension in which packet $p$ arrives to $m$ from port $\SinglePort$.
Moreover, if there is a run in which element $\overline{d}$ is in a relation $R$, then any run has an extension in which element $\overline{d}$ is in the relation $R$.
\end{lemma}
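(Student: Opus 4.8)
The plan is to prove Lemma~\ref{lem:PacketCanComeAgainAlsoUnordered} by a simultaneous induction on the length of a run that first witnesses a packet arrival or a relation element, closely mirroring the structure of the proof of Lemma~\ref{lem:PacketCanComeAgain} but adapted to the unordered semantics, where channels are multisets and ``processing all pending packets in an ingress channel'' must be replaced by ``scheduling the relevant packet, whose multiplicity is positive, to be processed next.'' Concretely, I would fix an increasing network $\Net$ and prove the following strengthened statement by induction on $n$: for every run $r$ and every event that can occur at step $n$ of some run from the initial configuration — either ``$p$ arrives to port $\SinglePort$ of $m$'' or ``$\overline{d}$ enters relation $R$ of $m$'' — the run $r$ has an extension in which that same event occurs (or in which some middlebox aborts, in which case we are also done, since the lemma is used for a safety/reachability analysis and an abort is itself a witness).

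First I would handle the base cases. A packet can arrive to $m$ from port $\SinglePort$ in one step only if a neighbouring host sends it; since hosts may send their packets in any configuration, any run $r$ extends by that host sending $p$ and then $m$ processing it (its multiplicity in the ingress channel is positive, so in the unordered semantics $m$ can pick it up next). An element $\overline{d}$ can enter a relation $R$ only as the effect of processing some packet arrival, so the relation case reduces to the packet-arrival case plus one more processing step. For the inductive step on a packet arrival: if $p$ was sent to $m$ by a host, argue as in the base case; if $p$ was sent to $m$ by a neighbouring middlebox $m'$ as a reaction to receiving some $(p',\SinglePort')$ after a history $h'$, then by the induction hypothesis applied to each packet arrival composing $h'$ (each of which is witnessed by a run strictly shorter than $n$), and to the arrival of $(p',\SinglePort')$ itself, I can extend $r$ so that $m'$ receives a sequence of packets that contains $h'$ as a subsequence and then receives $(p',\SinglePort')$. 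Because $m'$ is increasing and $h'\sqsubseteq$ (the history realized along this extension), we have $\ForwardFunc_{m'}(h',(p',\SinglePort'))\subseteq \ForwardFunc_{m'}(h'',(p',\SinglePort'))$, so $m'$ forwards $p$ to $m$ along the egress channel (or some middlebox has aborted along the way). Then $m$ can process $p$ from $\SinglePort$ since its multiplicity is now positive. For the inductive step on a relation element: the element $\overline{d}$ enters $R$ when some middlebox $m$ processes some packet arrival $(p,\SinglePort)$ after a history $h$; apply the packet-arrival case of the induction hypothesis (and, if needed, the relation-element case for every element that a relevant guard in $h$ depends on) to realize an extension in which $m$ receives a superhistory of $h$ and then receives $(p,\SinglePort)$. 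Here I would invoke monotonicity of syntactically increasing middleboxes (Lemma~\ref{lem:IncreasingIsAMDLIncreasing}): a larger history yields a larger relation content $R^{h''}\supseteq R^h$, and since guards are positive, whatever guard fired on $h\cdot(p,\SinglePort)$ to insert $\overline{d}$ still fires, so $\overline{d}$ is inserted.

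The main obstacle, as in Lemma~\ref{lem:PacketCanComeAgain}, is the bookkeeping of the induction: one must carefully argue that the runs witnessing the constituent events of $h'$ (respectively $h$) are all strictly shorter than the run under consideration — this is where the observation that these constituent arrivals are subsequences of a prefix of the witnessing run is used — and then that these witnessing runs can be \emph{concatenated}, starting each from the configuration left by the previous one, to build a single extension of $r$ realizing a superhistory of $h'$ (resp.\ $h$). The unordered semantics actually makes this slightly cleaner than the FIFO case: since channels are multisets, appending extra runs can only increase multiplicities, and the monotonicity of the transition system (noted before Theorem~\ref{thm:EXPSPACEcomplete}) guarantees that any run available from a smaller configuration is available from the larger one reached after concatenation. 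The subtle point to get right is that abort may occur during any of these sub-extensions; I would handle this uniformly by stipulating in the inductive statement that reaching an abort configuration is an acceptable outcome, so that the construction never gets stuck. Once the lemma is established, the correctness of the fixed-point algorithm in Figure~\ref{Fi:ptime} follows: the algorithm soundly and completely computes exactly the set of packets that can reach each middlebox and the set of tuples that can appear in each relation, since by the lemma these individually-reachable facts are jointly realizable (up to abort), so an abort is detected iff the network is unsafe.
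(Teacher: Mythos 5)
Your proposal is correct and follows exactly the route the paper intends: the paper gives no explicit proof of this lemma, stating only that it is ``a variation of \lemref{PacketCanComeAgain}'', and your argument is precisely that variation --- the same induction on the length of the witnessing run, with the FIFO step ``process all packets ahead in the queue'' replaced by directly scheduling the relevant packet from the multiset, the same chaining of induction-hypothesis extensions to realize a superhistory at the forwarding middlebox, and the relation-element part discharged via the monotonicity $R^{h_1}\subseteq R^{h_2}$ for $h_1\sqsubseteq h_2$ established in \lemref{IncreasingIsAMDLIncreasing}. Your explicit handling of the abort caveat (which the paper's statement of this lemma silently drops but which is needed for the same reason as in \lemref{PacketCanComeAgain}) is a welcome clarification rather than a deviation.
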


We now use Lemma~\ref{lem:PacketCanComeAgainAlsoUnordered} to prove that in every iteration the data structure of the algorithm under-approximates $\mathit{PacketData}$ and $\mathit{StateData}$.

\begin{lemma}\label{lem:UnderApprox}
For every iteration of the algorithm there is a run $r$, such that
if $(p,\SinglePort)\in\mathit{PacketData}(m)$, then in $r$ there is a step in which $p$ arrived to $m$ from port $\SinglePort$, and if $(R,\overline{d})\in\mathit{StateData}(m)$, then in $r$ there is a step in which $\overline{d}$ was added to $R$.
\end{lemma}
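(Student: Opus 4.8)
The plan is to prove the lemma by induction on the number of completed iterations of the \bwhile\ loop in the algorithm of \figref{ptime}, using as the inductive invariant exactly the conclusion of the lemma: after each iteration there is a single run $r$ that simultaneously witnesses every pair currently stored in $\mathit{PacketData}$ and every pair stored in $\mathit{StateData}$. The key structural remark is that the two witnessed properties --- ``$r$ contains a step in which $p$ arrived to $m$ from $\SinglePort$'' and ``$r$ contains a step in which $\overline{d}$ was added to $R$'' --- refer only to the past, hence survive any prolongation of $r$. Consequently, whenever the algorithm enlarges its data structures it suffices to extend the current witnessing run so that it covers the newly added entries. The two tools for such extensions are \lemref{PacketCanComeAgainAlsoUnordered}, which lets us append to \emph{any} run a tail ending in a previously-possible arrival of a packet at a middlebox (or in an insertion of a previously-possible element into a relation), and the monotonicity fact extracted in the proof of \lemref{IncreasingIsAMDLIncreasing}: in a syntactically increasing middlebox relation contents only grow along a run, and the forwarding behaviour depends positively (monotonically) on those contents.

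For the base case, before any iteration $\mathit{StateData}(m)$ holds exactly the initial relation values of each middlebox $m$ and $\mathit{PacketData}(m)$ holds the packets $P_h$ that neighbouring hosts may send; taking $r$ to be a run in which each neighbouring host sends each of its packets and the adjacent middlebox consumes it (obtained by appending these events one at a time via \lemref{PacketCanComeAgainAlsoUnordered}) witnesses $\mathit{PacketData}$, while $\mathit{StateData}$ is witnessed vacuously since the initial relation elements are present from the start. For the inductive step, suppose $r$ witnesses the data structures after $n$ iterations, and consider the $(n{+}1)$-st iteration, which processes some $(p,\SinglePort)\in\mathit{PacketData}(m)$ from the symbolic state $q$ in which every relation $R$ of $m$ holds precisely $\{\overline d\mid (R,\overline d)\in\mathit{StateData}(m)\}$. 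If $\delta_m(q,(p,\SinglePort))=\emptyset$ the loop halts and there is nothing further to witness; otherwise, by syntactic increasingness $\delta_m(q,(p,\SinglePort))$ is a singleton $\{(q',o)\}$. By the induction hypothesis each element of $\mathit{StateData}(m)$ was inserted into its relation somewhere in $r$, and since a syntactically increasing middlebox never executes \bremove\ these elements persist, so the state $\widehat q$ of $m$ at the end of $r$ satisfies $\widehat q\supseteq q$ componentwise. Using \lemref{PacketCanComeAgainAlsoUnordered} we prolong $r$ so that its last step is $m$ consuming $p$ from $\SinglePort$; by the monotonicity fact, from $\widehat q$ the middlebox emits an output $\widehat o\supseteq o$ and moves to a state $\widehat{q'}\supseteq q'$. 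Hence every packet of $o$ has been placed on the appropriate egress channel of $m$, and every element of $q'$ --- in particular every pair newly added to $\mathit{StateData}(m)$ --- now occurs in some relation of $m$ along the run; prolonging the run once more so that each neighbour of $m$ consumes the packets of $o$ destined for it produces a run that additionally witnesses the new entries of $\mathit{PacketData}$, completing the step.

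The delicate point, which I expect to be the main obstacle, is the interplay between abort transitions and the mismatch between the \emph{exact} discovered state $q$ used by the algorithm and the possibly \emph{larger} actual state $\widehat q$ of $m$ in the witnessing run. Three sub-cases must be separated. (i) If a non-abort guarded command fires in $q$, then because all guards of a syntactically increasing middlebox are positive and mutually exclusive, that same command is still the unique one to fire in the larger state $\widehat q$; thus $\widehat q$ neither aborts nor behaves differently (indeed $\widehat o = o$, and $\widehat{q'}\supseteq q'$), and the argument above applies unchanged. (ii) If no guard fires in $q$, then $(q',o)=(q,\emptyset)$, so the algorithm adds nothing and the invariant is trivially preserved, regardless of how $\widehat q$ behaves. (iii) If $q$ itself aborts, the algorithm returns \emph{violation} and there are no subsequent iterations, so the claim is vacuous. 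Finally, one must reconcile the ``or in abort'' escape clause of \lemref{PacketCanComeAgainAlsoUnordered}: if forcing $p$ to reach $m$ unavoidably drives some middlebox into an abort state first, the prolonged run already exhibits a safety violation, and since every entry of the data structures was witnessed \emph{before} that abort occurred, the run still satisfies the required properties. Once this case analysis is in place, the induction closes and the lemma follows.
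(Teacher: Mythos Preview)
Your proposal is correct and follows the same inductive strategy as the paper's proof: induction on the number of iterations, with \lemref{PacketCanComeAgainAlsoUnordered} used to extend the witnessing run so that it covers the newly discovered packets and relation elements. Your treatment is in fact more careful than the paper's on one point: the paper writes that one can ``construct a run $r'$ in which $m'$ is in state $q$,'' whereas \lemref{PacketCanComeAgainAlsoUnordered} only guarantees a state containing at least the tuples of $q$; your explicit handling of the mismatch $\widehat q\supseteq q$ via the positivity and mutual exclusivity of guards in a syntactically increasing middlebox makes this step rigorous, and your abort case analysis is likewise a welcome clarification of something the paper leaves implicit.
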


\begin{proof}
The proof is by induction on the number of iterations performed by the algorithm. The proof for the base case (zero iterations performed) is trivial --- the initial state of the $\mathit{PacketData}$ and $\mathit{StateData}$ matches the initial state of the network.

For the $n$-th iteration, let $(p,\SinglePort)\in\mathit{PacketData}(m)$. We consider two distinct cases. In the first case, after the $n-1$-th iteration, $(p,\SinglePort)\in\mathit{PacketData}(m)$. Then by the induction hypothesis, there exists a run $r$ such that in $r$ there is a step in which $p$ arrived to $m$ from port $\SinglePort$.
In the second case, $(p,\SinglePort)$ was added to $\mathit{PacketData}$ in the $n$-th iteration. In this case, after iteration $n-1$ there must have existed a middlebox $m'$ adjacent to $m$, a state $q$ in which $\{(R_1,\overline{d_1}),\cdots,(R_k,\overline{d_l})\}\subseteq \mathit{StateData(m')}$, and $(p',\SinglePort')$, such that as a result of running $m'$ over $(p',\SinglePort')$ from state $q$, $(p,\SinglePort)$ was sent to $m$.
By the induction hypothesis, there exist runs $r_{1,1},\cdots,r_{k,l}$ in which $(R_1,\overline{d_1}),\cdots,(R_k,\overline{d_l})$ (respectively) are added to $\mathit{StateData(m')}$, as well as a run $r_0$ in which $p'$ arrives to $m'$ from $\SinglePort'$.
Then by \lemref{PacketCanComeAgainAlsoUnordered} we can constructs a run $r'$ in which $m'$ is in state $q$ and $p'$ has arrived to $m'$ from $\SinglePort'$.
The configuration $c$, which is obtained by $m'$ processing $p'$, is a successor of the last configuration of $r'$. We denote the resulting run by $r$, and note that in the last step of $r$, $p$ arrived to $m$ from port $\SinglePort$.

The proof for $(R,\overline{d})\in\mathit{StateData}(m)$ follows from similar arguments.

Finally we use \lemref{PacketCanComeAgainAlsoUnordered} to construct a witness run for the n-th iteration.
\qed
\end{proof}

The next lemma shows that when fixed-point occurs the data structure over-approximate $\mathit{PacketData}$ and $\mathit{StateData}$.

\begin{lemma}\label{lem:OverApprox}
When the algorithm reaches a fixed-point, if $(p,\SinglePort)\notin\mathit{PacketData}(m)$ (respectively., $(R,\overline{d})\notin\mathit{StateData}$), then there is no run in which $m$ receives $p$ from port $\SinglePort$ (resp., $\overline{d}$ is added to $R$).
\end{lemma}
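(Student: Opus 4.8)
The plan is to prove the contrapositive through an invariant on \emph{arbitrary} runs of the unordered semantics, showing that the algorithm's data structures, once stabilized, dominate the ``footprint'' of every run. Concretely, I would fix a run $r$ of length $n$ and prove by induction on $n$ that for every middlebox $m$: (a)~the content of each relation $R$ of $m$ in the final configuration of $r$ is contained in $\{\overline{d}\mid (R,\overline{d})\in\mathit{StateData}(m)\}$; (b)~for every packet $p$ residing in the ingress channel of $m$ at port $\SinglePort$ in the final configuration, $(p,\SinglePort)\in\mathit{PacketData}(m)$; and (c)~for every step of $r$ in which $m$ processes $p$ from $\SinglePort$, $(p,\SinglePort)\in\mathit{PacketData}(m)$ --- all referring to the fixed-point values of the data structures (note (c) is immediate from (b) applied one step earlier, since a processed packet sat in the channel beforehand, so really only (a) and (b) carry the induction). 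The lemma then follows at once: if $(p,\SinglePort)\notin\mathit{PacketData}(m)$ at the fixed point, (c) forbids any run in which $m$ receives $p$ from $\SinglePort$; and if $(R,\overline{d})\notin\mathit{StateData}(m)$, then (a), applied to the prefix of a run ending just after $\overline{d}$ is inserted --- and using that an increasing (hence remove-free) network never deletes $\overline{d}$ --- forbids any run in which $\overline{d}$ ever enters $R$.

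The base case $n=0$ is exactly the initialization of the algorithm: channels are empty (so (b),(c) are vacuous) and relations hold their initial values, which are placed into $\mathit{StateData}$ by $\mathit{InitialRelationValues}$. For the inductive step write $r=r'\cdot e$ with $|r'|=n-1$. If $e$ is a host $h$ sending a packet $p$ to $m$ on port $\SinglePort$, then $p\in P_h$ and $h$ is a neighbour of $m$, so $(p,\SinglePort)\in\mathit{PacketData}(m)$ already from the initialization via $\mathit{NeighborHostPackets}$ (and the sets only grow); all remaining claims are inherited from the induction hypothesis on $r'$. If $e$ is a middlebox $m'$ processing a packet $(p',\SinglePort')$, moving to a new state and emitting output $o$, then by (b) for $r'$ we get $(p',\SinglePort')\in\mathit{PacketData}(m')$, which also yields (c) for this step, and by (a) for $r'$ the relations of $m'$ immediately before $e$ are pointwise dominated by the $\mathit{StateData}(m')$-valuation. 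Since the network is syntactically increasing, running $m'$ on $(p',\SinglePort')$ from that larger valuation makes the same guarded commands fire (guards are positive, hence monotone, and mutually exclusive) and never deletes anything, so it produces output and a set of inserted tuples that \emph{cover} those performed by $e$ (cf.\ \lemref{IncreasingIsAMDLIncreasing}). But the algorithm is at a fixed point, so running $m'$ on $(p',\SinglePort')$ from the $\mathit{StateData}(m')$-valuation adds nothing new: the tuples inserted by $e$ are therefore already in $\mathit{StateData}(m')$, re-establishing (a) for $m'$, and each packet of $o$ together with its destination port is already in the corresponding neighbour's $\mathit{PacketData}$, re-establishing (b) for those neighbours. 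The only channel that shrank is the one from which $m'$ consumed $(p',\SinglePort')$, so (b) for $m'$ is preserved, and everything else is untouched. This closes the induction.

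I expect the delicate point to be the monotonicity step: one must use the induction hypothesis (a) to justify that the valuation the algorithm feeds to $m'$ over-approximates the \emph{actual} state of $m'$ when $e$ fires, and then invoke that a syntactically increasing middlebox --- run from a pointwise-larger relation valuation --- reproduces at least the same forwarding decisions and insertions (here the positivity, remove-freeness, and mutual exclusivity of the guards are all used, together with the observation that mid-execution insertions preserve the domination between the two runs). It is also worth stating the fixed-point condition explicitly: at a fixed point, for every $m$ and every $(p,\SinglePort)\in\mathit{PacketData}(m)$, executing $m$ on $(p,\SinglePort)$ from the $\mathit{StateData}(m)$-valuation contributes no new element to $\mathit{StateData}$ and no new pair to any neighbour's $\mathit{PacketData}$ --- this is what converts ``covers'' into ``already recorded''. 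Unlike the under-approximation direction (\lemref{UnderApprox}), this direction does not need \lemref{PacketCanComeAgainAlsoUnordered}: we only ever argue that the algorithm's sets grow at least as fast as the state of an arbitrary run, never that a single run realizes the whole fixed point.
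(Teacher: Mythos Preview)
Your argument is correct, and it takes a genuinely different route from the paper. The paper's proof is a short contradiction argument that leans on both \lemref{UnderApprox} and \lemref{PacketCanComeAgainAlsoUnordered}: it takes the witness run $r$ realizing the fixed point (from \lemref{UnderApprox}), assumes some run $r'$ exhibits an unrecorded event, and then uses \lemref{PacketCanComeAgainAlsoUnordered} to extend $r$ so that the same event occurs, contradicting that the algorithm had stabilized. Your approach is the more standard soundness proof for a fixed-point abstraction: a direct induction on run length showing that the fixed-point data structures are an invariant dominating every reachable configuration. This is longer but self-contained; in particular, as you note, it avoids \lemref{PacketCanComeAgainAlsoUnordered} entirely and does not depend on the existence of a single witness run. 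One small imprecision: the phrase ``the same guarded commands fire'' is not literally guaranteed in the corner case where no guard fires in the actual state but one does in the larger valuation; however, your actual claim --- that the output and insertions from the larger valuation \emph{cover} those from the smaller one --- is what you use and it is exactly what the converse direction of \lemref{IncreasingIsAMDLIncreasing} gives you, so the argument goes through.
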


\begin{proof}
Let $r$ be the witness run that the fixed-point under-approximates ($r$ exists by Lemma~\ref{lem:UnderApprox}).
Towards a contradiction we assume that there is a run $r'$ in which $m$ receives $p$ from port $\SinglePort$ (respectively, $\overline{d}$ was added to $R$), but such an event did not occur in $r$.
By Lemma~\ref{lem:PacketCanComeAgainAlsoUnordered}, we get that $r$ has an extension in which the event does happen.
But such an extension contradicts the fact that a fixed-point occurred.
Hence, the data structure over-approximates all runs.
\qed
\end{proof}
Lemma~\ref{lem:UnderApprox} and Lemma~\ref{lem:OverApprox} imply that the algorithm determines the safety problem, and the next theorem follows.

\begin{theorem}
The safety problem of syntactically increasing networks w.r.t. the unordered semantics is in PTIME.
\end{theorem}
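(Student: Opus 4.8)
The plan is to obtain the theorem as a consequence of the algorithm of \figref{ptime}, the running-time analysis already carried out for it, and the three lemmas immediately preceding the theorem. Concretely I would show (a) the algorithm runs in polynomial time, and (b) it returns \emph{safe} if and only if no abort configuration is reachable under the unordered semantics; together these give membership in PTIME. Part (a) is essentially finished above: every pass of the $\bwhile$ loop strictly enlarges $\mathit{StateData}$ or $\mathit{PacketData}$, so the number of passes is at most $|P||\PortSet|+\sum|\mathcal{R}_i|$; each pass performs at most $|P||\PortSet|$ middlebox executions, each costing $O(|S|)$ because the symbolic programs are loop-free; and under the bounded-arity assumption of \remref{SmallArity} the quantities $|P|$, $|\PortSet|$ and each $|\mathcal{R}_i|$ are polynomial in the input, so the total $(|P||\PortSet|+\sum|\mathcal{R}_i|)\cdot|P||\PortSet|\cdot|S|$ is polynomial.

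For part (b) I would argue soundness and completeness of the abort flag separately. Soundness: if the algorithm reports a violation, then at some pass there are a middlebox $m$, a pair $(p,\SinglePort)\in\mathit{PacketData}(m)$, and the current accumulated state $q$ (relation $R$ holding exactly the tuples recorded in $\mathit{StateData}(m)$) with $\delta_m(q,(p,\SinglePort))=\emptyset$. By \lemref{UnderApprox} there is an unordered run $r$ in which every recorded tuple of $m$ is inserted and in which $p$ reaches $m$ from $\SinglePort$; since a syntactically increasing representation has no $\bremove$ command, at the end of $r$ the relations of $m$ contain all of $\mathit{StateData}(m)$, i.e.\ $m$ is in a state whose relations include those of $q$. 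Extending $r$ via \lemref{PacketCanComeAgainAlsoUnordered} so that $p$ reaches $m$ from $\SinglePort$ once more, and invoking the monotonicity of ``reaching an $\babort$'' discussed below, the delivery of $p$ aborts, so the network is genuinely unsafe.

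Completeness: if the network is unsafe there is an unordered run ending with some $m$ in a state $q'$ receiving $p$ from $\SinglePort$ with $\delta_m(q',(p,\SinglePort))=\emptyset$. By \lemref{OverApprox}, once the algorithm reaches a fixed point we have $(p,\SinglePort)\in\mathit{PacketData}(m)$, and every relation tuple of $q'$ is recorded in $\mathit{StateData}(m)$ (again using that no tuple was ever removed in the run), so the accumulated state $q$ includes all relation contents of $q'$. Then by the same monotonicity of $\babort$ we get $\delta_m(q,(p,\SinglePort))=\emptyset$, so the algorithm detects the violation when it performs the abort test for $m$ and $(p,\SinglePort)$ with the stabilized $\mathit{StateData}$. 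Combining the two directions with part (a) yields the theorem.

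The single step that genuinely exploits the \emph{syntactic} restriction — and the one I expect needs the most care — is this monotonicity claim: if a syntactically increasing middlebox executes an $\babort$ on input $(p,\SinglePort)$ from a state whose relations are $D$, then it also executes an $\babort$ from any state whose relations $D'$ include $D$. I would prove it by following the (unique) path through the nested guarded-command blocks that leads to the $\babort$: guards use only positive membership tests and state-independent equalities, hence for a fixed packet the guard that fires at each level remains true, and by mutual exclusivity remains the only true one, when the relation contents grow; and each $\binsert$ executed along the path preserves the inclusion of the two states, so the larger run follows the very same path and aborts as well. Everything else in the argument is routine bookkeeping over the already-established lemmas.
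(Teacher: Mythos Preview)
Your proposal is correct and follows essentially the same approach as the paper: invoke the polynomial running-time analysis, then combine \lemref{UnderApprox} and \lemref{OverApprox} to establish that the algorithm flags a violation iff an abort configuration is reachable. The paper's own proof is a two-sentence sketch that leaves the monotonicity-of-$\babort$ step implicit (it simply asserts ``in which case the algorithm identifies the safety violation'' when the real state $q$ is only known to be contained in $\mathit{StateData}(m)$); you have correctly isolated this as the one nontrivial point and given a clean argument for it from the syntactic restrictions (positive guards, mutual exclusivity, no $\bremove$).
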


\begin{proof}
Safety is violated iff there exists a run $r$ that ends in a configuration $c$ where some middlebox is in state $q$ with packet $p$ pending on its port $\SinglePort$ such that $\delta_m(q,(p,\SinglePort))=\emptyset$.

By lemmas~\ref{lem:UnderApprox} and \ref{lem:OverApprox}, the latter holds iff at some iteration of the algorithm $(p,\SinglePort)\in PacketData(m)$, and the values pf $m$'s relations in state $q$ are included in $StateData(m)$, in which case the algorithm identifies the safety violation.
\qed
\end{proof}

\begin{remark}
Recall that for increasing networks, safety w.r.t. the unordered semantics and the FIFO semantics coincide. As such, the polynomial upper bound applies to both.
\end{remark}

\begin{remark}\label{rem:nTagsIsPSPACE}
The complexity analysis of the algorithm used the property that $|P|$ is polynomial in the network representation.
If $n$-tag packet headers are allowed, i.e. $P = H \times H \times T_1 \ldots \times T_n$, then $|P|$ is no longer polynomial in the network representation, damaging the complexity analysis of the algorithm.
In fact, in this case the safety problem w.r.t. the unordered semantics becomes PSPACE-hard even for stateless middleboxes.

Intuitively, $n$-tag packet headers allow a middlebox to maintain the state of $n$ automata  in the packet header, supporting a reduction from the emptiness problem of the intersection of $n$ automata, which is PSPACE-hard~\cite{kozen1977lower}.
\end{remark}

\begin{proof}
The PSPACE-hardness proof is by reduction from the problem of deciding the emptiness of intersection of $n$ automata~\cite{kozen1977lower}, which is formally defined as:
\begin{itemize}
\item Input: $n$ automata $A_1,\dots,A_n$ over alphabet $\{0,1\}$ with state set $Q$ (w.l.o.g. all automata have the same set of states).
\item Question: is there a word $w\in \{0,1\}^*$ that is accepted by all $n$ automata?
\end{itemize}
The reduction is as follows.
Given $n$ automata with state set $Q$ we define a network with one host and one middlebox.
The packets consist of $n+1$-tuples of tags from the domain $T = Q \cup \{0,1\}$.
Intuitively, the first $n$ tags hold the states of the $n$ automata, and the last tag is an input symbol for the automata.
The middlebox has two ports.
Port $0$ is connected to the host and port $1$ is a self loop.

The symbolic representation of the middlebox has four parts:
\begin{enumerate}
\item \emph{Initial state verifier.} The first part handles packets from port $0$. If the packet's first $n$ tags do not correspond to the $n$ initial states, then the middlebox discards the packet.
Otherwise it sends the packet to port $1$.
\item \emph{Advance state.}
The second part handles packets from port $1$.
In a sequence of $n|Q|$ commands, the program advances the state of each automaton (i.e., changes the corresponding packet tag) according to the symbol in tag $n+1$.
After the sequence, the program continues to the third part.
\item \emph{Accepting state verifier.}
If the packet's tag corresponds to $n$ accepting states, then the program aborts.
Otherwise the program continues to the fourth part.
\item \emph{New symbol generator.}
In the fourth part the program generates two packets that differ only in their $n+1$ tag.
In one packet the tag has value $0$ and in the second it has value $1$.
Both packets are sent back to port $1$.
\end{enumerate}

It is an easy observation that the intersection of the $n$ automata is non-empty iff abort is invoked.
\qed
\end{proof}

\subsection{Unordered Safety of Progressing Networks is in coNP}\label{sec:AcyclicInNP}
We prove coNP-membership of the safety problem in syntactically progressing networks by proving that there exists a witness run for safety violation if and only if there exists a ``short'' witness run, where a witness run for safety violation is a run from the initial configuration  in which at least one middlebox reaches an abort state.

The proof considers the \emph{network states} that arise in a run.
A \emph{network state} captures the states of all middleboxes (not to be confused with a network configuration, which also includes the content of every channel). Formally, let $\Net$ be a network whose middleboxes are defined symbolically via (in total) $n$ relations, namely $R_1,\dots,R_n$. Then the \emph{network state} consists of the values of $(R_1,\dots,R_n)$.

In order to construct a ``short'' witness run, we wish to bound both the number of different network states in a run and the number of steps in which a run stays in the same state. The former is bounded due to the progress of the network: once the state of some middlebox changes along a run, it will not change back to the previous state.
The latter is more tricky. To provide a bound, we wish to analyze the packets that ``affect'' the run. We define the notion of \emph{active packets}. The active packets are a superset of the packets that actually affect the run.

\Heading{Active packets}
Let $r$ be a finite run of a network. We say that a packet $p$ is \emph{active} in step $i$ of $r$, if it resides in the ingress channel of some middlebox $m$ and it is processed (i.e., received by $m$) in some future step of $r$.
A packet is \emph{inactive}, if it is pending in the ingress channel of $m$ until the end of the run.

The next lemmas show that only a few active packets are needed to reach a certain state in the network.
Intuitively, the proof of the lemma traverses the run from the last configuration to the first, and removes inactive packets (and steps that produce only inactive packets), which in turn makes other, earlier, packets inactive. For a run $r$ and a network state $s$ that appears in $r$, we denote by $r[s]$ an interval of the run that includes all consecutive occurrences of $s$ (for runs of progressing networks, the interval is unique).

\begin{lemma}\label{lem:UnfreezePackets}
Let $r$ be a run in which the network state changes exactly $k$ times, and the different states are $s_1,s_2,\dots,s_k$ (in this order).
Then for every prefix $r_{s_i}$ of $r$ that ends in a state $s_i$, there is an extension $e_{s_i}$ to $r_{s_i}$ such that:
(i)~$e_{s_i}$ visits the network states $s_i,\dots,s_k$;
(ii)~$e_{s_i}$ has at most $k-i$ active packets in every step; and
(iii)~the number of active packets in $e_{s_i}$ may decrease only after a change in the network state.
\end{lemma}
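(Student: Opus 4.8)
The plan is to obtain $e_{s_i}$ by appending to $r_{s_i}$ a \emph{pruned} copy of the tail of $r$, and to bound the number of active packets along that pruned tail. Two features of syntactically progressing networks under the unordered semantics drive the argument. First, since no $\bremove$ command occurs, every relation only grows, so the network state $(R_1,\dots,R_n)$ is monotone along any run: once it changes it never reverts, hence $r$ splits into the finitely many intervals $r[s_1],\dots,r[s_k]$, and a step that does \emph{not} change the network state is ``pure forwarding'' in a fixed middlebox state, so it can be replayed verbatim (copying the nondeterministic output chosen in $r$) whenever the middlebox is in that state. Second, the unordered semantics is monotone in the channel contents (Section~\ref{sec:unordered}): extra pending packets never disable a step, and a pending packet may be starved forever. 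Write $c$ for the configuration at the end of $r_{s_i}$ and $\rho$ for the suffix of $r$ after $r_{s_i}$; then $\rho$ is a run from $c$ that visits $s_i,s_{i+1},\dots,s_k$, with exactly the $k-i$ state-change steps $\tau_i,\dots,\tau_{k-1}$, where $\tau_j$ realizes $s_j\to s_{j+1}$ and processes some packet $p_j$.

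The pruning is a single backward pass. For each $j\in\{i,\dots,k-1\}$ I trace back inside $\rho$ one linear \emph{forwarding chain} feeding $\tau_j$: the packet $p_j$ is either already present in $c$, or sent by a host, or produced as one output of a unique earlier step $\sigma_{j,1}$ whose input $p_{j,1}$ is, recursively, traced the same way, and so on until the chain terminates at a packet of $c$ or a host packet. I declare every step on some chain, together with $\tau_i,\dots,\tau_{k-1}$, to be \emph{needed}, and let $\rho'$ consist of the needed steps in their original relative order in $\rho$, inserting a host-send immediately before the first needed step of each host-seeded chain. Then $\rho'$ is a legal run from $c$: a discarded step is pure forwarding and none of its outputs is consumed in $\rho'$ (the input of each needed step is produced either by its needed parent, which is kept and earlier, or by $c$ or a host), so deleting it disables nothing; and since only the $\tau_j$ change the network state and are kept in order, every needed step runs in exactly the network state it had in $r$, so its replay is valid, and $\rho'$ visits $s_i,\dots,s_k$ in order, giving (i). Finally $e_{s_i}:=r_{s_i}\cdot\rho'$.

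For (ii) and (iii) I would analyze, at each step of $\rho'$, the active packets, i.e.\ pending packets consumed later in $\rho'$. Every such packet is the ``current front'' of one of the $k-i$ chains: a chain is dormant (its host seed not yet sent), or sits on exactly one pending packet that advances one link per step until $\tau_j$ fires, or is finished. When two chains share a step they in fact share the same packet instance from that step backwards and diverge afterwards (never re-merging), so a shared front is a single active packet; hence the active-packet count is at most the number of chains currently non-dormant and unfinished, which is at most $k-i$ --- this is (ii). A chain's contribution drops from $1$ to $0$ only at the step $\tau_j$ that completes it, and $\tau_j$ is a network-state change; within an interval the count can only stay the same (a forwarding step moving a front), rise by one at a divergence point, or rise by one when a dormant chain is kicked off by a host-send --- never decrease --- which is (iii). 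The base case $i=k$ is immediate: $\rho'$ is empty, so $e_{s_k}=r_{s_k}$ visits $s_k$ with $0\le k-k$ active packets.

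I expect the bookkeeping in the counting step to be the real work. The chains are neither confined to a single interval nor pairwise disjoint --- a chain for $\tau_j$ may run through an earlier transition step $\tau_{j'}$ or through steps lying in $r[s_i],\dots,r[s_j]$ --- so one must argue carefully that chains sharing a step share it as the \emph{same} packet instance (hence contribute one active packet while merged and at most one each after diverging), which is exactly what pins the bound at $k-i$ rather than something wider. The second delicate point is the scheduling of chain seeds: a $c$-seeded chain is active, and counted, from the first step of $\rho'$, whereas a host-seeded chain must stay dormant until the interval containing its first forwarding step, so that the at-most-one-per-chain bound and the ``no decrease outside state changes'' property both survive; checking that this scheduling is always realizable (it is, by monotonicity) and that kicking a chain off never pushes the count above $k-i$ --- if chain $j$'s first forwarding step lies in $r[s_m]$, the $m-i$ earlier chains have already finished --- is the crux. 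The legality of $\rho'$, the preservation of the state sequence, and the base case are then routine.
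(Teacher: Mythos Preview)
Your construction is correct and ends up pruning the suffix of $r$ to exactly the same set of ``needed'' steps as the paper does, but the paper packages the argument as a backward induction on $|r|-|r_{s_i}|$ rather than as an explicit chain decomposition. At each inductive step the paper extends $r_{s_i}$ by one more step of $r$, invokes the hypothesis on the longer prefix to get an extension $e'$, and then either prepends that step to $e'$ (if it changes the network state, or if at least one of its outputs is active in $e'$) or drops it entirely. The $k-i$ bound and property~(iii) then fall out locally: prepending a kept non-state-changing step adds its input packet as active but removes at least one active output that did not yet exist, so the count is unchanged; prepending a state-changing step adds one to a bound of $k-i-1$ coming from the hypothesis applied at $s_{i+1}$.

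This inductive formulation sidesteps precisely the bookkeeping you flag as ``the real work'': it never needs to name chains, decide which chain a packet belongs to, reason about chains merging or diverging, or schedule host-seeded chains, because the only invariant carried is that every kept step produces at least one active output. Your chain picture makes the \emph{reason} for the bound $k-i$ (one forwarding chain per remaining state change) more transparent, and your analysis of shared fronts and host-send scheduling is correct, but the paper's induction gets to the same conclusion with considerably less case analysis.
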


\begin{proof}
The proof is by induction over $|r| - |r_{s_i}|$.
For the base case $r = r_{s_i}$ and the proof is trivial.
For $|r| > |r_{s_i}|$, we extend the prefix $r_{s_i}$ by one step according to $r$.
We denote this extended prefix by $r'$. Let $p$ be the last packet that was processed in $r'$, and let $m$ be the middlebox that processes $p$. That is, $m$ and $p$ are responsible for the step that extends $r_{s_i}$ to $r'$.

We consider two distinct cases.
In the first case, the network state in the last configuration of $r'$ is still $s_i$.
Then by the induction hypothesis we get that there is an extension $e'_{s_i}$ with at most $k-i$ active packets in interval $e'_{s_i}[s_i]$.
We consider the set of packets that were created by $m$ after processing $p$.
If this set has at least one active packet in $e'_{s_i}$, then we define $e_{s_i}$ to be $e'_{s_i}$ prepended by the last step of $r'$, where $p$ is marked as active and all the active packets of $e'_{s_i}$ remain active. Surely, there are no more than $k-i$ active packets in the first step of $e_{s_i}$ since at least one of the active packets in $e'_{s_i}$ resulted from $p$ and hence did not yet exist in this step, so it balances out the addition of $p$ as an active packet. In addition, the total number of active packets is not decreased in this step (thus, the claim holds).
Otherwise, we define $e_{s_i}$ to be $e'_{s_i}$, i.e. we skip the processing of $p$, and turn it to inactive.

In the second case, the last state in $r'$ is $s_{i+1}$.
Then by the induction hypothesis we get that there is an extension $e'_{s_{i+1}}$ with at most $k-i-1$ active packets.
In this case we construct $e_{s_{i}}$ simply by prepending to $e'_{s_i}$ the last step of $r'$. That is, $p$ is marked as active and all the active packets of $e'_{s_{i+1}}$ remain active.
There are only $k-i-1 + 1 = k-i$ active packets.
Hence, the claim holds.
This completes the proof.
\qed
\end{proof}

\begin{lemma}\label{lem:UnfreezePacketsFrequentChange}
Let $r$ be a run in which the network state changes exactly $k$ times, and the different states are $s_1,s_2,\dots,s_k$ (in this order).
Then there exists a run $r'$ such that:
(i)~$r'$ visits the network states $s_1,s_2,\dots,s_k$; and
(ii)~$r'$ stays in state $s_i$ at most $(k-i)^2|P||M|$ steps.
\end{lemma}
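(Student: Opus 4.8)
The plan is to start from a run that realizes the sequence $s_1,\dots,s_k$, first drive down the number of \emph{active} packets that are in play inside each interval using Lemma~\ref{lem:UnfreezePackets}, and then, inside each interval — where every middlebox is frozen in the internal state dictated by the (fixed) network state — delete redundant steps and re-route packets so that only polynomially many steps survive.

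First I would apply Lemma~\ref{lem:UnfreezePackets} iteratively. Applying it to $r$ with $i=1$ gives a run $\hat r^{(1)}$ that still visits $s_1,\dots,s_k$, has at most $k-1$ active packets at every step of $\hat r^{(1)}[s_1]$, and, by clause (iii), keeps the active-packet count non-decreasing inside every interval (the network state is constant there). Applying Lemma~\ref{lem:UnfreezePackets} again, now to $\hat r^{(1)}$ with $i=2$ and to the prefix that first reaches $s_2$, only rewrites the portion of the run from $s_2$ onward, so the bound already obtained for the $s_1$-interval is untouched, while we additionally get at most $k-2$ active packets throughout the $s_2$-interval. Iterating this for the remaining intervals, I obtain one run $\hat r$ that visits $s_1,\dots,s_k$ and in which, for every $i$, the interval $\hat r[s_i]$ has at most $k-i$ active packets at every step, the count being non-decreasing inside the interval.

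Second, I would shorten each interval $\hat r[s_i]=(c_a,\dots,c_b)$ separately. Throughout $[c_a,c_b]$ the network state is $s_i$, so every middlebox is frozen and no step inside the interval changes any relation. Two observations drive the argument. (a)~Every step inside the interval that processes a packet processes a packet that is active at the preceding configuration (a packet never processed later is, by definition, inactive), and since the count may not decrease inside the interval, each such processing step must re-create at least one active packet. (b)~The packets that must still be deliverable at $c_b$ in order to fire the transition into $s_{i+1}$ and to run the remaining intervals are exactly the active packets of $c_b$, of which there are at most $k-i$. Each such target packet sits at the end of a chain of processing steps (interleaved with host sends) rooted either in a host send or in one of the at most $k-i$ active packets already present at $c_a$; since all middlebox states are frozen, a chain that revisits the same (middlebox, incoming port, packet) situation can be short-circuited, so each chain can be taken of length at most $|P|\cdot|\PortSet|$, which is of the same polynomial size as the factor $|P|\,|M|$ appearing in the statement. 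Accounting for the at most $k-i$ host sends and for the way the at most $k-i$ chains for the different target packets can be forced to interleave and re-supply one another, this leaves an interval of at most $(k-i)^2|P||M|$ steps. Finally I would glue the shortened intervals back together, injecting for each $i$ exactly the $\le k-i$ active packets needed downstream and using monotonicity of the unordered semantics (extra pending packets never disable a step); this produces the run $r'$.

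The delicate point is the second step: converting ``only $k-i$ active packets matter'' into the precise polynomial bound. The naive move — forbidding repetitions of the full active-packet configuration and bounding the number of configurations — fails, since a multiset of $k-i$ packet/location pairs has exponentially many values. What makes it polynomial is that the at most $k-i$ active packets can be realized essentially one production chain at a time, each chain living in the polynomial-size space of (middlebox, port, packet) situations at the frozen network state; getting the bookkeeping right so that the chains for the $k-i$ target packets — together with the host sends — do not inflate one another, and so that the surviving steps still form a legal run, is where the work (and the extra factor $k-i$) is concentrated.
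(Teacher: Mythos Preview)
Your high-level architecture is the same as the paper's: use Lemma~\ref{lem:UnfreezePackets} to bound the number of active packets in each interval $[s_i]$, then exploit that middlebox states are frozen inside the interval to eliminate repeated ``(middlebox, packet)'' situations. Where you diverge is in how the second step is actually carried out, and that is where the proposal has a gap.

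The paper does \emph{not} try to build short chains and glue intervals. Instead it applies Lemma~\ref{lem:UnfreezePackets} once (to the empty prefix) and takes the \emph{shortest} extension $e'$ satisfying its conclusions. It then introduces a packet-id scheme: a host-sent active packet gets a fresh id; if processing an active packet yields exactly one active output, the id is inherited; if it yields several, they all get fresh ids. Because the active-packet count in $e'[s_i]$ is non-decreasing and bounded by $k-i$, fresh ids can be created at most $k-i$ times in that interval. Between two such creation events there are at most $k-i$ ids in circulation, so if more than $(k-i)|P||M|$ steps occur some middlebox must receive the same packet with the same id twice; deleting the first such processing step (its lone active output is reproduced later with the same id, so nothing downstream is lost) contradicts minimality of $e'$. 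Multiplying the at most $k-i$ ``blocks'' by $(k-i)|P||M|$ gives the bound.

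Your plan replaces this minimality-plus-ids argument with a direct chain construction, and you correctly flag that this is ``the delicate point''; but you do not carry it out, and the obvious attempt runs into trouble. Short-circuiting a repeat on one chain deletes steps that may have produced the active packets feeding \emph{other} chains, so the $k-i$ chains cannot be shortened independently. Likewise, after you shorten interval $[s_i]$ you must still hand the \emph{exact} multiset of (at most $k-i$) active packets to interval $[s_{i+1}]$, and your appeal to monotonicity only says extra packets are harmless, not that the needed ones are still there. The paper's device of tracking ids and arguing by minimality is precisely what resolves both issues simultaneously; without it (or an equivalent mechanism) the second step is not a proof.

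A minor point: your iterative application of Lemma~\ref{lem:UnfreezePackets} is unnecessary. A single application from the initial configuration already yields, by the induction in its proof, at most $k-i$ active packets throughout $[s_i]$ for every $i$; the paper relies on this directly.
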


\begin{proof}
For the sake of the proof we give a unique id to every active packet according to the following rules:
\begin{itemize}
\item If a host sends an active packet, then the packet gets some unique id (for example, maximal id assigned so far + 1).
\item If an active packet $p_1$ was processed by a middlebox, and the middlebox forwards only one active packet $p_2$, then $p_2$ gets the id of $p_1$.
\item If an active packet $p_1$ was processed by a middlebox, and the middlebox forwards more than one active packet, then each active packet gets a unique id (for example, maximal id assigned so far + 1).
\end{itemize}
We now return to the proof. Let $e'$ be the shortest extension for the prefix of $r$ that consists of the initial configuration that satisfies the assertions of Lemma~\ref{lem:UnfreezePackets}.
The extension $e'$ clearly visits $s_1,\dots,s_k$. We claim that it stays in state $s_i$ at most $(k-i)^2|P||M|$ steps.
The proof of the claim follows from the fact that if there are two steps $j_1 < j_2$ in $e'[s_i]$ such that in both steps a middlebox $m$ received an active packet $p$ with id $\mathit{id}$, and no new active packet (i.e., an active packet with a new packet id) was generated between those rounds, then a run in which $m$ does not process packet $p$ with id $\mathit{id}$ is shorter by one step, and reaches the same configuration in step $j_2 - 1$.
Hence, if a certain middlebox processed more than $|P|(k-i)$ packets, then it must be the case that either a new active packet was created, or it processed the same packet twice.
The proof is complete by the pigeonhole principle and by the fact that there are at most $k-i$ active packets and $|M|$ middleboxes.
\qed
\end{proof}

The next lemma shows that there is a short witness for reachability of a state in progressing networks.
\begin{lemma}\label{lem:AcyclicHasShortWitness}
Let $\Net$ be a syntactically progressing network whose middleboxes are defined symbolically via relations $R_1,\dots,R_n$ (in total).
Then there is a run ending in an abort state if and only if there is such a run whose length is at most $(\sum_{i=1}^n |\mathcal{R}_i|)^3|P||M|$.
\end{lemma}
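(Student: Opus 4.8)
The plan is to extract a short witness from an arbitrary one, combining the monotonicity of the network state in syntactically progressing networks with Lemmas~\ref{lem:UnfreezePackets} and~\ref{lem:UnfreezePacketsFrequentChange}. The ``if'' direction is trivial, since a run of the stated length is in particular a run, so all the work is in the ``only if'' direction.

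So let $r$ be a run ending in an abort state. Truncating $r$ at the first configuration containing a middlebox in the abort state, we may assume its last step is the step in which a middlebox $m$, in state $q$, receives a packet $p$ from a port $\SinglePort$ with $\delta_m(q,(p,\SinglePort))=\emptyset$. Let $s_1,\dots,s_k$ be the distinct network states visited along $r$, in order. Since the network is syntactically progressing, every guarded command uses only $\binsert$ and never $\bremove$, so each relation $R_i$ is non-decreasing along any run; hence the tuple $(R_1,\dots,R_n)$ changes at most $\sum_{i=1}^n|\mathcal{R}_i|$ times, so $k\le 1+\sum_{i=1}^n|\mathcal{R}_i|$.

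Now I would apply Lemma~\ref{lem:UnfreezePacketsFrequentChange} to $r$ to obtain a run $r'$ that visits exactly the network states $s_1,\dots,s_k$ and stays in each $s_i$ for at most $(k-i)^2|P||M|$ steps. The key point --- and the step I expect to be the main obstacle --- is checking that $r'$ still ends in an abort step. For this one uses that the abort-triggering packet $p$ is \emph{active} in $r$ (it is processed, in the very last step), so it is one of the $\le k-i$ active packets tracked by assertions~(ii)--(iii) of Lemma~\ref{lem:UnfreezePackets} in the final interval; since the ``unfreezing'' construction behind Lemmas~\ref{lem:UnfreezePackets}--\ref{lem:UnfreezePacketsFrequentChange} only discards \emph{inactive} packets and the steps that merely produce them, $p$ and the step that consumes it survive. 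Moreover $r'$ reaches the network state $s_k$ in which $r$ was immediately before aborting, so $m$ is again in state $q$ when it receives $p$, whence $\delta_m(q,(p,\SinglePort))=\emptyset$ still and $r'$ aborts.

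Finally, summing the per-state bounds (and adding the single abort step),
\begin{align*}
|r'| &\;\le\; 1+\sum_{i=1}^{k}(k-i)^2\,|P||M|
     \;=\; 1+|P||M|\sum_{j=0}^{k-1}j^2
     \;=\; 1+|P||M|\,\tfrac{(k-1)k(2k-1)}{6}\\
     &\;\le\; k^3|P||M|
     \;\le\; \Bigl(\textstyle\sum_{i=1}^n|\mathcal{R}_i|\Bigr)^3|P||M| ,
\end{align*}
where $\sum_{j<k}j^2<k^3/3$ leaves room for the extra abort step and the harmless $+1$ in $k\le 1+\sum_i|\mathcal{R}_i|$ is absorbed by the cube. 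Hence every witness run for safety violation can be replaced by one of length at most $(\sum_{i=1}^n|\mathcal{R}_i|)^3|P||M|$, which proves the lemma.
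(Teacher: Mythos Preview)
Your approach is the paper's: bound the number $k$ of distinct network states via the insert-only nature of syntactically progressing middleboxes, invoke Lemma~\ref{lem:UnfreezePacketsFrequentChange}, and sum. The paper's own proof is two lines to this effect, and it implicitly treats the lemma as a statement about reachability of a \emph{network state} $(R_1,\dots,R_n)$ rather than of the separate abort state (see the sentence preceding the lemma and the reduction ``by adding a special abort state'' in the proof of Theorem~\ref{thm:AcyclicIsNPComplete}).

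You instead take the statement at face value and argue that the abort step survives the construction of Lemmas~\ref{lem:UnfreezePackets}--\ref{lem:UnfreezePacketsFrequentChange}. That argument does not go through: the abort step consumes $p$ but outputs nothing and does not modify any $R_i$, so in the backward induction of Lemma~\ref{lem:UnfreezePackets} it falls into the ``produces no active packet and leaves the network state unchanged'' case and is \emph{discarded}; once it is gone, $p$ is no longer processed in the extension and becomes inactive, cascading further removals. The clean fix is the one the paper uses implicitly: replace $\babort$ by an $\binsert$ into a fresh one-element relation $R_{\mathrm{abort}}$. The aborting step then changes the network state to a new final $s_k$, Lemma~\ref{lem:UnfreezePacketsFrequentChange} guarantees $r'$ reaches $s_k$, and that state-changing step is retained automatically. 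This also eliminates your separate ``$+1$'' for the abort step.

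A small arithmetic slip: from $k\le 1+\sum_i|\mathcal{R}_i|$ you cannot conclude $k^3\le(\sum_i|\mathcal{R}_i|)^3$, so the displayed chain (which passes through $k^3|P||M|$) is broken as written. The slack in $\sum_{j<k}j^2<k^3/3$ does absorb the ``$+1$'' once $\sum_i|\mathcal{R}_i|\ge 2$, and with the encoding above the bookkeeping becomes clean anyway.
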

\begin{proof}
The proof is an immediate corollary of Lemma~\ref{lem:UnfreezePacketsFrequentChange}.
If there is a run $r$ that leads to a certain state of $R_1,\dots,R_n$, then since all middleboxes are progressing we get that the number of intermediate network states $k$ is at most $(\sum_{i=1}^n |\mathcal{R}_i|)$.
We denote the intermediate states by $s_1,\dots,s_k$. By Lemma~\ref{lem:UnfreezePacketsFrequentChange}, there is also a run $r'$ that visits the same $k$ states and stays in state $s_i$ at most $(k-i)^2|P||M| \leq k^2|P||M|$ steps. Therefore $|r'| \leq k^3|P||M|$.
\qed
\end{proof}

Since the size of each relation is polynomial in the size of the network, we conclude:
\begin{theorem}\label{thm:AcyclicIsNPComplete}
The safety problem w.r.t. the unordered semantics for progressing networks is coNP-complete.
\end{theorem}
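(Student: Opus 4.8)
The plan is to prove the two directions of coNP-completeness separately, drawing essentially entirely on results already established.

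For coNP-hardness, I would appeal to Lemma~\ref{lem:progressingIsNPHard} (or, for the variant using only ``standard'' middleboxes, Lemma~\ref{lem:LearningSwitchIsNPHard}), which gives coNP-hardness of the isolation problem for progressing networks under the unordered semantics. As observed at the start of Section~\ref{sec:complexityLowerBounds}, isolation reduces to safety by adding isolation middleboxes, and these are stateless, hence progressing (Lemma~\ref{lem:Hierarchy}); so the network stays in the progressing class and the hardness transfers verbatim to the safety problem.

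For coNP-membership, I would show that the complementary problem --- reachability of an abort configuration, i.e. a violation of safety --- is in NP. A certificate is a run from the initial configuration ending in an abort state. By Lemma~\ref{lem:AcyclicHasShortWitness}, if any such run exists then one exists of length at most $(\sum_{i=1}^n|\mathcal{R}_i|)^3|P||M|$; under the constant-arity assumption of Remark~\ref{rem:SmallArity} each $|\mathcal{R}_i|$ is polynomial in $|H|,|\PortSet|,|T|$, and $|P|=|H|^2|T|$, so this bound is polynomial in the input size. Such a run is described by a polynomially long list of events (a host sending a packet to a neighbour, or a middlebox processing a packet arriving on a given port), each of polynomial size, so the certificate is of polynomial size. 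It is checkable in polynomial time by simulating the run step by step: one keeps the current network state (the valuations of $R_1,\dots,R_n$, of polynomial total size) and the multiset content of each channel --- which has only polynomially bounded multiplicities since at most polynomially many packets are ever produced along a run of polynomial length --- and each step either appends one packet to a channel or removes one and executes the loop-free symbolic program of the relevant middlebox in time linear in $|S|$ (as in the analysis of Section~\ref{sec:upper-increasing}); finally one checks that some middlebox is in an abort state (its program reached an $\babort$). Hence safety violation is in NP, so safety is in coNP, and together with the hardness this yields coNP-completeness.

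The main obstacle is not located in this theorem: all the substance lives in Lemma~\ref{lem:AcyclicHasShortWitness} and the active-packet machinery of Lemmas~\ref{lem:UnfreezePackets}--\ref{lem:UnfreezePacketsFrequentChange}, which combine the progress property (only polynomially many distinct network states can occur, since relations only grow) with a bound on how long a run can linger in a single network state. At the level of this theorem the only points requiring care are (i)~confirming that the witness-length bound is genuinely polynomial in the encoding, which is exactly where Remark~\ref{rem:SmallArity} is used, and (ii)~verifying a guessed run in polynomial time without ever materialising exponentially large channel contents --- both of which are straightforward given the run's polynomial length.
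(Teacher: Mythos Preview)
Your proposal is correct and follows essentially the same approach as the paper's proof: hardness via Lemma~\ref{lem:progressingIsNPHard} (with the isolation-to-safety reduction noted in Section~\ref{sec:complexityLowerBounds}), and membership via the polynomial-length witness of Lemma~\ref{lem:AcyclicHasShortWitness} under the constant-arity assumption of Remark~\ref{rem:SmallArity}, verified by guess-and-simulate in time polynomial in the run length and $|S|$. Your treatment is slightly more explicit about why verification of a guessed run is polynomial (tracking channel multisets along a polynomial-length run), but the argument is the same.
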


\begin{proof}
The lower bound follows from Lemma~\ref{lem:progressingIsNPHard}.
The upper bound is obtained by first observing that the complement of the safety problem is polynomially reducible to the reachability of a state in the network (by adding a special abort state).
In addition, the state reachability problem is in NP: since the arity of each relation in the considered middlebox programs is fixed, its size is polynomial in the size of the network.
Hence, by Lemma~\ref{lem:AcyclicHasShortWitness}, there is a witness run for reachability whose length is polynomial.
Thus, the NP procedure is to guess the short run and verify it, in time linear in the length of the run multiplied by $|S|$ (the size of the symbolic representation of the middleboxes which also bounds the time it takes to compute their transitions).
\qed
\end{proof}

\subsection{\label{sec:expspaceUpper}Unordered Safety of Arbitrary Networks is in EXPSPACE}

In this section we show how to solve the reachability problem of symbolic networks by a reduction to the \emph{coverability problem} of \emph{Petri Nets}, which is EXPSPACE-complete~\cite{lipton1976reachability,rackoff1978covering}.

Similarly to the lower bound result (\secref{unordered-expspace-lower}), the
upper bound result on the complexity of safety of arbitrary networks is
similar to previous work (\cite{lipton1976reachability,sen2006model}), and is
included here for completeness of presentation.

A Petri Net is a four-tuple $\mathcal{C} = (\mathcal{P}, \mathcal{T}, \mathcal{I}, \mathcal{O})$ where $\mathcal{P}$ is a set of \emph{places}, $\mathcal{T}$ is a set of \emph{transitions}, $\mathcal{I}: \mathcal{T} \to \Nat^\mathcal{|P|}$ is an \emph{input function} and $\mathcal{O}: \mathcal{T} \to \Nat^\mathcal{|P|}$ is an \emph{output function}. A \emph{marking} $\mu \in \Nat^\mathcal{|P|}$ denotes the number of \emph{tokens} assigned to each place.
Given a marking, a transition $t\in \mathcal{T}$ can be \emph{fired} (equivalently  \emph{enabled}) if $\mathcal{I}(t) \leq \mu$. Firing a transition $t\in \mathcal{T}$ from marking $\mu$ produces a new marking $\mu'=\mu-\mathcal{I}(t) + \mathcal{O}(t)$~\cite{peterson1977petri}. We denote a firing of a transition by $\mu \to_t \mu'$.
In the following, we will refer to non-zero dimensions in $\mathcal{I}(t)$ as \emph{consumed} tokens, and non-zero dimensions in $\mathcal{O}(t)$ as \emph{produced} tokens.
A finite run in a Petri Net from a marking $\mu_0$ is a series of transitions and resulting markings $\mu_0\to_{t_0}\mu_1\to_{t_1}\dots\to_{t_k}\mu_k$ s.t. $t_0$ can be fired from $\mu_0$ and each following transition can be fired from the previous marking.

The coverability problem asks, given a Petri Net $\mathcal{C}$, an initial marking $\mu_0$ and a target marking $\mu$, whether there is a finite run leading to a marking $\mu'$ s.t. $\mu' \geq \mu$.

We now show how we encode a symbolic network as a Petri Net, and how we formulate the reachability problem as a Petri Net coverability problem.
We first describe the role of every place and the initial marking, and then we describe the set of transitions used to simulate a run of the network.

\paragraph{Places}
The places are partitioned to sets of places in the following way:
\begin{itemize}

\item Channel places. To keep track of the packets over the unbounded channels, we assign a place to every pair of packet $p\in P$ and channel. The number of tokens in the place corresponds to the number of instances of packet $p$ on the channel. The initial marking for each packet place is $0$.

\item Active and non-active relation places. For every element $\overline{d}$ in every relation $R$ in every middlebox we have two places. The active place will have the marking $1$ when the element is in the relation. When the element is not in the relation the non-active place will the marking $1$. The initial marking for the active (respectively, non-active) place is $1$ if initially the element is in the relation (resp., not in the relation).
Otherwise, the initial marking is $0$. The markings for both places will only be $0$ or $1$.
We need two places since the Petri Net semantics does not allow to encode negative (i.e., non-membership) conditions.

\item Global command place. We have a single place that is used to make sure that at most one middlebox is processing a packet in every step.
The initial marking for the place is $1$; it is consumed whenever a packet processing starts, and produced when it ends.

\item Command places. We have a place for every triple of command, processed packet and input port in every middlebox in the network. The markings on the places are used to keep track of the next command to be executed. In particular, each guarded command block has a single place  (for every combination of packet and input port) rather than a place for each guarded command in the block. This ensures that only one of the guarded commands in the block whose guards evaluate to true is executed.
Having a separate command place for every packet processed and every input port allows us to evaluate variables that appear in the command (including the guards).
The initial marking for the topmost guarded command block in each middlebox (with every combination of packet and input port) is $1$. The initial marking for the rest is $0$.

\item Auxiliary guard places. To allow conjunction and disjunction in the guard we add auxiliary guard places. The initial marking for each of these places is $0$.

\item Abort place. To keep track of the safety state of the network, we assign a single place for all $\babort$ calls made during the network run. The initial marking for the place is $0$.

\end{itemize}

\paragraph{Transitions}
For each middlebox in the network we define a ``command transition'' for each combination of processed command, input packet, input port, and next command, as explained below.
For some commands only a single ``next'' command exists, however, since we allow non-determinism, some commands (specifically, guarded command blocks with overlapping guards) have multiple ``next'' commands, in which case a separate transition is defined for each one of them.

For a guarded command block we define a \emph{set} of ``command transitions''. This allows us to handle complex guards (i.e. guards which contain conjunction and disjunction in addition to atomic propositions). To do so, we recursively decompose each guard while producing a sequence of transitions that simulates the evaluation of the boolean formula in the guard.

To correctly simulate cases in which no guard in a guarded command block is evaluated to \emph{true}, and as a result no command is processed,
we add a \emph{default} guarded command to each guarded command block. The guard of the default guarded command is a conjunction of the negations of the guards of the other guarded commands in the block. The command of the default guarded command is $\bfrw~\emptyset$.

Each of the command transitions of the first command in the middlebox (i.e. the topmost guarded command block) consumes a token from the global command place, and each terminating command that can be executed in the middlebox run produces a token in the global command place. Note that the addition of default guarded commands as described above means that the terminating commands are well defined (i.e. for every command  in the middlebox, if it is terminating in some run then it is a terminating command in every run that it is executed in).
Each of the command transitions of the first command in the middlebox also consumes a token from the corresponding channel place.
Furthermore, every command transition consumes its command place, and produces the command place of the following command, specifically the place corresponding to the combination of the next command to be executed and the same input packet and input port as the packet and port processed in the current command (or the first command in case it is a terminating command).

In addition to the above, the command transition associated with a command, input packet, input port and next command consumes and produces tokens in the places relevant to the corresponding command, as well as the guards (in the case of a guarded command block), as described below.

Since we have a command transition for every combination of command, input packet and input port, when we translate the command to a transition we consider the values of the variables ($src$, $dst$, $type$ and $port$) at that transition based on the packet and port currently processed by the middlebox, and simplify the command (and guards) accordingly. For example, for the command $\ttrusted.\binsert~dst$, packet $(h_0,h_1,t_0)$ and port $\SinglePort_0$, the command simplifies to $\ttrusted.\binsert~h_1$.
In particular, atomic equality predicates are now essentially equalities between constants, and are trivially simplified.

The transition for each guarded command in a guarded command block consumes a token from the command place for the guarded command block, and produces a token in the command place of the first command in the guarded command, as well as consuming and producing the tokens of the guard as described below.

We begin by describing the tokens consumed and produced by the atomic propositions of the guards (after simplification).
Note that since guards do not change the state of the network, all tokens consumed by the guard must also be produced by the guard.
\begin{itemize}
\item Relation membership ($\overline{d} \in R$). Consume (and produce) tokens in the active place for element $\overline{d}$  in relation $R$.
\item Negated relation membership ($\overline{d} \notin R$). Consume (and produce) tokens in the inactive place for element $\overline{d}$  in relation $R$.
\end{itemize}

Next, we describe how disjunction and conjunction are handled:
In the case of a guarded command whose guard's formula $\varphi$ contains a disjunction or conjunction, we produce a series of transitions by recursively decomposing the formula, and producing a set of transitions for every decomposition step. Each decomposition step introduces new auxiliary guard places.
We denote by $c_i \simplies_\varphi c_j$ an intermediate step in the decomposition process where $c_i$ is the place that initiates the evaluation of $\varphi$ and $c_j$ is the place of the next step in the execution. Specifically, initially, $c_i$ is the command place for the guarded command and $c_j$ is the command place of the command.
The recursive decomposition of guard $c_i \simplies_\varphi c_j$ is as follows:
\begin{itemize}
\item Conjunction ($\varphi = \varphi_1 \land \varphi_2$).
We introduce five auxiliary places, denoted $c_1$, $c_2$, $c_3$, $c_4$ and $c_5$, two intermediate steps, and four new transitions.
The first transition consumes one token from $c_i$ and produces two tokens in $c_1$.
The second and third transitions consume one token each from $c_1$ and produce a token in $c_2$ and $c_3$ respectively.
We produce two intermediate steps: $c_2 \simplies_{\varphi_1} c_4$ and $c_3 \simplies_{\varphi_2} c_5$.
Finally, we produce a final transition that consumes one token from both $c_4$ and $c_5$, and produces a token in $c_j$.

\item Disjunction ($\varphi = \varphi_1 \lor \varphi_2$).
We introduce four auxiliary places, denoted $c_1$, $c_2$, $c_3$ and $c_4$, two intermediate steps, and four new transitions.
The first transition consumes a token from $c_i$ and produces a token in $c_1$. Likewise, the second transition consumes a token from $c_i$ and produces a token in $c_2$.
We produce two intermediate steps: $c_1 \simplies_{\varphi_1} c_3$ and $c_2 \simplies_{\varphi_2} c_4$.
The third transition consumes a token from $c_3$ and produces a token in $c_j$. Likewise, the fourth transition consumes a token from $c_4$ and produces a token in $c_j$.
\end{itemize}

The process is performed recursively on $c_i \simplies_{\varphi_1} c_j$ and $c_i \simplies_{\varphi_2} c_j$. The process terminates for $c_i \simplies_\varphi c_j$ once $\varphi$ is an atomic proposition, in which case a single transition is produced, which consumes a token from $c_i$, consumes and produces the tokens for the atomic proposition as described above, and produces a token in $c_j$.

Finally, we describe the dimensions consumed and produced by the commands $\bforward$, $\binsert$, $\bremove$ and $\babort$.
\begin{itemize}

\item $\bforward$.
Produce: the appropriate packets in the egress channel.
We note that in the special case of $\bfrw~\emptyset$ no tokens are produced.

\item $\binsert$.
We replace every $\binsert$ command with a guarded command block consisting of two guarded commands. The first guarded command represents the case where the element is already in the relation, in which case the guard will be a relation membership predicate, and the command will be $\bfrw~\emptyset$.
The second guarded command represents the case where the element is not in the relation. The guard of the command will be a negated relation membership predicate to the guard, and the transition produced from the command will consume and produce the following:

Consume: a token from the appropriate non-active place of the new element.

Produce: a token in the appropriate active place of the new element.

\item $\bremove$. Analogous to $\binsert$.

\item $\babort$.
Produce: a token in the \emph{abort} place.

\end{itemize}
This concludes the description of the command transitions.

Finally, for every host $h$ and every packet $p\in P_h$ we have a ``host transition'' that produces a token in the corresponding ingress channel place of the neighbor middlebox.

\paragraph{From Network Safety to Petri Net Coverability}
Non-safety of the network amounts to a run in the Petri Net where an \emph{abort} place gets a token.
The target marking for the coverability problem is therefore a vector of $0$s, with $1$ in the \emph{abort} place.

As the reduction is polynomial, we get that the stateful network reachability problem is in EXPSPACE.

The reduction, combined with the lower bound implies:
\begin{theorem}\label{thm:EXPSPACEcomplete}
The safety problem of arbitrary stateful networks w.r.t. the unordered semantics is EXPSPACE-complete.
\end{theorem}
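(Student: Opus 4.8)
The plan is to assemble the two directions that the preceding development has already set up. For membership in EXPSPACE, I would invoke the reduction from stateful network reachability to Petri Net coverability constructed above: given a symbolic network $\Net$, one produces a Petri Net $\mathcal{C}$, an initial marking $\mu_0$, and a target marking $\mu$ equal to the all-zero vector except for a single token in the \emph{abort} place. By Remark~\ref{rem:SmallArity} the number of relation elements is polynomial, so the numbers of channel places, active and non-active relation places, command places and auxiliary guard places are all polynomial, and the whole construction is computable in polynomial time. Since coverability for Petri Nets is EXPSPACE-complete~\cite{lipton1976reachability,rackoff1978covering}, it follows that checking whether some abort configuration is reachable --- i.e. the complement of safety --- is in EXPSPACE; as EXPSPACE is closed under complementation, safety itself is in EXPSPACE.

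The nontrivial part of the upper bound is the correctness of the encoding, namely that $\mu$ is coverable from $\mu_0$ in $\mathcal{C}$ if and only if some abort configuration of $\Net$ is reachable under the unordered semantics. I would prove this by a step-by-step simulation in both directions. From a network run to a Petri Net run: a host emitting a packet is matched by the corresponding host transition; a middlebox atomically processing one packet from an ingress channel is matched by a block of transition firings that begins by consuming the global command token and the relevant channel token, evaluates the guard of the topmost guarded command block via the recursive conjunction/disjunction decomposition over the auxiliary guard places, executes the body of the selected guarded command --- updating active/non-active relation places for $\binsert$ and $\bremove$, depositing channel tokens for $\bfrw$, and putting a token in the \emph{abort} place for $\babort$ --- and restores the global command token when a terminating command is reached. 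Conversely, any run of $\mathcal{C}$ can be re-ordered so that between successive acquisitions of the global command token it performs exactly one such middlebox-step simulation, because the global command place acts as a mutex; the invariant that, for each relation element, the active and non-active place hold exactly one token between them is preserved by every transition, so membership and non-membership tests are faithful; and the \emph{default} guarded command added to each block (whose guard is the conjunction of the negations of the other guards) makes ``no guard holds'' a genuine branch and keeps the set of terminating commands well defined. Since multisets of channel tokens correspond exactly to unordered channel contents, this yields the desired equivalence.

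For the lower bound I would appeal to Lemma~\ref{cor:EXPSPACEhard}, which shows that the reachability problem for arbitrary networks under the unordered semantics is EXPSPACE-hard, via the reduction from control state reachability in simple VASSs (Corollary~\ref{cor:SimpleVASS}). Since reachability is the complement of isolation and EXPSPACE is closed under complement, isolation is EXPSPACE-hard; and since isolation reduces in polynomial time to safety through the addition of a stateless isolation middlebox --- which leaves the network an arbitrary network --- safety is EXPSPACE-hard as well. Combining this with the membership result gives EXPSPACE-completeness.

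I expect the main obstacle to be precisely this correctness argument for the Petri Net encoding, and within it the most delicate point is the side-effect-free, faithful simulation of guard evaluation: the recursive decomposition of a compound guard spawns several transitions and fresh auxiliary places, and one must check that a partially evaluated guard can neither leak tokens nor be observed by another middlebox (this is where the global command mutex and the fact that guard atoms only consume-and-produce, never net-consume, relation tokens are essential), and that overlapping guards inside a block are handled correctly by keeping a single command place per block rather than per guarded command, so that exactly one enabled branch fires.
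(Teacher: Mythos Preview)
Your proposal is correct and follows essentially the same approach as the paper: the upper bound via the polynomial reduction to Petri net coverability described in Section~\ref{sec:expspaceUpper}, and the lower bound via Lemma~\ref{cor:EXPSPACEhard} together with the reduction from isolation to safety. In fact you go further than the paper, which states the theorem as an immediate consequence of the construction and the lower bound without spelling out the bisimulation argument for correctness of the encoding; your sketch of that argument (global command place as mutex, the active/non-active invariant, the role of the default guarded command) is a welcome addition rather than a deviation.
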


\section{\label{sec:benchmark}Implementation and Case Studies}

In this section, we present several examples of networks consisting of stateful middleboxes and their safety properties. 
We describe a prototype implementation of a tool for verification of stateful networks, and describe our initial experience while running the tool on the networks listed in \exref{topologies} and illustrated in \figref{MidTop}.
For the experiments we used a machine equipped with a quad core Intel Core i7-4790 CPU and 32GB of memory, running Ubuntu Linux 14.04.

\subsection{Network Examples}

\paragraph{Load Balancer and IDS}
As an example consider the network shown in \figref{benchmark:lbids}. Here $A$ is a host, $lb$ is a load balancer, which can send a packet received from $A$ to either $r_1$ or $r_2$. Both $r_1$ and $r_2$ are rate limiters, \ie they count and limit the number of packets sent between host pairs. Let us consider a case where the administrator wants to ensure that exactly $8$ packets sent by $A$ can be received by $B$. If the load balancer in this case sends packets from $A$ to both $r_1$ and $r_2$, then this rate limit does not hold.

\paragraph{Firewall and Proxy}
Consider the network in \figref{benchmark:fwproxy}. Here, $c$ is a content addressable cache, which on receiving a packet checks if it has previously seen either server $S_1$ or $S_2$ respond to a packet of the same type; if so it sends back the previously observed response, otherwise it forwards the request to the packets original destination. $f$ is a learning firewall. We want to ensure that $A$ cannot receive data from $S_1$, while $B$ should be able to receive data from both $S_1$ and $S_2$. This is complicated by the fact that $c$'s response is based on the packet type: in the current configuration if $B$ sends a request for type $t$ to server $S_1$ then $A$ can access the response by subsequently sending a request with the same type $t$ addressed to server $S_2$. In general this problem is not solvable without changing the cache to be policy aware.

\paragraph{Multi-Tenant Datacenter}
Consider a multi-tenant datacenter such as Amazon EC2 shown in \figref{benchmark:mdc}. In such datacenters each tenant (customer who purchase VMs from the provider) gets to add rules about their VMs, to the firewall to which their VMs are connected. 
For example in \figref{benchmark:mdc}, each tenant $i$ owns VMs $pub_1^i$ and $pri_1^i$, and programs the rules for firewall $f_i$. Given a set of rules for firewall $f_1$ and $f_2$ we verify that VMs of the same tenant can communicate with each other and that $pri$ VMs of one tenant can send packets to $pub$ VMs of the other.

\subsection{results}

\paragraph{Increasing Middleboxes}
Increasing networks are verified using LogicBlox, a Datalog based database system~\cite{aref2015design}.
The Multi-Tenant Datacenter example is an increasing network. Our tool produced a datalog program with 35 predicates, 153 rules and 29 facts. LogicBlox successfully reached a fixed point in 3s, and proved all required properties. 

\paragraph{Arbitrary Middleboxes}
Progressing and Arbitrary networks are verified using LOLA, a Petri-Net model checker~\cite{schmidt2000lola,TRLola2}. 
In the Load Balancer and Rate Limiter example our tool created a P/T net with 243 places and 663 transitions; it was successfully verified in 30ms. 
In the Firewall and Proxy example our tool produced a P/T net with 530 places and 4447 transitions. LOLA successfully verified the resulting petri-net in 0.2s.
\section{Conclusion and Related Work}\label{sec:Related}
In this work, we investigated the complexity of reasoning about stateful networks.
We developed three algorithms and several lower bounds.
In the future we hope to develop practical verification methods utilizing the results in this work. 
Below we 
survey some of the most closely related work and conclude with open questions and future work.

\subsection{Related Work}

\paragraph{Topology-Independent Verification}
The earliest use of formal verification in networking focused on proving correctness and checking security
properties for protocols~\cite{clarke1998using,ritchey2000using}.
Recent works such FlowLog~\cite{FlowLog} and VeriCon~\cite{VeriCon14} also aim to verify the correctness of a given middlebox implementation w.r.t any possible network topology and configuration, e.g., flow table entries only contain forwarding rules from trusted hosts.

\paragraph{Immutable Topology-Dependent Verification}
Recent efforts in network verification~\cite{MaiKACGK11,nsdi:CaniniVPKR12,nsdi:KVM12,CCR:KhurshidZCG12,Verificare,FMACAD:SNM13,anderson2014netkat,NetKat15}
have focused on verifying network properties by analyzing forwarding tables. Some of these tools including HSA~\cite{nsdi:KCZCMW13},
Libra~\cite{zeng2014libra} and VeriFlow~\cite{CCR:KhurshidZCG12}.
These tools perform near real-time verification of simple properties, but they cannot handle dynamic (mutable) datapaths.

\paragraph{Mutable Topology-Dependent Verification}
SymNet~\cite{stoenescu2013symnet} has suggested the need to extend these mechanisms to handle mutable datapath elements.
In their mechanism the mutable middlebox states are encoded in the packet header.
This technique is only applicable when state is not shared across a flow (\ie the middlebox
can punch holes, but do no more), and will not work for cache servers or learning switches.

The work in~\cite{panda2014verifying} is the most similar to our model.
Their work considers Python-like syntax enriched with uninterpreted functions that model complicated functionality.
However \cite{panda2014verifying} do not define formal network semantic (e.g., FIFO vs ordered channels) and do not give any formal claim on the complexity of the solution.

\paragraph{Channel Systems}
Channel systems, also called Finite State Communicating Machines, are systems of finite state
automata that communicate via asynchronous unbounded FIFO channels~\cite{bochmann1978finite,brand1983communicating}.
They are a natural model for asynchronous communication protocols%
 and, indeed, they form the semantic basis of protocol specification languages such as SDL and Estelle.
Unbounded FIFO channels can simulate unbounded Turing machine tape and therefore all verification problems are undecidable. 
Abdulla and Jonsson~\cite{abdulla1993verifying} 
introduced \emph{lossy channel systems} where messages can be lost  in transit.
In their model the reachability problem is decidable but has a non-primitive lower bound~\cite{schnoebelen2002verifying}.

In this work we use unordered (non-lossy) channels as a different relaxation for channel systems.
The unordered semantics over-approximates the lossy semantics w.r.t. safety, as any violating run w.r.t. the lossy semantics can be simulated by a run w.r.t. the unordered semantics where ``lost'' packets are starved until the violation occurs.

The unordered semantics admits verification procedures with elementary complexity, and turns out to be sufficiently precise for many network protocols in which order is not guaranteed and hence not relied on.

\subsection{Future Work}

\paragraph{Exploration of Network Semantics}
In this work we have outlined two possible network semantics, namely FIFO and Unordered packet processing order. Various other network semantics could be considered, along with their effect on expressibility and complexity results, and the precision loss in safety analysis.
One such network semantics is the \emph{Sticky Channel} semantics, where packets can be added by the sending middlebox and read by the receiving middlebox but cannot be removed. This network semantics corresponds to networks in which middleboxes can arbitrarily retransmit messages.

\paragraph{Modelling Packet Payload}
In this work we have only considered packet headers. However, some middlebox behaviour depends on the content of the packet payload (Intrusion Detection Systems are one such example). A potential approach to bridging this gap could be to model middleboxes using register automata. This would allow us to reason about letters from an infinite alphabet, thus modelling the arbitrary nature of packet payloads, while potentially retaining the decidability of reasoning about such systems.

\paragraph{Liveness}
In this work we have limited ourselves to reasoning about safety properties. However, various liveness and performance properties are just as important when approaching the creation of networks. Reasoning about liveness properties such as guarantees on packet arrival, or performance properties such as load estimates or packet traversal times would require the development of a new model for describing the network semantics and middlebox behaviour. 
In particular, unordered semantics are ill suited for most sorts of reasoning on liveness properties.

\paragraph{Further Aspects of Network Security}
In addition to safety properties that can be expressed by checker middleboxes and liveness properties there are various other network security properties that can be considered when reasoning about networks. Non-interference and information leakage are two examples of security properties which cannot be modeled by our current approach.

\paragraph{Reasoning About Progressing Networks Under the FIFO Semantics}
We've seen that in arbitrary networks reasoning is undecidable under the FIFO semantics but EXPSPACE-complete under the unordered semantics, and that for increasing networks the two semantics coincide. This leaves the question of reasoning about progressing network under the FIFO semantics open. 

\begin{acknowledgements}
    This publication is part of projects that have received funding from the
    European Research Council (ERC) under the European Union's Seventh Framework
    Program (FP7/2007--2013) / ERC grant agreement no. [321174-VSSC], and Horizon
    2020 research and innovation programme (grant agreement No [759102-SVIS]).
    The research was supported in part by Len Blavatnik and the Blavatnik Family
    foundation, the Blavatnik Interdisciplinary Cyber Research Center, Tel Aviv
    University, and the Pazy Foundation.
    This material is based upon work supported by the United States-Israel
    Binational Science Foundation (BSF) grants No. 2016260 and 2012259.
    This research was also supported in part by NSF grants 1704941 and 1420064, and
    funding provided by Intel Corporation.
\end{acknowledgements}

\bibliographystyle{spmpsci}      % mathematics and physical sciences

\bibliography{refs}

\end{document}